\definecolor{mainblue}{HTML}{1f77b4}
\definecolor{mainorange}{HTML}{ff7f0e}
\definecolor{maingreen}{HTML}{2cc92c}
\definecolor{mainred}{HTML}{DC3522}
\definecolor{mainpurple}{HTML}{9467bd}
\definecolor{mainpink}{HTML}{e377c2}
\newcommand*{\dagg}{\ensuremath{^{\dagger}}}
\DeclareMathOperator{\Tr}{tr}
\DeclareMathOperator*{\argmin}{arg\,min}
\newcommand{\pder}[2][]{\frac{\partial#1}{\partial#2}}
\newcommand{\der}[2][]{\frac{\mathrm d#1}{\mathrm d#2}}
\providecommand{\he}{\ensuremath{\hat{e}}}
\providecommand{\to}{\ensuremath{\Tilde{o}}}
\providecommand{\tx}{\ensuremath{\Tilde{x}}}
\providecommand{\calF}{\ensuremath{\mathcal{F}}}
\providecommand{\calM}{\ensuremath{\mathcal{M}}}
\providecommand{\calN}{\ensuremath{\mathcal{N}}}
\providecommand{\calO}{\ensuremath{\mathcal{O}}}
\providecommand{\calS}{\ensuremath{\mathcal{S}}}
\providecommand{\calX}{\ensuremath{\mathcal{X}}}
\providecommand{\calY}{\ensuremath{\mathcal{Y}}}
\providecommand{\sfM}{\ensuremath{\mathsf{M}}}
\providecommand{\bbC}{\ensuremath{\mathbb{C}}}
\providecommand{\bbE}{\ensuremath{\mathbb{E}}}
\providecommand{\bbI}{\ensuremath{\mathbb{I}}}
\providecommand{\bbN}{\ensuremath{\mathbb{N}}}
\providecommand{\bbR}{\ensuremath{\mathbb{R}}}
\newcommand{\grad}{\texttt{\texttt{grad}}}
\newcommand{\smoothgrad}{\texttt{smooth\_grad}}
\newcommand{\gradxinput}{\texttt{grad{$\times$}input}}
\newcommand{\sensitivity}{\texttt{sensitivity}}
\newcommand{\taylorone}{\texttt{Taylor-$1$}}
\newcommand{\taylorinf}{\texttt{Taylor-$\infty$}}
\newcommand{\SV}{\texttt{SV}}
\newcommand{\IG}{\texttt{IG}}
\newcommand{\QLRP}[1]{\texttt{QLRP$#1$}}
\newcommand{\LRP}{\texttt{LRP}}
\newcommand{\cmark}{\ding{51}}
\newcommand{\xmark}{\ding{55}}
\newtheorem{theorem}{Theorem}
\newtheorem{lemma}[theorem]{Lemma}
\newtheorem{proposition}[theorem]{Proposition}
\newcommand{\fu}{Dahlem Center for Complex Quantum Systems, Freie Universit\"{a}t Berlin, 14195 Berlin, Germany}
\newcommand{\hzb}{Helmholtz-Zentrum Berlin f{\"u}r Materialien und Energie, 14109 Berlin, Germany}
\newcommand{\tuelec}{Department of Electrical Engineering and Computer Science, Technische Universit\"at Berlin, 14109 Berlin, Germany}
\newcommand{\bifold}{BIFOLD -- Berlin Institute for the Foundations of Learning and Data, 14109 Berlin, Germany}
\newcommand{\hhi}{Fraunhofer Heinrich Hertz Institute, 10587 Berlin, Germany}
\newcommand{\fumore}{Department of Mathematics and Computer Science, Freie Universit\"{a}t Berlin, 14195 Berlin, Germany}
\begin{document}

\title{Opportunities and limitations of explaining quantum machine learning}

\author{Elies Gil-Fuster}
\affiliation{\fu}
\affiliation{\hhi}
\author{Jonas R.\ Naujoks}
\affiliation{\hhi}

\author{Grégoire Montavon}
\affiliation{\fumore}
\affiliation{\bifold}

\author{Thomas Wiegand}
\affiliation{\hhi}
\affiliation{\tuelec}
\affiliation{\bifold}

\author{Wojciech Samek}
\affiliation{\hhi}
\affiliation{\tuelec}
\affiliation{\bifold}

\author{Jens Eisert}
\affiliation{\fu}
\affiliation{\hhi}
\affiliation{\hzb}

\begin{abstract}
    A common trait of many machine learning models is that it is often difficult to understand and explain what caused the model to produce the given output.
    While the explainability of neural networks has been an active field of research in the last years, comparably little is known for quantum machine learning models.
    Despite a few recent works analyzing some specific aspects of explainability, as of now there is no clear big picture perspective as to what can be expected from quantum learning models in terms of explainability.
    In this work, we address this issue by identifying promising research avenues in this direction and lining out the expected future results.
    We additionally propose two explanation methods designed specifically for quantum machine learning models, as first of their kind to the best of our knowledge.
    Next to our pre-view of the field, we compare both existing and novel methods to explain the predictions of quantum learning models.
    By studying explainability in quantum machine learning, we can contribute to the sustainable development of the field, preventing trust issues in the future.
\end{abstract}

\maketitle

\section{Introduction}

\begin{figure}[t]
    \centering
    \includegraphics{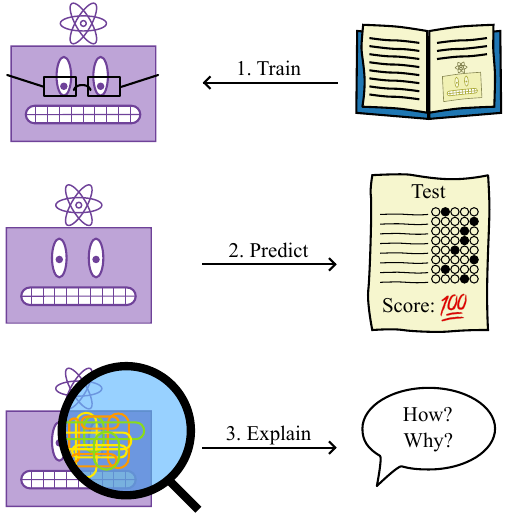}
    \caption{
        Framework of eXplainability in \emph{quantum machine learning}  (XQML) as presented in this work.
    }
    \label{fig:framework}
\end{figure}

Concomitant with the advent of enormously large databases, the striking development of available computing power, and
a rapid increase in the pace of research on data-driven approaches, systems of \emph{artificial intelligence} (AI) and \emph{machine learning} (ML) are reaching levels of performance unseen before, often outperforming human capabilities~\cite{lecun2012neural,Goodfellow2016deep}.
From chat systems passing the Turing test~\cite{nov2023putting}, to achieving super-human performance in strategy games~\cite{silver2016mastering}, and becoming ubiquitous in speech~\cite{deng2013new}, image~\cite{he2016deep}, and other pattern recognition tasks, AI algorithms have changed our world.

Given the importance of AI algorithms to our modern lives, researchers have been thinking intensely recently whether one could possibly improve such algorithms by resorting to \emph{quantum computers}.
While presently-available quantum devices are still noisy and medium-sized, the rate of development of quantum technologies prompts us to imagine a near future with large-scale quantum computers.
On some paradigmatic---albeit not yet practical---mathematically well-defined problems, quantum computers come within reach or outperform the fastest classical supercomputers available to date~\cite{arute2019quantum,hangleiter2022computational}.
Generically, a key question is still whether quantum computers can achieve better performance in learning tasks than classical algorithms, and how this would manifest in practice.
This is the core question of the aspiring field of \emph{quantum machine learning} (QML)~\cite{biamonte2017quantum, dunjko2018machine,carleo2019machine,McClean2016theory}. 
Such quantum advantages---as this state of affairs is often 
referred to---could refer to advantages in sample complexity, in 
computational complexity, or in generalization.
For highly structured problems, 
super-polynomial quantum advantages have actually already been proven~\cite{sweke2021quantum,liu2021rigorous,DensityModelling}, and it can only be a matter of time to have further evidence available to what extent quantum algorithms might possibly help dealing with real-world 
data.

One disadvantage of state-of-the-art ML models is the difficulty to understand the model's behavior as an emergent phenomenon from inspecting the individual constituent components~\cite{szegedy2014intriguing,lapuschkin2016analyzing}.
This hinders broader ML adoption in applications where human lives and 
social justice are at stake.
It is the goal of \emph{explainable AI} (XAI)~\cite{SamPIEEE21} and, 
more broadly, of reliable AI, to mitigate the potential negative impact of ML in society by unveiling the causes behind the behavior of modern learning models.
Notions of XAI provide answers to the question of what it actually means to explain a learning model.
One here asks questions of 
the type, \emph{what is the role of each of the parameters in the model?}, \emph{what pixels in this image are responsible for it to be labelled as a cat?}, 
or \emph{how can we design learning models which remain human-interpretable throughout learning?}
It is not a single method, but 
rather a portfolio of tools that capture various aspects of this question.

While this field of research is for good reason becoming well-established for classical AI algorithms~\cite{SamPIEEE21,xAIReview,ArrietaRSBTBGGM20}, the same cannot be said for quantum algorithms.
In fact, few steps have been taken so far to 
bring transparency to quantum ML~\cite{steinmuller2022explainable, lifshitz2022quantum, heese2023explaining, pira2024interpretability}.
This lack of a machinery gives rise to a grave omission, however:
All limitations of black-box classical ML (e.g.,  in terms of trustworthiness) also exist in black-box QML. Hence, it is important to find out what feature lets a quantum algorithm make a certain prediction, to ``open the 
black box'', metaphorically speaking. 

There is one major difference between the classical and quantum settings in terms of explainability.
Namely, QML research still has not realized the main goal in the field: to solve practically relevant problems faster or otherwise better than its classical counterpart.
This represents a timely opportunity for developing \emph{explainable quantum machine learning} (XQML) techniques: we have the chance to design QML models that are inherently explainable from the get-go.
Conversely, the development of high-performing models in classical ML pre-dates the development of the field of XAI, resulting in tensions between accuracy and interpretability~\cite{dziugaite2020enforcinginterpretabilitystatisticalimpacts}.
The cartoon in Fig.~\ref{fig:framework} introduces QML as a three-step process: after training and observing good tests results, we need to explain the behavior of the quantum learning agent.

In this work, we aim at presenting first steps towards establishing such a framework.
Our main contribution is then two-fold: 
\begin{itemize}
\item We offer a broad perspective of the incumbent field of explainability for QML: We do so by introducing and reviewing methods of explainability and what is specifically different for QML.
\item We also propose two concrete explanation techniques designed specifically for present-day QML models, based on \emph{parametrized quantum circuits} (PQCs).
\end{itemize}

\section{Preliminaries}\label{s:prelim}

    In this section, we attempt to bridge the fields of (classical) explainable AI and QML.
    We first lay out the formalism of parametrized quantum circuits, which are the workhorse in most of present-day QML.
    Our introduction focuses on the essential differences between classical and quantum ML that we inherit from the principles of quantum mechanics.
    This way, we do not dwell extensively on the QML practitioner's point of view.
    Next we present explainability holistically, singling out the types of questions one may ask in order to explain the behavior of a learning model.
    Here we start very broad and become progressively narrow until we reach the so-called post-hoc local explanations, which are the area of explainability we concentrate later on, in Section~\ref{s:methods}.

\subsection{Quantum function families for machine learning}\label{ss:QML}

    In this work we restrict ourselves to learning tasks with classical data.
    The predominant approach to QML for classical data relies on a view of quantum computing that is reminiscent of probabilistic classical computers.
    A common reading of quantum mechanics is as a generalized theory of probability, where qubits play the role of probabilistic bits. We keep the discussion here at a high level, so that an understanding of physics is not required to follow the exposition.
    
    A quantum state composed of many qubits can be thought of as a generalized probability distribution over bit-strings.
    Probability distributions over $n$-bit-strings can be represented as diagonal matrices
    \begin{align}
        D_p &= \sum_{i\in\{0,1\}^n} p_i \lvert i\rangle\!\langle i\rvert,
    \end{align}
    where the notation $\lvert i\rangle\!\langle j\rvert$ is the Euclidean basis of the vector space of matrices.
    Given that its diagonal is real, it follows that the matrix is Hermitian $D_p=D_p\dagg$.
    Moreover it is also \emph{positive semi-definite} (PSD) $D_p\geq0$, since all entries are non-negative. 
    Finally, due to normalization of any probability distribution $p$, it has unit trace $\Tr(D_p)=1$.
    The step from classical probability distributions to \emph{quantum states} is to just remove the requirement that the matrix be diagonal.
    This way, a quantum state $\rho$ over $n$ qubits is represented by a $2^n\times2^n$-dimensional matrix (same as $D_p$), which is Hermitian, PSD, and has unit trace.
    The off-diagonal entries of the matrices are in general complex valued, and we refer to them as \emph{coherences}.
    Classical probability distributions correspond to the special cases of quantum states with diagonal matrices.
    Then, a quantum state stores statistical information on its diagonal.
    We say it stores statistical information because the diagonal always represents a probability distribution, and we can sample this distribution by physically performing a measurement in the computational basis.
    The information in the off-diagonal corresponds to other correlations among different bit-strings that would influence further transformations of the state.

    In simple terms, a quantum circuit is a sequence of operations each of which takes a quantum state as input and returns another quantum state as output.
    Transformations of quantum states can be thought of as analogues of transformations of probability distributions.
    For example, a doubly stochastic matrix $S$ is a linear transformation that maps probability distributions onto other probability distributions $p'=Sp$.
    Similarly, the linear transformations that map quantum states onto quantum states are called \emph{quantum channels}.
    In general, these are completely positive, trace preserving maps, and, in particular, the action of any stochastic matrix can be seen as a quantum channel acting on the corresponding matrix $D_p\mapsto D_{p'}$ as a convex combination of permutations.
    In general quantum channels can correspond to exotic transformations of states, but in practice we mainly consider those that we can hope to implement on quantum hardware.
    Therefore, throughout this work, we only consider unitary transformations, so quantum channels that act as $\rho\mapsto U\rho U^\dagger$, where $U$ is a unitary matrix, fulfilling $U\dagg U=\bbI$ by definition.
    
    A common parametrization of the set of all unitary matrices is via a collection of rotation angles $\vartheta=(\vartheta_1,\ldots,\vartheta_M)\in[0,2\pi)^M$.
    Then, we usually talk about parametrized transformations $\rho\mapsto \rho'(\vartheta)=U(\vartheta)\rho U\dagg(\vartheta)$.
    Unitary transformations are also referred to as \emph{quantum gates} in the context of quantum computing, as they are complex-valued generalizations of invertible logic gates.

    Finally, a real-valued function that takes  bit-strings $b\colon\{0,1\}^n\to\bbR$ as input provides a natural way to extract information from a probability distribution via its expectation value $\langle b\rangle_p = \sum_{i\in\{0,1\}^n} b(i)p_i\in\bbR$.
    We call 
    \begin{align}
        B=\sum_{i\in\{0,1\}^n} b(i)\lvert i\rangle\!\langle i\rvert,
    \end{align}
    which is again a real-valued, diagonal, Hermitian matrix (just not necessarily unit trace nor PSD).
    Using this representation, the expectation value adopts the form of the Hilbert-Schmidt norm of the matrices $\langle b\rangle_p =\Tr\{D_pB\}$.
    The way via which we 
    retrieve information 
    from a quantum state $\rho$ is fully analogous: Given an observable $\calM$ (a complex-valued Hermitian matrix), we can compute its expectation value relative to the state as $\langle\calM\rangle_\rho=\Tr\{\rho\calM\}$, which is always real-valued due to both $\rho$ and $\calM$ being Hermitian.

    A typical quantum computation then takes the form of a three-step process:
    \begin{enumerate}
        \item \label{list:initial_state} Start from a fixed (easy-to-prepare) initial state $\rho_0$.
        \item \label{list:parametrized_gates} Apply a \emph{quantum circuit} specified as a sequence of (parametrized) unitary gates $U_1(\vartheta_1), \ldots, U_L(\vartheta_L)$, each of which acting as $\rho_j = U_j(\vartheta_j) \rho_{j-1} U_j\dagg(\vartheta_j)$, and resulting in the final state
        $\rho_L(\vartheta_1,\ldots,\vartheta_L)$.
        \item \label{list:observable} Estimate the expectation value of a fixed (easy-to-measure) observable\footnote{
            As a technical aside, talking about an evolved state with a fixed observable is called the \emph{Schr\"odinger picture}.
            We could have analogously talked about a fixed state and an evolved observable, placing ourselves in the \emph{Heisenberg picture}.
            Considering the space in-between, where both the state and the observable are evolved via some gates each is sometimes called the \emph{interaction picture}, and we exploit it later.
        } $\calM_0$, obtaining the real value $\Tr\{\rho_L(\vartheta_1,\ldots,\vartheta_L)\calM_0\}$.
    \end{enumerate}
    The real-valued result of this quantum computation is
    \begin{align}
        f(\rho_0,(U_j(\vartheta_j))_j,\calM_0) &=\Tr\{\rho_L(\vartheta_1,\ldots,\vartheta_L)\calM_0\}.
    \end{align}
    We call a \emph{parametrized quantum circuit} (PQC) a fixed choice of initial state $\rho_0$, parametrized gates $(U_j)_j$ (not their parameters), and final observable $\calM_0$.
    Each PQC gives rise to a real function 
    \begin{align}
        (\vartheta_j)_j &\mapsto g_{\rho_0, (U_j)_j, \calM_0}((\vartheta_j)_j) = f(\rho_0,(U_j(\vartheta_j))_j,\calM_0)
    \end{align}
    of the parameters.
    We abuse notation slightly and still use $f$ instead of $g_{\rho_0, (U_j)_j, \calM_0}$ when the specific PQC is irrelevant or clear by context.
    We refer to functions arising from PQCs in this way as \emph{quantum functions}.
    Appendix~\ref{a:QMLvsPML} contains a brief description of an analogous family of classical functions defined from parametrized probability distributions.

    To perform machine learning it is not enough to have a single function, though.
    If we call $\calX$ the data domain, and $\calY$ the label co-domain, we must furbish a family of functions $\calF\subseteq\calY^\calX$, to be used as the hypothesis class.
    One straightforward approach would be to, given a PQC, consider a set of possible initial states $\{\rho(x)\,|\, x\in\calX\}$ instead of a single one, or equivalently a data-dependent observable $\calM(x)$.
    This has been a popular approach since the start of PQC-based QML, and it knocks on the door of the fundamental question \enquote{what is the best way to upload classical data into a quantum computation?}.
    In practice, having a data-dependent state amounts to keeping the structure introduced above, and just fixing the rotation angles of the first gates to be equal to the input data.
    One commonly distinguishes between an \emph{encoding} gate $E(x)$, and the \emph{trainable} (or variational) gates $U_j(\vartheta_j)_j$.
    With this notation the data-dependent state is $\rho(x) = E(x)\rho_0 E\dagg(x)$.
    We refer to $V_L(\vartheta)= U_L(\vartheta_L)\ldots U_1(\vartheta_1)$ as the quantum gate containing all the trainable gates, and we call $\Theta$ the set of all possible parameter values $\vartheta\in\Theta$.
    In this prescription, a fixed PQC (where the encoding gate is also fixed as of step~\ref{list:parametrized_gates} above) gives rise to a function 
    \begin{align}
        f(x, \vartheta) &= \Tr\{V(\vartheta)\rho(x)V\dagg(\vartheta)\calM_0\}
    \end{align}
    of both the input $x$ and the variational parameters $\vartheta$.
    The division between encoding and trainable gates allows us to consider the set of all functions 
    \begin{align}
        \calF &\coloneqq \{ x\mapsto f(x, \vartheta) \, |\, \vartheta\in\Theta\}
    \end{align}
    that arise from different parameter values for a given PQC.
    In general terms, these are the quantum function families 
    that we consider for QML.
    Each choice of PQC amounts to a different hypothesis 
    class in the exact same way as different neural network (NN) architectures result in different ML models.
    Specifically, the milestone works~\cite{GilVidal2020,Schuld2021effect} showed that PQCs give rise to generalized trigonometric polynomials.
    That is, the parametrized function families realized by such circuits are always of the form
    \begin{align}
        f_\vartheta(x) &= \sum_{\omega\in\Omega}
        \left( 
        a_\omega(\vartheta) \cos\langle\omega,x\rangle + b_\omega(\vartheta)\sin\langle\omega,x\rangle
        \right)
        .
    \end{align}
    Here, the values of the coefficients $(a_\omega(\vartheta),b_\omega(\vartheta))_\omega$ depend only on the trainable parameters $\vartheta$ of the PQC, in a computationally intricate way.
    Further, the frequency spectrum $\Omega$ depends only on which are the encoding gates.
    Here $\langle\,\cdot\,,\cdot\,\rangle$ denotes the usual Euclidean inner product of vectors.

    The study of PQC-based QML models, which we consider in this work,
    has been pursued within the quest for practical quantum advantages in
    meaningful learning tasks, as potential quantum analogues to neural networks.
    Although both approaches rely on parametrized building blocks and share a layered structure, the principles of quantum mechanics prevent much deeper similarity:
    \begin{itemize}
        \item
            The \emph{no-cloning-theorem} -- that shows that unknown quantum states cannot be copied so that 
            both copies are statistically indistinguishable 
            from the input -- prevents parametrized gates from fanning out the same way as artificial neurons do.
            Namely, the signal a classical neuron transmits can be copied freely and used as input for multiple neurons in the next layer.
            For PQCs, a leading framework to replicate this feature is data re-uploading~\cite{perez2020data}\footnote{
                In data re-uploading, the input is uploaded several times to introduce a redundancy which is present by default in NNs.
            }.
        \item
            In general, the matrix representation of quantum states requires an exponential dimension, which means we cannot efficiently 
            store the intermediate state of the computer 
            at each step.
        \item
            Storing intermediate information (even if it is not the entire state) necessarily destroys the computation, which prevents information re-use and the existence of fast training algorithms like back-propagation~\cite{abbas2024quantum}.
        \item
            In practice, we estimate 
            expectation values via samples from the quantum computer.
            This means that our read-out precision is only polynomial in time, not exponential, which results in distinguishability problems and can mask the signal inside the statistical noise.
            This problem is discussed also in the context of concentration phenomena and vanishing gradients~\cite{mcclean2018barren,larocca2024review}.
        \item 
            The application of unitary gates is a linear transformation (non-linear transformations are forbidden in quantum mechanics).
            The only possible source of non-linearity comes from the parametrization of the gates $\alpha\mapsto U(\alpha)$, and not from the nested composition of information-processing layers.
            This allows us to understand the expressivity of 
            PQC-based QML models as trigonometric series~\cite{Schuld2021effect}.
        \item 
            The connectivity graph of the underlying quantum computing platform can severely restrict the unitary gates we can implement in practice.
            In classical NNs sparsely-connected layers are used among other reasons because doing so has been empirically found to boost performance.
            Conversely, for PQC designs to remain realistic we should actually only consider unitary gates that act on at most $2$ qubits at once, and ideally they should be adjacent.
            In spite of the field of quantum compilation being by now mature, the hardware limitations in the short- and mid-term must be kept in mind.
    \end{itemize}
    There exist ways to circumvent some of these \enquote{problems} by considering only restricted families of quantum circuits.
    Yet, these characteristics are fundamental to quantum mechanics: by aiming to remove them even partially, we may inadvertently fall in the regime of \emph{efficient classical simulability}, in which there can be no quantum advantage~\cite{Bravyi2016improved, Tang2019,schreiber2022classical, landman2023classically,Sweke2023Potential,shin2024dequantizing, rudolph2023classical,cerezo2023does,dias2024classical}.
    Indeed, we must take into account the classical and quantum time and computational complexity of both the QML models themselves and the explainability methods we consider later.

\subsection{Explainability} \label{ss:explainability}
    
    In addition to building models that perform accurate predictions, it is often desirable to ensure they operate with some level of transparency, in particular, ensuring that their decisions can be explained in a way that is understandable for a human. This aspect has been treated extensively within the field of \emph{explainable AI}  (XAI)~\cite{GunningA19,ArrietaRSBTBGGM20,SamPIEEE21}. A central example motivating the usefulness of explainability is detecting when a learning model is performing well \enquote{for the wrong reasons}, the so-called clever Hans effect~\cite{lapuschkin_unmasking_2019, unsupervised-ch} or \enquote{shortcut learning}~\cite{geirhos_shortcut_2020}. A flawed prediction strategy, e.g., 
    one relying on spuriously correlated artifacts, may stop working on new data, exposing the environment in which the model operates to great risks (see, e.g., Refs.\  \cite{DegraveJL21,unsupervised-ch}). Explainable AI techniques can precisely pinpoint these flawed decision strategies, and also provides a starting point to produce more robust models by aligning the model with human expert knowledge \cite{RossHD17,AndersWNSML22,ChormaiHMM24}. The same Explainable AI methods can also enable a reverse alignment where a ML model trained on vast amounts of data can serve as an exploration tool for humans, enabling for example, the generation of new insights or scientific hypotheses \cite{Krenn2022,SchramowskiSTBH20}.

    The field of XAI encompasses a multitude of diverse techniques addressing different levels of interpretability as well as models and data to which they apply. Explanation techniques can be organized along several conceptual axes:
    
    \paragraph{Global versus local explanations.}
    A local explanation focuses 
    on unraveling a model's prediction proper to a single data point. Such explanations 
    then result in properties 
    defined for the input pattern, like the relevance of each of its dimensions. A global explanation on the other hand is one that aims at elucidating a given model's overarching reasoning strategy.
    This includes recognizing prototypical behavior of the model without inherently providing an interface to understand a given single data point.
    A representative global explanation method is \emph{spectral relevance analysis}~\cite{lapuschkin_unmasking_2019,Dreyer2024CVPR}, which looks at the behavior of a model on a dataset by aggregating the output of a large collection of local explanations, e.g.,  into clusters that summarize the complex multifaceted decision strategy of the model.
    
    \paragraph{Feature-wise versus concept-wise explanations.} Another way of distinguishing between explanation techniques is with respect to what are the actual units of explanation. While classical explanations typically explain decision of the model in terms of their input features (e.g.,  pixels \cite{bach2015pixel}) or combinations of them (see, e.g., Ref.\  \cite{SchnakeHigherOrderExplanationsGraph2022}), a more recent line of research has sought to support the explanation by more abstract features, so-called concepts, which can be, for example, activations in intermediate layers, or linear transformations of those activations \cite{KimWGCWVS18,achtibat_attribution_2023,ChormaiHMM24,VielhabenLMS24}. Concept-based explanations, because of their intrinsic lower dimensionality, are also more amenable to a global (dataset-wide) analysis of the model.
            
\paragraph{Ante-hoc versus post-hoc explanations.} 
An additional line of disambiguation of explainability techniques concerns the \emph{stage} at which explanations are generated.
The field of ante-hoc explainability deals with the design and study of learning models that are intrinsically interpretable, specifically, they embed units of activation that are directly inspectable and that carry 
intrinsic meaning.
    This encompasses generalized additive models~\cite{lou2012intelligible}, decision trees, or neural networks with specific topologies (e.g., global average pooling layers \cite{ZhouKLOT16} or attention layers~\cite{vaswani2017attention}).
    On a related note, data-driven physical theories, e.g., implemented via neural ordinary differential equations \cite{ChenRBD18}, constitute another instance of ante-hoc explanations: For example, before we fit the viscosity parameter to the measured data, we already know that the role of that parameter will be the viscosity coefficient, and thus the viscosity model is inherently interpretable. Enforcing such a structure in a learning model limits its capabilities compared to black-box models. This concept is known as the \emph{performance-explainability trade-off} \cite{ArrietaRSBTBGGM20}. 
    Nowadays, the most powerful ML models, especially deep neural networks, fall under the category of opaque or black-box models: they are powerful but their inner workings are not easily interpretable. In this case, post-hoc explainability is used. A post-hoc explanation is extracted from an already trained model.
    
    \paragraph{Attribution versus synthesis.}
    We refer to an explanation as \emph{attribution} when the focus is on which features are measurably responsible for the predicted output. Many attribution methods (see, e.g., Refs.\  \cite{bach2015pixel,sundararajan2017axiomatic,strumbelj2010efficient}) are designed to assign a score to a particular feature that can be interpreted as the share by which that feature contributes to the output of the model.
    In contrast, \emph{synthesis} methods rely on generating new data points which carry insight, either on their own or in relation to the original data point.
    For example, given a correctly classified input, one can generate a counterfactual~\cite{dombrowski2023diffeomorphic}, specifically, the smallest natural perturbation to that input that causes a change in the predicted class. One could also, without relying on any particular data point, synthesize a so-called class prototype that     maximizes the score for a given class \cite{nguyen2016synthesizing}. This synthesis approach to explanation is also the basis of the \emph{DeepDream} technique \cite{google2015deepdream} (which was adapted to quantum circuits in Ref.\ \cite{lifshitz2022quantum}).

\subsection{Post-hoc local attribution methods}\label{ss:posthoc}

    We focus specifically on \emph{post-hoc local attribution} methods, in which 
    explanations consist of assigning a relevance score to each input dimension of a given pattern.
    The use cases we present later all belong to this class of explanations.
    This 
    presupposes that the input features or relevance attributions thereof are 
    themselves interpretable 
    to the end user.
    In a classification setting, an ML model typically outputs a score for each of the $K$ classes $h(x)=(h_1(x), \ldots, h_K(x))$.
    To explain a single input, 
    though, we call $f$ the score for the class that we would like to investigate. Often, one chooses the class that corresponds to the true label for each input.
    So, explicitly, the human-interpretable property we explain in this case is \enquote{how much does each input dimension contribute to the correct class?}
    For local attribution methods, the final explanation is sometimes called a 
    \emph{relevance heat-map} (as for image-recognition tasks that is what it looks like). Many attribution techniques have been proposed, which we organize in the following into three types of approaches.

\paragraph{Occlusion-based approaches.}
A first approach to attribution consists of scoring input features according to the effect of adding or removing them on the output of the model. This approach is instantiated by the method of \emph{Shapley values} (\SV{})~\cite{shapley1953value}, which originates from coalition game theory. In its original formulation, the method addresses the question of assigning the individual contribution of a given set of players to a global payoff function. The method comes with strong theoretical foundation (it has a unique axiomatic characterization \cite{shapley1953value}) and has been extended to attribute a machine learning model (see, e.g.,
Refs.\ \cite{strumbelj2010efficient,lundbergUnifiedApproachInterpreting2017}), where the players forming the coalition are the input features, and the payoff to be distributed correspond to the output of the model to be explained. 

    In this work, we consider the \emph{baseline Shapley values} approach \cite{sundararajanManyShapleyValues2020}, where such features are set to a baseline $\tx$.
    Starting from the original datum $x$, and given the baseline $\tx$, one systematically replaces each possible combination of input components of $x$ by those of $\tx$.
    Such an explanation is formalized as
    \begin{align}
        E_i(x) &= \sum_{\calS\subseteq([d]\setminus{\{i\}})} \frac{\lvert\calS\rvert (d-1-\lvert\calS\rvert)!}{d!} \left( f(x_{\calS\cup\{i\}}) - f(x_{\calS})\right),
        \label{eq:shapley}
    \end{align}
    where $[d]=\{1,\ldots,d\}$, and for any set $A\subseteq[d]$, we use $x_A$ to denote the modification of $x$ whose component indices in $A$ are left untouched.
    So the $i$th component of $x_A$ is $x_i$ if $i\in A$ and $\tx_i$ otherwise.
    Computing Shapley values exactly requires computational resources that scale combinatorially, so in practice they are most often replaced by a sampling approximation (like \emph{Shapley value sampling}~\cite{strumbelj2010efficient}). There exist variations for 
    specific model classes where the computation of Shapley values 
    scales polynomially \cite{CastroPolynomialCalculationShapley2009, AnconaExplainingDeepNeural2019a}.

\paragraph{Gradient-based approaches.}\label{sss:gradient-based}
    An alternative to evaluating the effect of removing each features individually consists of relying on the gradient of the model, which can be interpreted as a local effect of each individual feature.
    Gradients can usually be efficiently computed, for instance via the backpropagation algorithm for neural networks~\cite{Rumelhart1986}.
    A particular way of turning a gradient into an attribution is to use it to perform a first-order Taylor expansion of the function $f$ around some baseline $\tx$, evaluate it at the point of interest $x$, and identifying the first-order terms bound to each input dimension, i.e.,
    \begin{align}
        E_i(x) &=\partial_if(\tx)(x_i-\tx_i).
        \label{eq:taylor}
    \end{align}
    For the method to achieve a meaningful attribution of the model output, the first-order terms must dominate over the higher-order terms locally, which essentially requires the function to be locally linear. This often requires to adapt the root point to the actual data point, i.e.\ define a function $\tx(x)$. We refer to this explanation method as \taylorone{}. 
    
    Another gradient-based approach to compute explanations is \emph{integrated gradient} (\IG{}) \cite{sundararajan2017axiomatic}, where rather than evaluating the gradient only once at $\tx$, we integrate it over a path connecting $\tx$ and $x$. If choosing the path to be a straight line, one gets the integral $\int_{[0,1]} \partial_i f(tx+(1-t)\tx) \mathrm dt$, which we can substitute to the gradient in the equation above. In practice, the integral is approximated by a discretization. The main advantage of the Integrated Gradients approach over Taylor expansion is its ability to produce meaningful explanation even in the presence of strong non-linearities. This broader applicability of Integrated Gradients comes however at the cost of a higher computational cost.

    Another simple yet commonly used gradient-based explanation technique is \gradxinput{}: $E_i(x) = \partial_if(x) x_i$.
    For specific functions, like homogeneous linear functions, \gradxinput{} can be shown to coincide with the other gradient-based approaches.

\paragraph{Propagation-based approaches.}\label{sss:divide}
    This last family of methods differ from the ones above by viewing the prediction not as a function but as the output of a computational graph. It then seeks to attribute the prediction to the input features by reverse propagating in this graph.
    Such an approach leads to explanations that are specific to the computational graph.
    This is in contrast to explanations that only require black-box access to the model, which we refer to as \emph{moodel-agnostic} explanations.
    The approach is exemplified by the \emph{layer-wise relevance propagation} (LRP)~\cite{bach2015pixel} approach. Originally proposed in the context of convolutional neural networks and bag-of-words models, LRP was later 
    extended to a broad range of models (e.g.,  unsupervised models \cite{MontavonKSM20} or transformers \cite{AliSEMMW22}). In its basic form, LRP assumes that the function $f(x)$ of interest can be expressed as a sequence of mappings, e.g.\ in a deep neural network the mapping onto its successive layers:
    \begin{align}
        \big\{\,a_{l} \mapsto a_{l+1}\,\big\}_{l=1}^{L-1},
    \end{align}
    where $a_1$ and $a_L$ are the input and output of the model to explain.
    Explanation proceeds by iteratively attributing the model output to the diverse layers of the network, starting with the top layers and moving backward towards the input. This layer-wise attribution is implemented by purposely defined propagation rules. Denoting by $R_l$ and $R_{l+1}$ the attribution on the representation $a_l$ and $a_{l+1}$, respectively, the LRP propagation rule can be generically represented by the collection of maps:
    \begin{align}
        \big\{\,(a_l, R_{l+1}) \mapsto R_l\,\big\}_{l=1}^{L-1}
        \label{eq:lrp}
    \end{align}
    at each layer, defining the redistribution strategy. The process starts with setting $R_L = f(x)$, applying the maps above at each layer in reverse order, and retrieving the vector $R_1$, which forms the desired explanation $E(x)$.
    We illustrate this process with pictorial diagrams in Appendix~\ref{a:LRP}. The main advantage of the LRP approach over occlusion and gradient-based methods discussed above is its higher flexibility, allowing to finely control the forming of the explanation at each layer, addressing its specific nonlinearities, so that a robust explanation can be extracted in only a single forward-backward pass.
    For a more detailed introduction of the LRP technique, including explicit propagation rules, we refer the reader to Ref.\ \cite{montavonLayerWiseRelevancePropagation2019}.

    \begin{table*}[t]
        \centering
        \caption{
            Evaluation of surveyed local attribution methods in 
            terms of formal (data-set unspecific) criteria of explanation 
            correctness and other more general usefulness criteria.
        }
        \label{tab:black-box}
        \begin{tabular}{lcccc} \hline\hline
          \multirow{4}{*}{Local Attribution Method} &  \multicolumn{3}{c}{Explanation correctness} & \\\cmidrule(lr){2-4} 
           & \multirow{2}{*}{Conservative?} & \multirow{2}{*}{Continuous?} & Implementation & Computational \\
           &   &  & invariant? & efficiency? \\\hline\hline
          \SV{} & \cmark & \cmark & \cmark & \xmark \\ 
          \taylorone{} & \cmark & \xmark & \cmark & \cmark \\
          \IG{} & \cmark & \cmark & \cmark & \xmark \\
          \gradxinput{} & \xmark & \xmark & \cmark & \cmark \\
          \LRP{} & \cmark & \cmark & \xmark & \cmark \\\hline\hline
        \end{tabular}
    \end{table*}

\subsection{Evaluating explanation techniques}

Having introduced various approaches to explainability, as well as a variety of technique to achieve post-hoc local attributions, one question remains, namely which explanation technique is most appropriate in a given context. The first and most important criterion to be considered is \emph{explanation correctness}, that is, if the explanation faithfully reflects the underlying strategy the model has used to achieve its prediction. Unlike the prediction function itself, there are typically no ground-truth explanations to be evaluated against. The strategy used by the model to be explained is by definition unknown. Explanation correctness can thus only be assessed via \emph{indirect} means.

A first property an explanation should satisfy is \emph{conservation}: Conservation requires that, if one is to interpret the score assigned to each input features to be the share they contribute to the function output, one should have that $\sum_i E_i(x)\approx f(x)$. Another property an explanation should satisfy is \emph{continuity}, requiring that similar inputs fed into a continuous prediction function should receive similar explanations. A further property to be satisfied is \emph{implementation invariance}, requiring that for two different models implementing the same function $f$, the explanation for a given data point should be the same or similar. It is noteworthy that the fulfillment of these properties can be assessed to some extent by a direct inspection of the explanation technique without requiring experimentation with a real model trained on real data. For example, one can demonstrate that \SV{} gives rise to continuous explanations, being a weighted sum of continuous functions. Conversely, gradient-based method may produce discontinuous explanation if we consider that the gradient of a continuous prediction function may be discontinuous. An analysis of presented methods in light of these three criteria is given in Table \ref{tab:black-box} (left). These formal properties of an explanation may however be insufficient to fully characterize its correctness. For example, a simple explanation technique that would uniformly redistribute the function output to the input features would fill all correctness criteria above, yet fail to highlight features that contribute the most to the prediction.

A common way of assessing the ability of an explanation consists to retrieve the most relevant features from all input features is the \enquote{pixel-flipping} evaluation procedure \cite{bach2015pixel,SamekBMLM17}. Pixel-flipping removes features from most to least relevant and keep track of how quickly the output of the model decreases. The faster the decrease the better the explanation. An alternate way of testing a explanation method's ability to extract correct features is to devise synthetic datasets and models for which the ground-truth explanation is known and test the degree of consistency the explanation technique offer with respect to these ground-truth explanations~\cite{Arras2022clevrxai}.
We propose specific such metrics in Appendix~\ref{a:evaluation}, which we use in our numerical experiments in Section~\ref{s:results}.
Acknowledging the multifaceted nature of the task of evaluating explanations, works like Ref.~\cite{hedstromQuantusExplainableAI2023} have contributed series of explanation correctness tests together with their software implementation, which can be used for benchmarking purposes.

Besides explanation correctness, further aspects need to be considered for an explanation to fulfill its practical purpose. Listing desirable properties for human-interpretable explanations is an enterprise that pre-dates modern machine learning: remarkable work was already laid out in Ref.~\cite{swartout1993explanation}. Among the list of possible desiderata, one can mention \emph{computational efficiency} which requires that the explanations are computable within ideally one single evaluation of the machine learning model (the computational efficiency of the presented method is listed in Table~\ref{tab:black-box}. Other desiderata such as human interpretability or sufficiency of the explanation pertain more to the class of explanation techniques, for example, whether the explanation is returned in terms of input features or in terms of more abstract human-readable features, or whether the scoring of those feature (feature attribution) provides sufficient information for delivering useful insights and make the explanation actionable.

\subsection{Explaining PQC-based QML models}

    \begin{table*}[t]
        \centering
        \caption{Summary of main differences between NNs and QML models, and their effects on explainability.}
        \label{tab:XAIvsXQML}
        \begin{tabular}{p{.18\linewidth}|p{.2\linewidth}|p{.2\linewidth}|p{.3\linewidth}}\hline\hline
             & Neural Networks & PQC-based QML models & Effect on explainability \\\hline\hline
            Intermediate information & Stored by default & No cloning theorem and exponential classical memory requirements & Model-specific explanations using intermediate information are more difficult to find for QML models. \\\hline
            Available precision & Exponential & Polynomial & Limited available quantum hypothesis families. \\\hline
            Linearity & Nested composition of non-linear activation functions & Non-linear parametrization of high-dimensional linear operations & The linearity of quantum mechanics can result in straightforward explainability. \\\hline
            Computational graph & Usual structural depiction & Not usual structural depiction & In general not possible to directly port model-specific explanations from NNs to QML. For instance some divide-and-explain ideas like common forms of LRP cannot be immediately applied. \\\hline
            Continuity of gradients & Piece-wise continuous & Analytic everywhere & Continuity of explanations is less problematic for QML than for NNs. \\\hline\hline
        \end{tabular}
    \end{table*}

    We close this section by briefly discussing whether and which explanation techniques can be carried over directly into QML.
    We start by revisiting the list of fundamental limitations inherited from quantum mechanics from Section~\ref{ss:QML} and, for illustrative purposes, how they relate to the models and properties introduced in Sections~\ref{ss:explainability} and~\ref{ss:posthoc}.
    We summarize the main points of this section in Table~\ref{tab:XAIvsXQML}.

    The no-cloning-theorem mostly limits the QML models themselves, and not so much what can be explained about them. Just every time the model is evaluated, a the quantum circuit needs to be ran from scratch, without storing any intermediate information.
    The exponential dimension of the Hilbert space of quantum states and observables indicates that naive uses of intermediate information may be possible, but at the cost of exponential memory requirements and runtime, which present a problem for scalability.
    The problem of storing intermediate information exactly on classical memory can be alleviated by considering quantum-time-efficient approximate representations, for which there are several proposals~\cite{Shadows,RevModPhys.93.045003,AreaReview,PhysRevA.57.127,Sergi,rudolph2023classical}.
    The finite-shot noise is another relevant practical problem: we are only able to evaluate quantum functions (expectation values of PQCs) to precision polynomial in the total runtime, not exponential like in classical computers.
    The precision we are able to achieve in estimating the model in turn limits the precision of gradients (this is an important current obstacle for training QML models~\cite{mcclean2018barren,larocca2024review,chinzei2024tradeoff,gilfuster2024relation}).
    Simply put, we are unable to distinguish between exponentially small values and zero.
    Therefore, all XQML methods that require evaluating gradients might be rendered inapplicable for the common cases where typical gradient magnitudes are exponentially small.
    Next, the fact that the only source of non-linearity in PQCs is the parametrization of the unitary gates tells us that we shall often be able to exploit the linear nature of quantum computations to our advantage, since explaining linear models is relatively straightforward.
    All in all, the main factors that play a role in the explainability of QML models are the storage of intermediate information, the finite precision in estimating hypotheses and their gradients, and the fundamentally linear nature of quantum computation.

    Further differences arise when comparing 
    specifically QML models to NNs from the point of view of explainability.
    One key fact is that, even though the graphical depictions of PQCs often resemble (at least structurally) those of NNs, in truth these are objects of essentially different nature.
    The usual sketches of NNs as subsequent layers of neurons and their connections are actually the computational graph of the model itself.
    Conversely, the usual sketches of PQCs as subsequent layers of unitary gates applied to several qubits capture the step-by-step physical implementation of these circuits in a laboratory, but not the computational graph of a classical representation of the same process.
    This difference in computational graphs may already be enough to point out that model-specific techniques that have been derived for NNs shall not be in general directly applicable to PQCs.
    
    An operational difference between QML and NNs results from the different sources of non-linearity.
    Networks that use \emph{rectified linear units} (ReLU) give rise to functions that are only \emph{piece-wise} analytic (their gradients are only piece-wise continuous), which can give rise to so-called \enquote{shattered gradients} for points close to the non-analytic ones.
    Conversely, QML models are sums of trigonometric functions with bounded frequency, which gives rise to functions that are analytic everywhere.
    The intuition behind the importance of explanation continuity is that small perturbations of the input should lead to small perturbations of the explanation.
    The fact that NNs give rise to non-analytic functions results in explanations not being continuous by default.
    Conversely, all XQML methods considered so far are always continuous because the labeling functions themselves are analytic.

    In extending explanation methods to quantum models, we can also discuss the expected efficiency considerations.
    In principle, black-box explanations that only require evaluating the hypotheses should not incur efficiency issues, the total quantum complexity should be the total number of calls to the quantum model times the complexity of evaluating the model.
    Reasonably speaking, we shall only attempt to produce explanations for quantum circuits which are efficient to evaluate in the first place, so the complexity of producing black-box explanations for quantum models should also be computationally efficient.
    It should also not be a problem if the black-box model requires evaluating gradients.
    Issues might arise from the required precision, as discussed above, but that is a different notion of efficiency.
    For model-specific explanations we further use information about the structure of the model.
    If the used information is only the succinct description of the circuit, that should also not be problematic.
    Nevertheless, if the information required to evaluate the model-specific explanation involves storing intermediate information, that might incur exponential classical space overheads.
    Below we propose two novel model-specific explanations that illustrate precisely this distinction: we provide first a black-box explanation that is quantum efficient, and second an explanation that requires storing intermediate states in classical memory, which uses intractable amounts of space.
    As discussed, there could exist model-specific XQML methods which are efficient with a smart strategy using the same quantum circuit but with mid-circuit measurements.
    Such strategies might result in explanations that display a quantum-classical separation, if one can prove that the same explanation could not be realized without access to the quantum circuit.

\subsection{Related work on XQML}

    So far, a few incursions have already been performed toward XQML.
    For instance Ref.~\cite{steinmuller2022explainable} performed a numerical study for simple PQCs using explanations like \IG{} and \SV{} and recorded the effects of noise in their explanation performance.
    Also, Ref.~\cite{heese2023explaining} extended the definition of Shapley values to be well-defined for functions defined as expectation values of random variables and then proposed several applications of the resulting \SV{}-like explanations.
    The authors of this work did not focus only on QML applications, like local attribution methods, but also they used the mindset of coalition games to explain other properties of PQCs, like the effect of each individual trainable gate in the circuit.
    Finally, Refs.~\cite{lifshitz2022quantum,pira2024interpretability} proposed approaches to interpretability (related to explainability).
    In Ref.~\cite{lifshitz2022quantum} the technique of \emph{deep dreaming} is used to improve the design of quantum circuits and to extract human-interpretable properties.
    Improving on a popular method consisting on building a local interpretable surrogate of the model called LIME~\cite{ribeiro2016why}, Ref.~\cite{pira2024interpretability} develops a method in which PQCs not only produce a prediction, but also a confidence value for it.
        
\section{Novel XQML approaches}\label{s:methods}

    Here we propose two novel explanation techniques designed specifically for PQC-based QML models.
    To the best of our knowledge, these are the first model-specific explanations for quantum learning models.
    The first of the two, which we call \taylorinf{} is a black-box explanation that exploits the Fourier picture of PQCs~\cite{Schuld2021effect}.
    The second, which we call \emph{quantum layerwise relevance propagation} (\QLRP{ })
    is our proposal to adopt the \emph{divide and explain} philosophy behind the LRP explanation technique~\cite{montavon2018methods}.
    In this section we introduce the main ideas behind these techniques, a full derivation together with implementation-oriented explanation can be found in Appendices~\ref{a:taylor}, \ref{a:twiNN}, and \ref{a:QLRP}.

\subsection{\taylorinf{} explanation}\label{ss:taylorinf}

    To arrive at \taylorinf{} we recall from Refs.~\cite{GilVidal2020,Schuld2021effect} that PQCs where each input component is encoded only once give rise to degree-$1$ trigonometric polynomials.
    That is, the functions realized by such circuits are always of the form
    \begin{align}
        f_\vartheta(x) &= \sum_{\omega\in\Omega} \left( a_\omega(\vartheta) \cos\langle\omega,x\rangle + b_\omega(\vartheta)\sin\langle\omega,x\rangle
        \right) 
        ,
    \end{align}
    where the values of the coefficients $(a_\omega(\vartheta),b_\omega(\vartheta))_\omega$ depend only on the trainable parameters of the PQC.
    Further the frequency spectrum $\Omega\subseteq\{0,\pm1\}^d$ is such that $\Omega\cup -\Omega=\{0,\pm1\}^d$, $\Omega\cap -\Omega=\{0\}^d$.
    Here $\langle\,\cdot\,,\cdot\,\rangle$ denotes the usual Euclidean inner product of vectors.

    We propose the explanation \taylorinf{} as a generalization of \taylorone{} in which we take not only the first-order terms $\partial_i f(\tx)(x_i-\tx_i)$, but rather all single-component contributions $\partial_i^kf(\tx) (x_i-\tx_i)^k$.
    Indeed, the \taylorinf{} explanation is based on the expression
    \begin{align}
        f(x) &= f(\tx) + \sum_{i=1}^d \left(\sum_{k=1}^\infty \frac{\partial_i^kf(\tx)(x_i-\tx_i)^k}{k!}\right)  + \varepsilon.
    \end{align}
    Now $\varepsilon$ contains only higher-order crossed terms, including partial derivatives with respect to different components.
    We denote the infinite series by $T_i(x,\tx)$, and exploiting usual trigonometric identities, we reach the defining formula 
    \begin{align}
        T_i(x,\tx) &= \sin(x_i-\tx_i) \partial_if(\tx) \\
        &\hphantom{=}+ (1-\cos(x_i-\tx_i)) \partial_i^2f(\tx)\nonumber
    \end{align}
    for \taylorinf{}.
    Together with an appropriate root point $\tx$ 
    for which both $f(\tx)$ and $\varepsilon$ become negligible, we reach the approximation
    \begin{align}
        f(x) &\approx \sum_{i=1}^d T_i(x,\tx).
    \end{align}
    We consequently define the explanation \taylorinf{} as $E_i(x)=T_i(x,\tx)$.
    Again, the question of finding suitable root points remains open.

    Note that \taylorinf{} is a strict improvement over \taylorone{} in terms of conservation for degree-$1$ trigonometric polynomials.
    A generalization of \taylorinf{} to higher-degree polynomials is left for future research.
    Note also that, \taylorinf{} ultimately being a black-box explanation, it could also be used for other types of functions, but in general the good approximation condition (conservation) cannot be expected to hold in general beyond degree-$1$ trigonometric polynomials.

\subsection{Quantum layerwise relevance propagation}\label{ss:qlrp}

    We start by regrouping the PQC scheme introduced above into a two-step process.
    \begin{enumerate}
        \item Prepare a data-dependent state: $x\mapsto\rho(x)$.
        \item Measure a task-dependent observable: $\calM(\vartheta) \mapsto \langle\calM(\vartheta)\rangle_{\rho(x)}$.
    \end{enumerate}
    In this \emph{interaction picture}, both the initial state and the measurement observable have been evolved, the former only under data-dependent unitary gates, and the latter only under the trainable ones. This perspective yields the simple computational graph
    \begin{align}
        x \xmapsto{\text{Encoding step}} \rho(x) \xmapsto{\text{Linear step}} f(x) = \Tr\{\rho(x)\calM(\vartheta)\}.
    \end{align}
    First the \emph{encoding step} is in general a non-linear map of $x$, then the \emph{linear step} corresponds to the quantum measurement, and is indeed a linear map of $\rho(x)$.
    For ease of implementation we introduce a \emph{twin neural network} (twiNN) in Appendix~\ref{a:twiNN} that mirrors these two steps.

    Following the divide-and-explain principle underlying the LRP method, we consider a two step explanation, following the computational graph in reverse order.
    In the first step, we produce the intermediate relevance for $\rho(x)$, which we denote by $R(\rho)$, and which depends on both the value of the function $f(x)$ and the entries of $\rho(x)$ itself: $(f(x),\rho(x))\mapsto R(\rho)$.
    In the second step, we reach the final explanation $E(x)$ by using information only from $x, \rho$, and the relevance of $\rho$: $(x, \rho, R(\rho))\mapsto E(x)$.
    To fully specify this approach one needs only provide concrete formulas for $R(\rho)$ and $E(x)$, which we call the \emph{linear rule} and the \emph{encoding rule}, respectively.

    In the XQML method we propose, the linear rule exploits the fact that $f(x)$ is a linear function 
    \begin{align}
        f(x) &= \Tr\{\rho(x)\calM(\vartheta)\} = \sum_{i,j} \rho_{i,j}(x) \calM_{i,j}(\vartheta)
    \end{align}
    of the entries of $\rho(x)$.
    The linear rule then is simply\footnote{
        For ease of implementation, we actually consider a real-valued version of every complex-valued matrix involved, as we explain in Appendix~\ref{a:twiNN}.
    } $R_{i,j}(\rho) = \rho_{i,j}(x)\calM_{i,j}(\vartheta)$, which is always conservative by construction, and it corresponds to \gradxinput{} introduced above.
    The encoding rule is in turn a bit more involved, as not only is the encoding step non-linear, but also this rule must take the relevance $R(\rho)$ into account.
    For the PQCs we consider (where each input component is encoded only once) each entry of $\rho(x)$ is itself a degree-$1$ polynomial in each of the components of $x$.
    That means that we can consider again the expansion 
    \begin{align}
        \rho_{i,j}(x) = \rho_{i,j}(\tx_{i,j}) + \sum_{k=1}^d T^{i,j}_k(x,\tx_{i,j}) + \varepsilon
    \end{align}
    we have used for the derivation of the \taylorinf{} relevance.
    Assuming we find good root points $\tx_{i,j}$ for each entry of $\rho(x)$, we need to combine these $T_k^{i,j}$ functions with the intermediate explanation we obtained from the linear rule.
    We propose the encoding rule
    \begin{align}
        E_k(x) &= \sum_{i,j} T_k^{i,j}(x,\tx_{i,j}) \frac{R_{i,j}(\rho)}{\rho_{i,j}},
    \end{align}
    with the goal of optimizing for conservation.
    For this formula only, we use the convention $0/0=0$, as for the zero entries $\rho_{i,j}(x)=0$ it holds also that $R_{i,j}(x)=0$ from the linear rule, and we need just remove those entries from the sum.
    One can readily see that, assuming we have good root points, the identity $\sum_{k=1}^d E_k(x)\approx f(x)$ holds, and we call the resulting explanation technique \QLRP{ }, which stands for \emph{quantum layerwise relevance propagation}.
    In Appendix~\ref{a:QLRP}, we give an algorithm for finding root points.

    The \QLRP{ } algorithm we propose involves several steps which require exponential space on a classical computer.
    The number of independent entries of $\rho(x)$ and $\calM(\vartheta)$ in general is $\calO(\exp(n))$, which accounts for the exponential 
    complexity of the linear rule.
    The runtime of the algorithms we propose to find the root points for each $\rho_{i,j}(x)$ is polynomial in $n$, but since there are exponentially many of them, the encoding rule also has exponential complexity.
    This way our proposed \QLRP{ } requires full classical simulation of the quantum circuit and storing the intermediate data-dependent state.
    In its current formulation, access to the PQC that evaluates the function does not improve the total runtime, as the storage requirements are still exponential throughout.

    This explanation technique represents the first adaption of the divide-and-explain principle underlying the LRP explanation technique to QML, albeit with some practical limitations.
    On the one hand, this is precisely in line with the improvement we wanted to achieve with respect to black-box methods.
    On the other hand, though, it means that this method is not scalable to larger problems with more qubits involved.
    
    Two important questions remain open.
    First, it would be interesting to establish how applicable this method is to strictly classically-simulable circuits (like circuits where all states involved are either low-entanglement or low-magic states).
    Second, one could adapt \QLRP{ } to take advantage of quantum hardware which, together with intermediate measurements if necessary, would reduce the storage requirements to be classically tractable, possibly at the cost of only approximately accurate explanations.

    \begin{figure}[t]
        \centering
        \includegraphics[width=\linewidth]{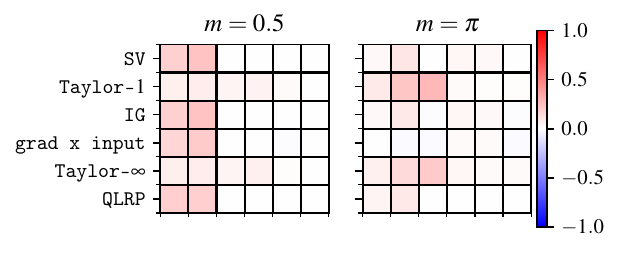}
        \caption{
        Average predicted explanation for data in the first class, for two synthetic learning tasks.
        For this class, the relevant components should be the first three.
        For the right task, with $m=\pi$, we observe that most explanation methods have average relevance close to $0$ for all components.}
        \label{fig:explanations}
    \end{figure}

\section{Numerical results}\label{s:results}

    \begin{figure*}[t!]
        \centering
        \includegraphics{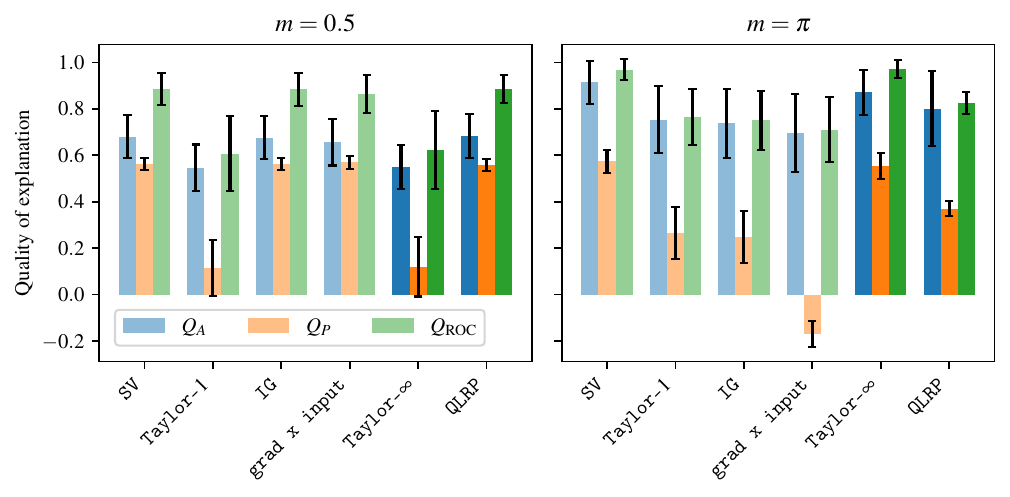}
        \caption{
        Quality of explanation for methods covered in this work, full report can be found in Appendix~\ref{a:experiments} and Fig.~\ref{fig:full-eval}.
        The difference in opacity differentiates between the existing local attribution methods introduced in Section~\ref{ss:posthoc} and the ones we introduce for quantum learning models in Section~\ref{s:methods}.
        }
        \label{fig:explanation_quality}
    \end{figure*}

    In order to showcase our proposed methods, \taylorinf{} and \QLRP{ }, we devise a proof of concept classification experiment.
    In this experiment, we use synthetic data for which we can evaluate not only the performance of a PQC, but also for which the explanation quality is easily quantifiable.
    Our goal is to display the practical application of our protocols, not to claim that quantum models have an advantage over classical algorithms in terms of explainability, nor that our proposed explanation techniques are superior to previously-established ones.
    With the aim of offering a point of comparison, we benchmark a subset of the model-agnostic explanations introduced in Section~\ref{ss:posthoc} together with our novel model-specific techniques.

    We propose a classification task using $6$-dimensional data sorted into $4$ different classes.
    The specifics of the task can be found in Appendix~\ref{a:experiments}.
    The main feature of the data generation is that three components are relevant for each class.
    The classification task is rather simple, in that points from different classes cluster together and are close to linearly separable.
    The relevant components are sampled from a normal distribution centered away from the origin, and the irrelevant components are sampled uniformly random.
    With this, for each class, we immediately know which components should be given high relevance, and which ones should receive zero attribution.

    We train a $6$-qubit PQC to solve the multi-class task, where four different fixed observables are measured at the end, one producing a relative score for each class.
    A relatively simple PQC where each component is encoded only once is already able to solve the task satisfactorily.
    Then, we produce local explanations for individual input patterns following all the ideas introduced above: both the established ones from Section~\ref{ss:posthoc} and our novel ones from Section~\ref{s:methods}.
    In Fig.~\ref{fig:explanations}, we can observe the average explanations produced by several different methods for elements of the first class.
    We note that, although we manually 
    set three components as relevant for each class, from the explainability analysis 
    we can infer that our simple model only makes use of two such components.
    A similar behavior 
    has been reported in Ref.~\cite{steinmuller2022explainable}.
    We also see a dichotomy between the experiment repetitions: the explanations produced by both \taylorone{} and \taylorinf{} differ from the other methods in that the relevant components are more starkly highlighted in the $m=\pi$ repetition.

    In order to quantify the quality of explanation, we focus on quality metrics that presuppose known information about the problem.
    We use a \emph{mask} $M(x)$, which has the same shape as $x$, and whose entries flag the actually relevant features of the input.
    Quality evaluation metrics correspond to different similarity measures between the explanation $E(x)$ and the mask $M(x)$ for each given input $x$. In Appendix~\ref{a:evaluation}, we provide a precise formulation of the quality metrics considered in this work.
    In the first measure, called \emph{explanation alignment} $Q_A(x)$, we simply compute the share of relevance that is distributed to the correct features.
    This takes the form of a normalized inner product between the elementwise magnitudes of the explanation and the mask.
    Next, given a set of inputs $(x_i)_i$, another measure is the \emph{Pearson correlation} between the explanations and the masks: $Q_P(x)=\operatorname{Corr}_{\{x_i\}_i}(E(x_i),M(x_i))$.
    Finally, borrowing from other corners of machine learning, we include a third metric $Q_\text{ROC}$ based on the \emph{receiver operator characteristics} (ROC), which relies on a binary classification of explanations based on the mask.
    The quality of explanation is depicted in Fig.~\ref{fig:explanation_quality}.

    As explained in Appendix~\ref{a:experiments}, we perform three different experiment repetitions, labeled by a hyperparameter we call $m$.
    We consider three values of $m$, and we observe that the larger $m$ is, the more difficult the classification becomes.
    In Fig.~\ref{fig:explanation_quality} we show only some of the explanation methods for the intermediate and higher values of the hyperparameter: $m=0.5$ and $m=\pi$ respectively.
    In both cases the achieved test accuracy is above $80\%$, and in the intermediate case it is close to optimal.
    What we observe in the plots is a rich range of qualities, both for different methods and for different metrics.
    Without a clear best choice across the board, the messages to be extracted from these experiments are:
    \begin{itemize}
        \item There is no obvious benefit in using model-specific explanations over model-agnostic ones.
            This may be an intrinsic fact of XQML or an artifact of the simplicity of our synthetic classification task.
        \item If the goal is to quantify the explainability of a QML model for a specific data-set, then efforts must be invested in validating each XQML method for that set.
        \item If the goal is to quantify the explainability of a QML model in general, then further research is needed.
            In this sense, there is no free lunch also for explainability.
    \end{itemize}

    These messages are in line with what we see in classical ML, where explanation methods start distinguishing themselves only with increasing complexity of the model and difficulty of the task.
    For instance, for very deep convolutional NNs or for transformers gradient-based approaches start having problems (while they may work fine for simple settings)~\cite{AchICML24}.

    When comparing the left and right plots in Fig.~\ref{fig:explanation_quality}, we observe that most explanations achieve higher scores for both the explanation alignment $Q_A$ and the ROC-based $Q_\text{ROC}$ evaluation metrics for the $m=\pi$ task.
    For the XQML methods introduced in this work, we see a reverse in the performance ranking: \QLRP{ } outperforms \taylorinf{} for $m=0.5$, but the converse is true for $m=\pi$.
    This may well be from specific artifacts in the particular toy problem we employ.
    Structurally, both experiment repetitions are very similar, with the only noticeable difference that the simple classifier achieves better accuracy for $m=0.5$ than for $m=\pi$.
    From this, we hypothesize that the second task being more complex, the training algorithm settles for a slightly simpler model that in the first task, thus accounting for the lower accuracy.
    If this were the case, a possible explanation for the better performance of the gradient-based explanation methods could be that the second model is closer to a linear one.
    As put fort in Ref.~\cite{AchICML24} for the study of explainability in transformers, linearity could account for a generic improvement in the explanation quality of gradient-based methods.
    Nevertheless, this particular experiment is synthetic in nature, and there is no strong evidence to believe the trends showed here would generalize to larger, more realistic scenarios.

    Next to measuring the quality of explanation across several metrics, we recall that the main difference between \taylorone{} and \taylorinf{} ought to be an improvement in conservation.
    To assess potential differences, we consider the \emph{relative approximation error} 
    \begin{align}
        \frac{\lvert\sum_{i=1}^d E_i(x) - f(x)\rvert}{\lvert f(x)\rvert}
    \end{align}
    of a specific explanation $E$.
    In Table~\ref{tab:conservation} we report the relative approximation error of both \taylorone{} and \taylorinf{} averaged over the data-sets corresponding to each of the two experiments.
    \begin{table}[h]
        \centering
        \caption{Relative approximation errors achieved by \taylorone{} and \taylorinf{} on the synthetic classification tasks.
        }
        \label{tab:conservation}
        \begin{tabular}{ccc}\hline\hline
            XQML Method & $m=0.5$ & $m=\pi$ \\\hline\hline
            \taylorone{} & $0.032\pm0.005$ & $0.40\pm0.03$ \\\hline
            \taylorinf{} & $0.028\pm0.004$ & $0.52\pm0.04$ \\\hline\hline
        \end{tabular}
    \end{table}

    We briefly recall that one of the guiding principles for the derivation of \taylorinf{} was to include more terms in the Taylor expansion of the function.
    By exploiting the structure of low-degree trigonometric functions, one would have expected a high degree of conservation in the resulting explanations.
    This is not what we observe in these experiments, where both methods produce similar results, and \taylorone{} is even more conservative for the $m=\pi$ experiment.
    The hypothesis introduced above that the classifier in $m=\pi$ was closer to linear would also explain this disparity.
    While \taylorinf{} contains strictly more terms of the Taylor expansion than \taylorone{}, it could be that the inclusion of these extra terms is harmful if the underlying function is close to linear, as for example if only positive contributions are taken from a balanced sum.
    Again, we leave this as a conjecture, and we do not spend further efforts in characterizing behaviors that could be specific to this synthetic toy problem.

\section{Summary and outlook}
    
    In this work, we have charted what the field of \emph{explainable quantum machine learning} (XQML) could look like, as a ``preview'' of the field.
    In order to provide such a preview, 
    we have discussed the importance of explainability in general for machine learning systems of all kinds.
    After reviewing the main concepts from the fields of \emph{explainable artificial intelligence} (XAI) and \emph{quantum machine learning} (QML), we have noted there is one significant disparity between QML and classical ML:
    while XAI is substantially more developed than XQML, advancements in explainability have lagged behind breakthroughs in performance for deep learning models.
    This is not yet the case for QML models, and this offers an opportunity for avoiding a trade-off between explainability and performance in QML.
    By recognizing the depth of potential impact of explainability in the development of QML, we can, therefore, decide to prioritize the design of inherently explainable QML models.

    Along the way, we have remarked that the very core features of quantum mechanics may render impossible the direct application of certain XAI ideas to QML models.
    The hardness of efficiently representing quantum states classically, the impossibility of copying quantum information, and the destruction of quantum coherence when measuring intermediate steps of a quantum computation force us to be imaginative in the design of XQML techniques.
    Next to laying out the formalism of XQML, in this work we have also taken the first steps forward, by proposing quantum versions of well-established classical XAI techniques.

    In our work, we have focused on post-hoc local attribution methods on the one hand, and on QML models based on \emph{parametrized quantum circuits} (PQCs) on the other hand.
    We have introduced two novel explanation techniques to be applicable for PQC-based QML models.
    Referencing the classical XAI ideas from where they stem, we called them \taylorinf{} and \emph{quantum layerwise relevance propagation}, \QLRP.
    \taylorinf{} leverages the Taylor decomposition of the labeling function, and allows us to retain a higher-degree of information than its classical counterpart \taylorone{}.
    \taylorinf{} is a black-box explanation, but it exploits the trigonometric structure of usual QML labeling functions~\cite{Schuld2021effect}.
    \QLRP{ } adapts the divide-and-explain mindset adopted in \emph{layerwise relevance propagation} (\LRP{})~\cite{bach2015pixel} to PQC-based QML models.
    By considering a PQC as a linear model 
    on a high-dimensional Hilbert space, in \QLRP{ } we propose a two-step explanation, taking into 
    account the structure of the intermediate quantum feature map.

    We present these novel explanation techniques with derivations substantially expanding on their classical counterparts.
    We accompany these analytical results with numerical experiments using a synthetic learning task.
    The goal of these experiments is not so much to 
   present state-of-the-art simulations at the forefront of classical simulations, but have been set up mainly to transparently illustrate the pipeline of implementation and evaluation of XQML techniques.
    
    By reporting the performance of different explanation techniques, we provide a comparison between existing model-agnostic explanations and our novel model-specific ones.
    We leave for future work a full characterization of the potential advantages and merits of the different techniques.
    We decide on this humbler stance for two important reasons:
    First, it is our understanding that strong evidence for QML models that perform well on practically-relevant tasks is still missing \cite{gilfuster2024relation}, which means there is a lack of understanding of which features will prove eventually relevant in the design of QML models.
    Second, the large-scale benchmark from Ref.~\cite{bowles2024betterclassicalsubtleart} raised the alarm that cherry-picking is a common issue in QML research.
    The conclusion from our numerics is that different methods perform differently for different tasks and under different metrics, and so further research is needed in (1) characterizing all XQML techniques further, and (2) understanding which features and metrics will be relevant when eventually designing practically-relevant QML models.
    To be precise: we do not search for quantum advantage in XAI, we rather contribute to the foundations of XQML and take first meaningful steps.

    Arguably the main challenge we encountered is that of scalability.
    To design an explanation that makes use of the specific structure of a PQC, the straightforward approach of fully characterizing the Hilbert space of quantum computation quickly becomes classically intractable.
    A generic issue in QML also arises in XQML: the choice of encoding is critical~\cite{Schuld2021effect}.
    Namely, independently of the achieved performance in solving a learning task, if the chosen encoding introduces too much complexity, explanation becomes critically difficult.
    We identified a trade-off that is also present in XAI for deep learning: the less structured the learning model, the more complex the explanation method must be.

    The challenges we encountered additionally dictate the potential ways forward in XQML.
    The scalability issue must be addressed by designing smart quantum explanation algorithms that are quantum-time- and classical-space-efficient.
    Such algorithms would involve intermediate measurements to extract partial information about the quantum computation.
    To better grasp the effect of the encoding on explainability, a promising direction is studying the explainability of kernel-based learning models, since QML has a deep connection to kernel methods~\cite{schuld2021supervised}.
    Additionally, the structure-complexity trade-off could dictate the design of QML models: to have the specific structures that would make explainability easier (for example, as proposed in Ref.~\cite{cybinski2024speakphysicistunderstandyou}.

    Other promising research directions include the generalization of our explanation techniques to other encodings, including data re-uploading circuits~\cite{perez2020data}, 
    or QML models taking quantum states as input data.
    Implementing larger-scale experiments would help us towards understanding what desirable features arise from different explanation techniques and evaluation metrics.
    From both these considerations, a further impactful contribution would be the compilation of earnest explainability-based guidelines to the design of 
    PQCs.
    With these guidelines, the difficulty of finding specific application domains for QML could be partially alleviated.

    More generally, other bridges could be built between XAI and quantum computing.
    For instance, one could further study the possibility of explaining parts of the quantum circuit, instead of only the produced explanations for a learning task~\cite{heese2023explaining}.
    This direction includes questions around mechanistic interpretability, meaning the study of relevance at the level of individual neurons.
    In the opposite direction, an interesting contribution would be to better chart the space of potential quantum advantage in XAI by establishing the computational complexity of the different steps in the current XAI pipelines for deep learning.
    Finally, our work could be expanded toward the direction of hybrid quantum-classical ML models, where PQCs are directly combined with classical neural networks.

\subsection*{Acknowledgements}

The authors thank Anna Dawid for insightful comments in an earlier version of this draft.
This work has been supported by the BMBF (Hybrid), the Munich Quantum Valley (K-4 and K-8), the BMWK (EniQmA), the Quantum Flagship (PasQuans2, Millenion), QuantERA (HQCC), the Cluster of Excellence MATH+, the DFG (CRC 183 and Research Unit KI-FOR 5363 $-$ project ID: 459422098), the Einstein Foundation (Einstein Research Unit on Quantum Devices), Berlin Quantum, and the ERC (DebuQC). For the Einstein Research Unit, this has been the result of an important joint-node collaboration. EGF is supported by a Google PhD Fellowship and acknowledges travel support from the European Union’s Horizon 2020 research and innovation programme under grant agreement No 951847.

\bibliography{sources.bib}

%apsrev4-2.bst 2019-01-14 (MD) hand-edited version of apsrev4-1.bst
%Control: key (0)
%Control: author (8) initials jnrlst
%Control: editor formatted (1) identically to author
%Control: production of article title (0) allowed
%Control: page (0) single
%Control: year (1) truncated
%Control: production of eprint (0) enabled
\begin{thebibliography}{96}%
\makeatletter
\providecommand \@ifxundefined [1]{%
 \@ifx{#1\undefined}
}%
\providecommand \@ifnum [1]{%
 \ifnum #1\expandafter \@firstoftwo
 \else \expandafter \@secondoftwo
 \fi
}%
\providecommand \@ifx [1]{%
 \ifx #1\expandafter \@firstoftwo
 \else \expandafter \@secondoftwo
 \fi
}%
\providecommand \natexlab [1]{#1}%
\providecommand \enquote  [1]{``#1''}%
\providecommand \bibnamefont  [1]{#1}%
\providecommand \bibfnamefont [1]{#1}%
\providecommand \citenamefont [1]{#1}%
\providecommand \href@noop [0]{\@secondoftwo}%
\providecommand \href [0]{\begingroup \@sanitize@url \@href}%
\providecommand \@href[1]{\@@startlink{#1}\@@href}%
\providecommand \@@href[1]{\endgroup#1\@@endlink}%
\providecommand \@sanitize@url [0]{\catcode `\\12\catcode `\$12\catcode
  `\&12\catcode `\#12\catcode `\^12\catcode `\_12\catcode `\%12\relax}%
\providecommand \@@startlink[1]{}%
\providecommand \@@endlink[0]{}%
\providecommand \url  [0]{\begingroup\@sanitize@url \@url }%
\providecommand \@url [1]{\endgroup\@href {#1}{\urlprefix }}%
\providecommand \urlprefix  [0]{URL }%
\providecommand \Eprint [0]{\href }%
\providecommand \doibase [0]{https://doi.org/}%
\providecommand \selectlanguage [0]{\@gobble}%
\providecommand \bibinfo  [0]{\@secondoftwo}%
\providecommand \bibfield  [0]{\@secondoftwo}%
\providecommand \translation [1]{[#1]}%
\providecommand \BibitemOpen [0]{}%
\providecommand \bibitemStop [0]{}%
\providecommand \bibitemNoStop [0]{.\EOS\space}%
\providecommand \EOS [0]{\spacefactor3000\relax}%
\providecommand \BibitemShut  [1]{\csname bibitem#1\endcsname}%
\let\auto@bib@innerbib\@empty
%</preamble>
\bibitem [{\citenamefont {LeCun}\ \emph {et~al.}(2012)\citenamefont {LeCun},
  \citenamefont {Bottou}, \citenamefont {Orr},\ and\ \citenamefont
  {M{\"u}ller}}]{lecun2012neural}%
  \BibitemOpen
  \bibfield  {author} {\bibinfo {author} {\bibfnamefont {Y.~A.}\ \bibnamefont
  {LeCun}}, \bibinfo {author} {\bibfnamefont {L.}~\bibnamefont {Bottou}},
  \bibinfo {author} {\bibfnamefont {G.~B.}\ \bibnamefont {Orr}},\ and\ \bibinfo
  {author} {\bibfnamefont {K.-R.}\ \bibnamefont {M{\"u}ller}},\ }in\ \href
  {https://doi.org/10.1007/978-3-642-35289-8} {\emph {\bibinfo {booktitle}
  {Neural networks: Tricks of the trade}}}\ (\bibinfo  {publisher} {Springer},\
  \bibinfo {address} {Berlin},\ \bibinfo {year} {2012})\ pp.\ \bibinfo {pages}
  {9--48}\BibitemShut {NoStop}%
\bibitem [{\citenamefont {Goodfellow}\ \emph {et~al.}(2016)\citenamefont
  {Goodfellow}, \citenamefont {Bengio},\ and\ \citenamefont
  {Courville}}]{Goodfellow2016deep}%
  \BibitemOpen
  \bibfield  {author} {\bibinfo {author} {\bibfnamefont {I.}~\bibnamefont
  {Goodfellow}}, \bibinfo {author} {\bibfnamefont {Y.}~\bibnamefont {Bengio}},\
  and\ \bibinfo {author} {\bibfnamefont {A.}~\bibnamefont {Courville}},\ }\href
  {http://www.deeplearningbook.org} {\emph {\bibinfo {title} {Deep learning}}}\
  (\bibinfo  {publisher} {MIT Press},\ \bibinfo {address} {Cambridge (MA)},\
  \bibinfo {year} {2016})\BibitemShut {NoStop}%
\bibitem [{\citenamefont {Nov}\ \emph {et~al.}(2023)\citenamefont {Nov},
  \citenamefont {Singh},\ and\ \citenamefont {Mann}}]{nov2023putting}%
  \BibitemOpen
  \bibfield  {author} {\bibinfo {author} {\bibfnamefont {O.}~\bibnamefont
  {Nov}}, \bibinfo {author} {\bibfnamefont {N.}~\bibnamefont {Singh}},\ and\
  \bibinfo {author} {\bibfnamefont {D.}~\bibnamefont {Mann}},\ }\bibfield
  {title} {\bibinfo {title} {{Putting ChatGPT’s medical advice to the
  (Turing) test: survey study}},\ }\href {https://doi.org/10.2196/46939}
  {\bibfield  {journal} {\bibinfo  {journal} {JMIR Med. Educ.}\ }\textbf
  {\bibinfo {volume} {9}},\ \bibinfo {pages} {e46939} (\bibinfo {year}
  {2023})}\BibitemShut {NoStop}%
\bibitem [{\citenamefont {Silver}\ \emph {et~al.}(2016)\citenamefont {Silver},
  \citenamefont {Huang}, \citenamefont {Maddison}, \citenamefont {Guez},
  \citenamefont {Sifre}, \citenamefont {Driessche}, \citenamefont
  {Schrittwieser}, \citenamefont {Antonoglou}, \citenamefont {Panneershelvam},
  \citenamefont {Lanctot}, \citenamefont {Dieleman}, \citenamefont {Grewe},
  \citenamefont {Nham}, \citenamefont {Kalchbrenner}, \citenamefont
  {Sutskever}, \citenamefont {Graepel}, \citenamefont {Lillicrap},
  \citenamefont {Leach}, \citenamefont {Kavukcuoglu},\ and\ \citenamefont
  {Hassabis}}]{silver2016mastering}%
  \BibitemOpen
  \bibfield  {author} {\bibinfo {author} {\bibfnamefont {D.}~\bibnamefont
  {Silver}}, \bibinfo {author} {\bibfnamefont {A.}~\bibnamefont {Huang}},
  \bibinfo {author} {\bibfnamefont {C.~J.}\ \bibnamefont {Maddison}}, \bibinfo
  {author} {\bibfnamefont {A.}~\bibnamefont {Guez}}, \bibinfo {author}
  {\bibfnamefont {L.}~\bibnamefont {Sifre}}, \bibinfo {author} {\bibfnamefont
  {G.~V.~D.}\ \bibnamefont {Driessche}}, \bibinfo {author} {\bibfnamefont
  {J.}~\bibnamefont {Schrittwieser}}, \bibinfo {author} {\bibfnamefont
  {I.}~\bibnamefont {Antonoglou}}, \bibinfo {author} {\bibfnamefont
  {V.}~\bibnamefont {Panneershelvam}}, \bibinfo {author} {\bibfnamefont
  {M.}~\bibnamefont {Lanctot}}, \bibinfo {author} {\bibfnamefont
  {S.}~\bibnamefont {Dieleman}}, \bibinfo {author} {\bibfnamefont
  {D.}~\bibnamefont {Grewe}}, \bibinfo {author} {\bibfnamefont
  {J.}~\bibnamefont {Nham}}, \bibinfo {author} {\bibfnamefont {N.}~\bibnamefont
  {Kalchbrenner}}, \bibinfo {author} {\bibfnamefont {I.}~\bibnamefont
  {Sutskever}}, \bibinfo {author} {\bibfnamefont {T.}~\bibnamefont {Graepel}},
  \bibinfo {author} {\bibfnamefont {T.}~\bibnamefont {Lillicrap}}, \bibinfo
  {author} {\bibfnamefont {M.}~\bibnamefont {Leach}}, \bibinfo {author}
  {\bibfnamefont {K.}~\bibnamefont {Kavukcuoglu}},\ and\ \bibinfo {author}
  {\bibfnamefont {D.}~\bibnamefont {Hassabis}},\ }\bibfield  {title} {\bibinfo
  {title} {Mastering the game of go with deep neural networks and tree
  search},\ }\href {https://doi.org/10.1038/nature16961} {\bibfield  {journal}
  {\bibinfo  {journal} {Nature}\ }\textbf {\bibinfo {volume} {529}},\ \bibinfo
  {pages} {484} (\bibinfo {year} {2016})}\BibitemShut {NoStop}%
\bibitem [{\citenamefont {Deng}\ \emph {et~al.}(2013)\citenamefont {Deng},
  \citenamefont {Hinton},\ and\ \citenamefont {Kingsbury}}]{deng2013new}%
  \BibitemOpen
  \bibfield  {author} {\bibinfo {author} {\bibfnamefont {L.}~\bibnamefont
  {Deng}}, \bibinfo {author} {\bibfnamefont {G.}~\bibnamefont {Hinton}},\ and\
  \bibinfo {author} {\bibfnamefont {B.}~\bibnamefont {Kingsbury}},\ }\bibfield
  {title} {\bibinfo {title} {New types of deep neural network learning for
  speech recognition and related applications: An overview},\ }\href
  {https://doi.org/10.1109/ICASSP.2013.6639344} {\bibfield  {journal} {\bibinfo
   {journal} {IEEE Int. Conf. Ac. Speech and Signal Proc. (ICASSP)}\ ,\
  \bibinfo {pages} {8599}} (\bibinfo {year} {2013})}\BibitemShut {NoStop}%
\bibitem [{\citenamefont {He}\ \emph {et~al.}(2016)\citenamefont {He},
  \citenamefont {Zhang}, \citenamefont {Ren},\ and\ \citenamefont
  {Sun}}]{he2016deep}%
  \BibitemOpen
  \bibfield  {author} {\bibinfo {author} {\bibfnamefont {K.}~\bibnamefont
  {He}}, \bibinfo {author} {\bibfnamefont {X.}~\bibnamefont {Zhang}}, \bibinfo
  {author} {\bibfnamefont {S.}~\bibnamefont {Ren}},\ and\ \bibinfo {author}
  {\bibfnamefont {J.}~\bibnamefont {Sun}},\ }\bibfield  {title} {\bibinfo
  {title} {Deep residual learning for image recognition},\ }\href
  {https://doi.org/10.1109/CVPR.2016.90} {\bibfield  {journal} {\bibinfo
  {journal} {Proc. IEEE Conf. Comp. Vision and Pattern Rec. (CVPR)}\ ,\
  \bibinfo {pages} {770}} (\bibinfo {year} {2016})}\BibitemShut {NoStop}%
\bibitem [{\citenamefont {Arute}\ \emph {et~al.}(2019)\citenamefont {Arute}
  \emph {et~al.}}]{arute2019quantum}%
  \BibitemOpen
  \bibfield  {author} {\bibinfo {author} {\bibfnamefont {F.}~\bibnamefont
  {Arute}} \emph {et~al.},\ }\bibfield  {title} {\bibinfo {title} {Quantum
  supremacy using a programmable superconducting processor},\ }\href
  {https://doi.org/10.1038/s41586-019-1666-5} {\bibfield  {journal} {\bibinfo
  {journal} {Nature}\ }\textbf {\bibinfo {volume} {574}},\ \bibinfo {pages}
  {505} (\bibinfo {year} {2019})}\BibitemShut {NoStop}%
\bibitem [{\citenamefont {Hangleiter}\ and\ \citenamefont
  {Eisert}(2023)}]{hangleiter2022computational}%
  \BibitemOpen
  \bibfield  {author} {\bibinfo {author} {\bibfnamefont {D.}~\bibnamefont
  {Hangleiter}}\ and\ \bibinfo {author} {\bibfnamefont {J.}~\bibnamefont
  {Eisert}},\ }\bibfield  {title} {\bibinfo {title} {Computational advantage of
  quantum random sampling},\ }\href
  {https://doi.org/10.1103/RevModPhys.95.035001} {\bibfield  {journal}
  {\bibinfo  {journal} {Rev. Mod. Phys.}\ }\textbf {\bibinfo {volume} {95}},\
  \bibinfo {pages} {035001} (\bibinfo {year} {2023})}\BibitemShut {NoStop}%
\bibitem [{\citenamefont {Biamonte}\ \emph {et~al.}(2017)\citenamefont
  {Biamonte}, \citenamefont {Wittek}, \citenamefont {Pancotti}, \citenamefont
  {Rebentrost}, \citenamefont {Wiebe},\ and\ \citenamefont
  {Lloyd}}]{biamonte2017quantum}%
  \BibitemOpen
  \bibfield  {author} {\bibinfo {author} {\bibfnamefont {J.}~\bibnamefont
  {Biamonte}}, \bibinfo {author} {\bibfnamefont {P.}~\bibnamefont {Wittek}},
  \bibinfo {author} {\bibfnamefont {N.}~\bibnamefont {Pancotti}}, \bibinfo
  {author} {\bibfnamefont {P.}~\bibnamefont {Rebentrost}}, \bibinfo {author}
  {\bibfnamefont {N.}~\bibnamefont {Wiebe}},\ and\ \bibinfo {author}
  {\bibfnamefont {S.}~\bibnamefont {Lloyd}},\ }\bibfield  {title} {\bibinfo
  {title} {Quantum machine learning},\ }\href
  {https://doi.org/10.1038/nature23474} {\bibfield  {journal} {\bibinfo
  {journal} {Nature}\ }\textbf {\bibinfo {volume} {549}},\ \bibinfo {pages}
  {195} (\bibinfo {year} {2017})}\BibitemShut {NoStop}%
\bibitem [{\citenamefont {Dunjko}\ and\ \citenamefont
  {Briegel}(2018)}]{dunjko2018machine}%
  \BibitemOpen
  \bibfield  {author} {\bibinfo {author} {\bibfnamefont {V.}~\bibnamefont
  {Dunjko}}\ and\ \bibinfo {author} {\bibfnamefont {H.~J.}\ \bibnamefont
  {Briegel}},\ }\bibfield  {title} {\bibinfo {title} {Machine learning {\&}
  artificial intelligence in the quantum domain: a review of recent progress},\
  }\href {https://doi.org/10.1088/1361-6633/aab406} {\bibfield  {journal}
  {\bibinfo  {journal} {Rep. Prog. Phys.}\ }\textbf {\bibinfo {volume} {81}},\
  \bibinfo {pages} {074001} (\bibinfo {year} {2018})}\BibitemShut {NoStop}%
\bibitem [{\citenamefont {Carleo}\ \emph {et~al.}(2019)\citenamefont {Carleo},
  \citenamefont {Cirac}, \citenamefont {Cranmer}, \citenamefont {Daudet},
  \citenamefont {Schuld}, \citenamefont {Tishby}, \citenamefont
  {Vogt-Maranto},\ and\ \citenamefont {Zdeborov\'a}}]{carleo2019machine}%
  \BibitemOpen
  \bibfield  {author} {\bibinfo {author} {\bibfnamefont {G.}~\bibnamefont
  {Carleo}}, \bibinfo {author} {\bibfnamefont {I.}~\bibnamefont {Cirac}},
  \bibinfo {author} {\bibfnamefont {K.}~\bibnamefont {Cranmer}}, \bibinfo
  {author} {\bibfnamefont {L.}~\bibnamefont {Daudet}}, \bibinfo {author}
  {\bibfnamefont {M.}~\bibnamefont {Schuld}}, \bibinfo {author} {\bibfnamefont
  {N.}~\bibnamefont {Tishby}}, \bibinfo {author} {\bibfnamefont
  {L.}~\bibnamefont {Vogt-Maranto}},\ and\ \bibinfo {author} {\bibfnamefont
  {L.}~\bibnamefont {Zdeborov\'a}},\ }\bibfield  {title} {\bibinfo {title}
  {Machine learning and the physical sciences},\ }\href
  {https://doi.org/10.1103/RevModPhys.91.045002} {\bibfield  {journal}
  {\bibinfo  {journal} {Rev. Mod. Phys.}\ }\textbf {\bibinfo {volume} {91}},\
  \bibinfo {pages} {045002} (\bibinfo {year} {2019})}\BibitemShut {NoStop}%
\bibitem [{\citenamefont {McClean}\ \emph {et~al.}(2016)\citenamefont
  {McClean}, \citenamefont {Romero}, \citenamefont {Babbush},\ and\
  \citenamefont {Aspuru-Guzik}}]{McClean2016theory}%
  \BibitemOpen
  \bibfield  {author} {\bibinfo {author} {\bibfnamefont {J.~R.}\ \bibnamefont
  {McClean}}, \bibinfo {author} {\bibfnamefont {J.}~\bibnamefont {Romero}},
  \bibinfo {author} {\bibfnamefont {R.}~\bibnamefont {Babbush}},\ and\ \bibinfo
  {author} {\bibfnamefont {A.}~\bibnamefont {Aspuru-Guzik}},\ }\bibfield
  {title} {\bibinfo {title} {The theory of variational hybrid quantum-classical
  algorithms},\ }\href {https://doi.org/10.1088/1367-2630/18/2/023023}
  {\bibfield  {journal} {\bibinfo  {journal} {New J. Phys.}\ }\textbf {\bibinfo
  {volume} {18}},\ \bibinfo {pages} {023023} (\bibinfo {year}
  {2016})}\BibitemShut {NoStop}%
\bibitem [{\citenamefont {Sweke}\ \emph {et~al.}(2021)\citenamefont {Sweke},
  \citenamefont {Seifert}, \citenamefont {Hangleiter},\ and\ \citenamefont
  {Eisert}}]{sweke2021quantum}%
  \BibitemOpen
  \bibfield  {author} {\bibinfo {author} {\bibfnamefont {R.}~\bibnamefont
  {Sweke}}, \bibinfo {author} {\bibfnamefont {J.-P.}\ \bibnamefont {Seifert}},
  \bibinfo {author} {\bibfnamefont {D.}~\bibnamefont {Hangleiter}},\ and\
  \bibinfo {author} {\bibfnamefont {J.}~\bibnamefont {Eisert}},\ }\bibfield
  {title} {\bibinfo {title} {On the quantum versus classical learnability of
  discrete distributions},\ }\href {https://doi.org/10.22331/q-2021-03-23-417}
  {\bibfield  {journal} {\bibinfo  {journal} {Quantum}\ }\textbf {\bibinfo
  {volume} {5}},\ \bibinfo {pages} {417} (\bibinfo {year} {2021})}\BibitemShut
  {NoStop}%
\bibitem [{\citenamefont {Liu}\ \emph {et~al.}(2021)\citenamefont {Liu},
  \citenamefont {Arunachalam},\ and\ \citenamefont {Temme}}]{liu2021rigorous}%
  \BibitemOpen
  \bibfield  {author} {\bibinfo {author} {\bibfnamefont {Y.}~\bibnamefont
  {Liu}}, \bibinfo {author} {\bibfnamefont {S.}~\bibnamefont {Arunachalam}},\
  and\ \bibinfo {author} {\bibfnamefont {K.}~\bibnamefont {Temme}},\ }\bibfield
   {title} {\bibinfo {title} {A rigorous and robust quantum speed-up in
  supervised machine learning},\ }\href
  {https://doi.org/10.1038/s41567-021-01287-z} {\bibfield  {journal} {\bibinfo
  {journal} {Nature Phys.}\ }\textbf {\bibinfo {volume} {17}},\ \bibinfo
  {pages} {1013} (\bibinfo {year} {2021})}\BibitemShut {NoStop}%
\bibitem [{\citenamefont {Pirnay}\ \emph {et~al.}(2023)\citenamefont {Pirnay},
  \citenamefont {Sweke}, \citenamefont {Eisert},\ and\ \citenamefont
  {Seifert}}]{DensityModelling}%
  \BibitemOpen
  \bibfield  {author} {\bibinfo {author} {\bibfnamefont {N.}~\bibnamefont
  {Pirnay}}, \bibinfo {author} {\bibfnamefont {R.}~\bibnamefont {Sweke}},
  \bibinfo {author} {\bibfnamefont {J.}~\bibnamefont {Eisert}},\ and\ \bibinfo
  {author} {\bibfnamefont {J.-P.}\ \bibnamefont {Seifert}},\ }\bibfield
  {title} {\bibinfo {title} {A super-polynomial quantum-classical separation
  for density modelling},\ }\href {https://doi.org/10.1103/PhysRevA.107.042416}
  {\bibfield  {journal} {\bibinfo  {journal} {Phys. Rev. A}\ }\textbf {\bibinfo
  {volume} {107}},\ \bibinfo {pages} {042416} (\bibinfo {year}
  {2023})}\BibitemShut {NoStop}%
\bibitem [{\citenamefont {Szegedy}\ \emph {et~al.}(2014)\citenamefont
  {Szegedy}, \citenamefont {Zaremba}, \citenamefont {Sutskever}, \citenamefont
  {Bruna}, \citenamefont {Erhan}, \citenamefont {Goodfellow},\ and\
  \citenamefont {Fergus}}]{szegedy2014intriguing}%
  \BibitemOpen
  \bibfield  {author} {\bibinfo {author} {\bibfnamefont {C.}~\bibnamefont
  {Szegedy}}, \bibinfo {author} {\bibfnamefont {W.}~\bibnamefont {Zaremba}},
  \bibinfo {author} {\bibfnamefont {I.}~\bibnamefont {Sutskever}}, \bibinfo
  {author} {\bibfnamefont {J.}~\bibnamefont {Bruna}}, \bibinfo {author}
  {\bibfnamefont {D.}~\bibnamefont {Erhan}}, \bibinfo {author} {\bibfnamefont
  {I.~J.}\ \bibnamefont {Goodfellow}},\ and\ \bibinfo {author} {\bibfnamefont
  {R.}~\bibnamefont {Fergus}},\ }\bibfield  {title} {\bibinfo {title}
  {Intriguing properties of neural networks},\ }\href
  {https://openreview.net/forum?id=kklr_MTHMRQjG} {\bibfield  {journal}
  {\bibinfo  {journal} {{ICLR} (Poster)}\ } (\bibinfo {year}
  {2014})}\BibitemShut {NoStop}%
\bibitem [{\citenamefont {Lapuschkin}\ \emph {et~al.}(2016)\citenamefont
  {Lapuschkin}, \citenamefont {Binder}, \citenamefont {Montavon}, \citenamefont
  {M{\"u}ller},\ and\ \citenamefont {Samek}}]{lapuschkin2016analyzing}%
  \BibitemOpen
  \bibfield  {author} {\bibinfo {author} {\bibfnamefont {S.}~\bibnamefont
  {Lapuschkin}}, \bibinfo {author} {\bibfnamefont {A.}~\bibnamefont {Binder}},
  \bibinfo {author} {\bibfnamefont {G.}~\bibnamefont {Montavon}}, \bibinfo
  {author} {\bibfnamefont {K.-R.}\ \bibnamefont {M{\"u}ller}},\ and\ \bibinfo
  {author} {\bibfnamefont {W.}~\bibnamefont {Samek}},\ }\bibfield  {title}
  {\bibinfo {title} {Analyzing classifiers: Fisher vectors and deep neural
  networks},\ }in\ \href
  {https://openaccess.thecvf.com/content_cvpr_2016/html/Bach_Analyzing_Classifiers_Fisher_CVPR_2016_paper.html}
  {\emph {\bibinfo {booktitle} {Proceedings of the IEEE Conference on Computer
  Vision and Pattern Recognition (CVPR)}}}\ (\bibinfo {year} {2016})\ pp.\
  \bibinfo {pages} {2912--2920}\BibitemShut {NoStop}%
\bibitem [{\citenamefont {Samek}\ \emph {et~al.}(2021)\citenamefont {Samek},
  \citenamefont {Montavon}, \citenamefont {Lapuschkin}, \citenamefont
  {Anders},\ and\ \citenamefont {M{\"u}ller}}]{SamPIEEE21}%
  \BibitemOpen
  \bibfield  {author} {\bibinfo {author} {\bibfnamefont {W.}~\bibnamefont
  {Samek}}, \bibinfo {author} {\bibfnamefont {G.}~\bibnamefont {Montavon}},
  \bibinfo {author} {\bibfnamefont {S.}~\bibnamefont {Lapuschkin}}, \bibinfo
  {author} {\bibfnamefont {C.~J.}\ \bibnamefont {Anders}},\ and\ \bibinfo
  {author} {\bibfnamefont {K.-R.}\ \bibnamefont {M{\"u}ller}},\ }\bibfield
  {title} {\bibinfo {title} {Explaining deep neural networks and beyond: A
  review of methods and applications},\ }\href
  {https://doi.org/10.1109/JPROC.2021.3060483} {\bibfield  {journal} {\bibinfo
  {journal} {Proc. IEEE}\ }\textbf {\bibinfo {volume} {109}},\ \bibinfo {pages}
  {247} (\bibinfo {year} {2021})}\BibitemShut {NoStop}%
\bibitem [{\citenamefont {Minh}\ \emph {et~al.}(2022)\citenamefont {Minh},
  \citenamefont {Wang}, \citenamefont {Li},\ and\ \citenamefont
  {Nguyen}}]{xAIReview}%
  \BibitemOpen
  \bibfield  {author} {\bibinfo {author} {\bibfnamefont {D.}~\bibnamefont
  {Minh}}, \bibinfo {author} {\bibfnamefont {H.}~\bibnamefont {Wang}}, \bibinfo
  {author} {\bibfnamefont {Y.}~\bibnamefont {Li}},\ and\ \bibinfo {author}
  {\bibfnamefont {T.~N.}\ \bibnamefont {Nguyen}},\ }\bibfield  {title}
  {\bibinfo {title} {Explainable artificial intelligence: a comprehensive
  review},\ }\href {https://doi.org/10.1007/s10462-021-10088-y} {\bibfield
  {journal} {\bibinfo  {journal} {Artif. Intell. Rev.}\ }\textbf {\bibinfo
  {volume} {55}},\ \bibinfo {pages} {3503} (\bibinfo {year}
  {2022})}\BibitemShut {NoStop}%
\bibitem [{\citenamefont {Arrieta}\ \emph {et~al.}(2020)\citenamefont
  {Arrieta}, \citenamefont {Rodr{\'{\i}}guez}, \citenamefont {Ser},
  \citenamefont {Bennetot}, \citenamefont {Tabik}, \citenamefont {Barbado},
  \citenamefont {Garc{\'{\i}}a}, \citenamefont {Gil{-}Lopez}, \citenamefont
  {Molina}, \citenamefont {Benjamins}, \citenamefont {Chatila},\ and\
  \citenamefont {Herrera}}]{ArrietaRSBTBGGM20}%
  \BibitemOpen
  \bibfield  {author} {\bibinfo {author} {\bibfnamefont {A.~B.}\ \bibnamefont
  {Arrieta}}, \bibinfo {author} {\bibfnamefont {N.~D.}\ \bibnamefont
  {Rodr{\'{\i}}guez}}, \bibinfo {author} {\bibfnamefont {J.~D.}\ \bibnamefont
  {Ser}}, \bibinfo {author} {\bibfnamefont {A.}~\bibnamefont {Bennetot}},
  \bibinfo {author} {\bibfnamefont {S.}~\bibnamefont {Tabik}}, \bibinfo
  {author} {\bibfnamefont {A.}~\bibnamefont {Barbado}}, \bibinfo {author}
  {\bibfnamefont {S.}~\bibnamefont {Garc{\'{\i}}a}}, \bibinfo {author}
  {\bibfnamefont {S.}~\bibnamefont {Gil{-}Lopez}}, \bibinfo {author}
  {\bibfnamefont {D.}~\bibnamefont {Molina}}, \bibinfo {author} {\bibfnamefont
  {R.}~\bibnamefont {Benjamins}}, \bibinfo {author} {\bibfnamefont
  {R.}~\bibnamefont {Chatila}},\ and\ \bibinfo {author} {\bibfnamefont
  {F.}~\bibnamefont {Herrera}},\ }\bibfield  {title} {\bibinfo {title}
  {Explainable artificial intelligence {(XAI):} concepts, taxonomies,
  opportunities and challenges toward responsible {AI}},\ }\href
  {https://doi.org/10.1016/j.inffus.2019.12.012} {\bibfield  {journal}
  {\bibinfo  {journal} {Inf. Fusion}\ }\textbf {\bibinfo {volume} {58}},\
  \bibinfo {pages} {82} (\bibinfo {year} {2020})}\BibitemShut {NoStop}%
\bibitem [{\citenamefont {Steinmüller}\ \emph {et~al.}(2022)\citenamefont
  {Steinmüller}, \citenamefont {Schulz}, \citenamefont {Graf},\ and\
  \citenamefont {Herr}}]{steinmuller2022explainable}%
  \BibitemOpen
  \bibfield  {author} {\bibinfo {author} {\bibfnamefont {P.}~\bibnamefont
  {Steinmüller}}, \bibinfo {author} {\bibfnamefont {T.}~\bibnamefont
  {Schulz}}, \bibinfo {author} {\bibfnamefont {F.}~\bibnamefont {Graf}},\ and\
  \bibinfo {author} {\bibfnamefont {D.}~\bibnamefont {Herr}},\ }\bibfield
  {title} {\bibinfo {title} {{eXplainable AI for quantum machine learning}},\
  }\href {https://arxiv.org/abs/2211.01441} {\bibfield  {journal} {\bibinfo
  {journal} {arXiv:2211.01441}\ } (\bibinfo {year} {2022})}\BibitemShut
  {NoStop}%
\bibitem [{\citenamefont {Lifshitz}(2022)}]{lifshitz2022quantum}%
  \BibitemOpen
  \bibfield  {author} {\bibinfo {author} {\bibfnamefont {R.}~\bibnamefont
  {Lifshitz}},\ }\bibfield  {title} {\bibinfo {title} {Quantum deep dreaming: A
  novel approach for quantum circuit design},\ }\href
  {https://arxiv.org/abs/2211.04343} {\bibfield  {journal} {\bibinfo  {journal}
  {arXiv:2211.04343}\ } (\bibinfo {year} {2022})}\BibitemShut {NoStop}%
\bibitem [{\citenamefont {Heese}\ \emph {et~al.}(2023)\citenamefont {Heese},
  \citenamefont {Gerlach}, \citenamefont {Mücke}, \citenamefont {Müller},
  \citenamefont {Jakobs},\ and\ \citenamefont
  {Piatkowski}}]{heese2023explaining}%
  \BibitemOpen
  \bibfield  {author} {\bibinfo {author} {\bibfnamefont {R.}~\bibnamefont
  {Heese}}, \bibinfo {author} {\bibfnamefont {T.}~\bibnamefont {Gerlach}},
  \bibinfo {author} {\bibfnamefont {S.}~\bibnamefont {Mücke}}, \bibinfo
  {author} {\bibfnamefont {S.}~\bibnamefont {Müller}}, \bibinfo {author}
  {\bibfnamefont {M.}~\bibnamefont {Jakobs}},\ and\ \bibinfo {author}
  {\bibfnamefont {N.}~\bibnamefont {Piatkowski}},\ }\bibfield  {title}
  {\bibinfo {title} {Explaining quantum circuits with shapley values: Towards
  explainable quantum machine learning},\ }\href
  {https://arxiv.org/abs/2301.09138} {\bibfield  {journal} {\bibinfo  {journal}
  {arXiv:2301.09138}\ } (\bibinfo {year} {2023})}\BibitemShut {NoStop}%
\bibitem [{\citenamefont {Pira}\ and\ \citenamefont
  {Ferrie}(2024)}]{pira2024interpretability}%
  \BibitemOpen
  \bibfield  {author} {\bibinfo {author} {\bibfnamefont {L.}~\bibnamefont
  {Pira}}\ and\ \bibinfo {author} {\bibfnamefont {C.}~\bibnamefont {Ferrie}},\
  }\bibfield  {title} {\bibinfo {title} {On the interpretability of quantum
  neural networks},\ }\href {https://doi.org/10.1007/s42484-024-00191-y}
  {\bibfield  {journal} {\bibinfo  {journal} {Quant. Mach. Int.}\ }\textbf
  {\bibinfo {volume} {6}},\ \bibinfo {pages} {52} (\bibinfo {year}
  {2024})}\BibitemShut {NoStop}%
\bibitem [{\citenamefont {Dziugaite}\ \emph {et~al.}(2020)\citenamefont
  {Dziugaite}, \citenamefont {Ben-David},\ and\ \citenamefont
  {Roy}}]{dziugaite2020enforcinginterpretabilitystatisticalimpacts}%
  \BibitemOpen
  \bibfield  {author} {\bibinfo {author} {\bibfnamefont {G.~K.}\ \bibnamefont
  {Dziugaite}}, \bibinfo {author} {\bibfnamefont {S.}~\bibnamefont
  {Ben-David}},\ and\ \bibinfo {author} {\bibfnamefont {D.~M.}\ \bibnamefont
  {Roy}},\ }\bibfield  {title} {\bibinfo {title} {Enforcing interpretability
  and its statistical impacts: Trade-offs between accuracy and
  interpretability},\ }\href {https://arxiv.org/abs/2010.13764} {\bibfield
  {journal} {\bibinfo  {journal} {arXiv:2010.13764}\ } (\bibinfo {year}
  {2020})}\BibitemShut {NoStop}%
\bibitem [{\citenamefont {Gil~Vidal}\ and\ \citenamefont
  {Theis}(2020)}]{GilVidal2020}%
  \BibitemOpen
  \bibfield  {author} {\bibinfo {author} {\bibfnamefont {F.~J.}\ \bibnamefont
  {Gil~Vidal}}\ and\ \bibinfo {author} {\bibfnamefont {D.~O.}\ \bibnamefont
  {Theis}},\ }\bibfield  {title} {\bibinfo {title} {Input redundancy for
  parameterized quantum circuits},\ }\href
  {https://doi.org/10.3389/fphy.2020.00297} {\bibfield  {journal} {\bibinfo
  {journal} {Frontiers in Physics}\ }\textbf {\bibinfo {volume} {8}},\ \bibinfo
  {pages} {13} (\bibinfo {year} {2020})}\BibitemShut {NoStop}%
\bibitem [{\citenamefont {Schuld}\ \emph {et~al.}(2021)\citenamefont {Schuld},
  \citenamefont {Sweke},\ and\ \citenamefont {Meyer}}]{Schuld2021effect}%
  \BibitemOpen
  \bibfield  {author} {\bibinfo {author} {\bibfnamefont {M.}~\bibnamefont
  {Schuld}}, \bibinfo {author} {\bibfnamefont {R.}~\bibnamefont {Sweke}},\ and\
  \bibinfo {author} {\bibfnamefont {J.~J.}\ \bibnamefont {Meyer}},\ }\bibfield
  {title} {\bibinfo {title} {Effect of data encoding on the expressive power of
  variational quantum-machine-learning models},\ }\href
  {https://doi.org/10.1103/PhysRevA.103.032430} {\bibfield  {journal} {\bibinfo
   {journal} {Phys. Rev. A}\ }\textbf {\bibinfo {volume} {103}},\ \bibinfo
  {pages} {032430} (\bibinfo {year} {2021})}\BibitemShut {NoStop}%
\bibitem [{\citenamefont {P{\'e}rez-Salinas}\ \emph {et~al.}(2020)\citenamefont
  {P{\'e}rez-Salinas}, \citenamefont {Cervera-Lierta}, \citenamefont
  {Gil-Fuster},\ and\ \citenamefont {Latorre}}]{perez2020data}%
  \BibitemOpen
  \bibfield  {author} {\bibinfo {author} {\bibfnamefont {A.}~\bibnamefont
  {P{\'e}rez-Salinas}}, \bibinfo {author} {\bibfnamefont {A.}~\bibnamefont
  {Cervera-Lierta}}, \bibinfo {author} {\bibfnamefont {E.}~\bibnamefont
  {Gil-Fuster}},\ and\ \bibinfo {author} {\bibfnamefont {J.~I.}\ \bibnamefont
  {Latorre}},\ }\bibfield  {title} {\bibinfo {title} {{Data re-uploading for a
  universal quantum classifier}},\ }\href
  {https://doi.org/10.22331/q-2020-02-06-226} {\bibfield  {journal} {\bibinfo
  {journal} {Quantum}\ }\textbf {\bibinfo {volume} {4}},\ \bibinfo {pages}
  {226} (\bibinfo {year} {2020})}\BibitemShut {NoStop}%
\bibitem [{\citenamefont {Abbas}\ \emph {et~al.}(2024)\citenamefont {Abbas},
  \citenamefont {King}, \citenamefont {Huang}, \citenamefont {Huggins},
  \citenamefont {Movassagh}, \citenamefont {Gilboa},\ and\ \citenamefont
  {McClean}}]{abbas2024quantum}%
  \BibitemOpen
  \bibfield  {author} {\bibinfo {author} {\bibfnamefont {A.}~\bibnamefont
  {Abbas}}, \bibinfo {author} {\bibfnamefont {R.}~\bibnamefont {King}},
  \bibinfo {author} {\bibfnamefont {H.-Y.}\ \bibnamefont {Huang}}, \bibinfo
  {author} {\bibfnamefont {W.~J.}\ \bibnamefont {Huggins}}, \bibinfo {author}
  {\bibfnamefont {R.}~\bibnamefont {Movassagh}}, \bibinfo {author}
  {\bibfnamefont {D.}~\bibnamefont {Gilboa}},\ and\ \bibinfo {author}
  {\bibfnamefont {J.}~\bibnamefont {McClean}},\ }\bibfield  {title} {\bibinfo
  {title} {On quantum backpropagation, information reuse, and cheating
  measurement collapse},\ }\href {https://doi.org/10.48550/arXiv.2305.13362}
  {\bibfield  {journal} {\bibinfo  {journal} {Adv. Neur. Inf. Proc. Syst.}\
  }\textbf {\bibinfo {volume} {2026}},\ \bibinfo {pages} {36} (\bibinfo {year}
  {2024})}\BibitemShut {NoStop}%
\bibitem [{\citenamefont {McClean}\ \emph {et~al.}(2018)\citenamefont
  {McClean}, \citenamefont {Boixo}, \citenamefont {Smelyanskiy}, \citenamefont
  {Babbush},\ and\ \citenamefont {Neven}}]{mcclean2018barren}%
  \BibitemOpen
  \bibfield  {author} {\bibinfo {author} {\bibfnamefont {J.~R.}\ \bibnamefont
  {McClean}}, \bibinfo {author} {\bibfnamefont {S.}~\bibnamefont {Boixo}},
  \bibinfo {author} {\bibfnamefont {V.~N.}\ \bibnamefont {Smelyanskiy}},
  \bibinfo {author} {\bibfnamefont {R.}~\bibnamefont {Babbush}},\ and\ \bibinfo
  {author} {\bibfnamefont {H.}~\bibnamefont {Neven}},\ }\bibfield  {title}
  {\bibinfo {title} {Barren plateaus in quantum neural network training
  landscapes},\ }\href {https://doi.org/10.1038/s41467-018-07090-4} {\bibfield
  {journal} {\bibinfo  {journal} {Nature Comm.}\ }\textbf {\bibinfo {volume}
  {9}},\ \bibinfo {pages} {4812} (\bibinfo {year} {2018})}\BibitemShut
  {NoStop}%
\bibitem [{\citenamefont {Larocca}\ \emph {et~al.}(2024)\citenamefont
  {Larocca}, \citenamefont {Thanasilp}, \citenamefont {Wang}, \citenamefont
  {Sharma}, \citenamefont {Biamonte}, \citenamefont {Coles}, \citenamefont
  {Cincio}, \citenamefont {McClean}, \citenamefont {Holmes},\ and\
  \citenamefont {Cerezo}}]{larocca2024review}%
  \BibitemOpen
  \bibfield  {author} {\bibinfo {author} {\bibfnamefont {M.}~\bibnamefont
  {Larocca}}, \bibinfo {author} {\bibfnamefont {S.}~\bibnamefont {Thanasilp}},
  \bibinfo {author} {\bibfnamefont {S.}~\bibnamefont {Wang}}, \bibinfo {author}
  {\bibfnamefont {K.}~\bibnamefont {Sharma}}, \bibinfo {author} {\bibfnamefont
  {J.}~\bibnamefont {Biamonte}}, \bibinfo {author} {\bibfnamefont {P.~J.}\
  \bibnamefont {Coles}}, \bibinfo {author} {\bibfnamefont {L.}~\bibnamefont
  {Cincio}}, \bibinfo {author} {\bibfnamefont {J.~R.}\ \bibnamefont {McClean}},
  \bibinfo {author} {\bibfnamefont {Z.}~\bibnamefont {Holmes}},\ and\ \bibinfo
  {author} {\bibfnamefont {M.}~\bibnamefont {Cerezo}},\ }\bibfield  {title}
  {\bibinfo {title} {A review of barren plateaus in variational quantum
  computing},\ }\href {https://arxiv.org/abs/2405.00781} {\bibfield  {journal}
  {\bibinfo  {journal} {arXiv:2405.00781}\ } (\bibinfo {year}
  {2024})}\BibitemShut {NoStop}%
\bibitem [{\citenamefont {Bravyi}\ and\ \citenamefont
  {Gosset}(2016)}]{Bravyi2016improved}%
  \BibitemOpen
  \bibfield  {author} {\bibinfo {author} {\bibfnamefont {S.}~\bibnamefont
  {Bravyi}}\ and\ \bibinfo {author} {\bibfnamefont {D.}~\bibnamefont
  {Gosset}},\ }\bibfield  {title} {\bibinfo {title} {{Improved classical
  simulation of quantum circuits dominated by Clifford gates}},\ }\href
  {https://doi.org/10.1103/PhysRevLett.116.250501} {\bibfield  {journal}
  {\bibinfo  {journal} {Phys. Rev. Lett.}\ }\textbf {\bibinfo {volume} {116}},\
  \bibinfo {pages} {250501} (\bibinfo {year} {2016})}\BibitemShut {NoStop}%
\bibitem [{\citenamefont {Tang}(2019)}]{Tang2019}%
  \BibitemOpen
  \bibfield  {author} {\bibinfo {author} {\bibfnamefont {E.}~\bibnamefont
  {Tang}},\ }\bibfield  {title} {\bibinfo {title} {A quantum-inspired classical
  algorithm for recommendation systems},\ }in\ \href
  {https://doi.org/10.1145/3313276.3316310} {\emph {\bibinfo {booktitle}
  {Proceedings of the 51st Annual ACM SIGACT Symposium on Theory of
  Computing}}},\ \bibinfo {series and number} {STOC ’19}\ (\bibinfo
  {publisher} {ACM},\ \bibinfo {year} {2019})\BibitemShut {NoStop}%
\bibitem [{\citenamefont {Schreiber}\ \emph {et~al.}(2023)\citenamefont
  {Schreiber}, \citenamefont {Eisert},\ and\ \citenamefont
  {Meyer}}]{schreiber2022classical}%
  \BibitemOpen
  \bibfield  {author} {\bibinfo {author} {\bibfnamefont {F.~J.}\ \bibnamefont
  {Schreiber}}, \bibinfo {author} {\bibfnamefont {J.}~\bibnamefont {Eisert}},\
  and\ \bibinfo {author} {\bibfnamefont {J.~J.}\ \bibnamefont {Meyer}},\
  }\bibfield  {title} {\bibinfo {title} {Classical surrogates for quantum
  learning models},\ }\href {https://doi.org/10.1103/PhysRevLett.131.100803}
  {\bibfield  {journal} {\bibinfo  {journal} {Phys. Rev. Lett.}\ }\textbf
  {\bibinfo {volume} {131}},\ \bibinfo {pages} {100803} (\bibinfo {year}
  {2023})}\BibitemShut {NoStop}%
\bibitem [{\citenamefont {Landman}\ \emph {et~al.}(2023)\citenamefont
  {Landman}, \citenamefont {Thabet}, \citenamefont {Dalyac}, \citenamefont
  {Mhiri},\ and\ \citenamefont {Kashefi}}]{landman2023classically}%
  \BibitemOpen
  \bibfield  {author} {\bibinfo {author} {\bibfnamefont {J.}~\bibnamefont
  {Landman}}, \bibinfo {author} {\bibfnamefont {S.}~\bibnamefont {Thabet}},
  \bibinfo {author} {\bibfnamefont {C.}~\bibnamefont {Dalyac}}, \bibinfo
  {author} {\bibfnamefont {H.}~\bibnamefont {Mhiri}},\ and\ \bibinfo {author}
  {\bibfnamefont {E.}~\bibnamefont {Kashefi}},\ }\bibfield  {title} {\bibinfo
  {title} {Classically approximating variational quantum machine learning with
  random fourier features},\ }in\ \href
  {https://openreview.net/forum?id=ymFhZxw70uz} {\emph {\bibinfo {booktitle}
  {The Eleventh International Conference on Learning Representations}}}\
  (\bibinfo {year} {2023})\BibitemShut {NoStop}%
\bibitem [{\citenamefont {Sweke}\ \emph {et~al.}(2023)\citenamefont {Sweke},
  \citenamefont {Recio}, \citenamefont {Jerbi}, \citenamefont {Gil-Fuster},
  \citenamefont {Fuller}, \citenamefont {Eisert},\ and\ \citenamefont
  {Meyer}}]{Sweke2023Potential}%
  \BibitemOpen
  \bibfield  {author} {\bibinfo {author} {\bibfnamefont {R.}~\bibnamefont
  {Sweke}}, \bibinfo {author} {\bibfnamefont {E.}~\bibnamefont {Recio}},
  \bibinfo {author} {\bibfnamefont {S.}~\bibnamefont {Jerbi}}, \bibinfo
  {author} {\bibfnamefont {E.}~\bibnamefont {Gil-Fuster}}, \bibinfo {author}
  {\bibfnamefont {B.}~\bibnamefont {Fuller}}, \bibinfo {author} {\bibfnamefont
  {J.}~\bibnamefont {Eisert}},\ and\ \bibinfo {author} {\bibfnamefont {J.~J.}\
  \bibnamefont {Meyer}},\ }\bibfield  {title} {\bibinfo {title} {{Potential and
  limitations of random Fourier features for dequantizing quantum machine
  learning}},\ }\href {https://arxiv.org/abs/2309.11647} {\bibfield  {journal}
  {\bibinfo  {journal} {arXiv:2309.11647}\ } (\bibinfo {year}
  {2023})}\BibitemShut {NoStop}%
\bibitem [{\citenamefont {Shin}\ \emph {et~al.}(2024)\citenamefont {Shin},
  \citenamefont {Teo},\ and\ \citenamefont {Jeong}}]{shin2024dequantizing}%
  \BibitemOpen
  \bibfield  {author} {\bibinfo {author} {\bibfnamefont {S.}~\bibnamefont
  {Shin}}, \bibinfo {author} {\bibfnamefont {Y.~S.}\ \bibnamefont {Teo}},\ and\
  \bibinfo {author} {\bibfnamefont {H.}~\bibnamefont {Jeong}},\ }\bibfield
  {title} {\bibinfo {title} {Dequantizing quantum machine learning models using
  tensor networks},\ }\href {https://doi.org/10.1103/PhysRevResearch.6.023218}
  {\bibfield  {journal} {\bibinfo  {journal} {Phys. Rev. Res.}\ }\textbf
  {\bibinfo {volume} {6}},\ \bibinfo {pages} {023218} (\bibinfo {year}
  {2024})}\BibitemShut {NoStop}%
\bibitem [{\citenamefont {Rudolph}\ \emph {et~al.}(2023)\citenamefont
  {Rudolph}, \citenamefont {Fontana}, \citenamefont {Holmes},\ and\
  \citenamefont {Cincio}}]{rudolph2023classical}%
  \BibitemOpen
  \bibfield  {author} {\bibinfo {author} {\bibfnamefont {M.~S.}\ \bibnamefont
  {Rudolph}}, \bibinfo {author} {\bibfnamefont {E.}~\bibnamefont {Fontana}},
  \bibinfo {author} {\bibfnamefont {Z.}~\bibnamefont {Holmes}},\ and\ \bibinfo
  {author} {\bibfnamefont {L.}~\bibnamefont {Cincio}},\ }\bibfield  {title}
  {\bibinfo {title} {{Classical surrogate simulation of quantum systems with
  LOWESA}},\ }\href {https://arxiv.org/abs/2308.09109} {\bibfield  {journal}
  {\bibinfo  {journal} {arXiv:2308.09109}\ } (\bibinfo {year}
  {2023})}\BibitemShut {NoStop}%
\bibitem [{\citenamefont {Cerezo}\ \emph {et~al.}(2023)\citenamefont {Cerezo},
  \citenamefont {Larocca}, \citenamefont {Garc{\'\i}a-Mart{\'\i}n},
  \citenamefont {Diaz}, \citenamefont {Braccia}, \citenamefont {Fontana},
  \citenamefont {Rudolph}, \citenamefont {Bermejo}, \citenamefont {Ijaz},
  \citenamefont {Thanasilp} \emph {et~al.}}]{cerezo2023does}%
  \BibitemOpen
  \bibfield  {author} {\bibinfo {author} {\bibfnamefont {M.}~\bibnamefont
  {Cerezo}}, \bibinfo {author} {\bibfnamefont {M.}~\bibnamefont {Larocca}},
  \bibinfo {author} {\bibfnamefont {D.}~\bibnamefont
  {Garc{\'\i}a-Mart{\'\i}n}}, \bibinfo {author} {\bibfnamefont {N.~L.}\
  \bibnamefont {Diaz}}, \bibinfo {author} {\bibfnamefont {P.}~\bibnamefont
  {Braccia}}, \bibinfo {author} {\bibfnamefont {E.}~\bibnamefont {Fontana}},
  \bibinfo {author} {\bibfnamefont {M.~S.}\ \bibnamefont {Rudolph}}, \bibinfo
  {author} {\bibfnamefont {P.}~\bibnamefont {Bermejo}}, \bibinfo {author}
  {\bibfnamefont {A.}~\bibnamefont {Ijaz}}, \bibinfo {author} {\bibfnamefont
  {S.}~\bibnamefont {Thanasilp}}, \emph {et~al.},\ }\bibfield  {title}
  {\bibinfo {title} {Does provable absence of barren plateaus imply classical
  simulability? or, why we need to rethink variational quantum computing},\
  }\href {https://arxiv.org/abs/2312.09121} {\bibfield  {journal} {\bibinfo
  {journal} {arXiv:2312.09121}\ } (\bibinfo {year} {2023})}\BibitemShut
  {NoStop}%
\bibitem [{\citenamefont {Dias}\ and\ \citenamefont
  {Koenig}(2024)}]{dias2024classical}%
  \BibitemOpen
  \bibfield  {author} {\bibinfo {author} {\bibfnamefont {B.}~\bibnamefont
  {Dias}}\ and\ \bibinfo {author} {\bibfnamefont {R.}~\bibnamefont {Koenig}},\
  }\bibfield  {title} {\bibinfo {title} {{Classical simulation of non-Gaussian
  fermionic circuits}},\ }\href {https://doi.org/0.22331/q-2024-05-21-1350}
  {\bibfield  {journal} {\bibinfo  {journal} {Quantum}\ }\textbf {\bibinfo
  {volume} {8}},\ \bibinfo {pages} {1350} (\bibinfo {year} {2024})}\BibitemShut
  {NoStop}%
\bibitem [{\citenamefont {Gunning}\ and\ \citenamefont
  {Aha}(2019)}]{GunningA19}%
  \BibitemOpen
  \bibfield  {author} {\bibinfo {author} {\bibfnamefont {D.}~\bibnamefont
  {Gunning}}\ and\ \bibinfo {author} {\bibfnamefont {D.~W.}\ \bibnamefont
  {Aha}},\ }\bibfield  {title} {\bibinfo {title} {{DARPA's explainable
  artificial intelligence {(XAI)} program}},\ }\href
  {https://doi.org/10.1609/aimag.v40i2.2850} {\bibfield  {journal} {\bibinfo
  {journal} {{AI} Mag.}\ }\textbf {\bibinfo {volume} {40}},\ \bibinfo {pages}
  {44} (\bibinfo {year} {2019})}\BibitemShut {NoStop}%
\bibitem [{\citenamefont {Lapuschkin}\ \emph {et~al.}(2019)\citenamefont
  {Lapuschkin}, \citenamefont {Wäldchen}, \citenamefont {Binder},
  \citenamefont {Montavon}, \citenamefont {Samek},\ and\ \citenamefont
  {Müller}}]{lapuschkin_unmasking_2019}%
  \BibitemOpen
  \bibfield  {author} {\bibinfo {author} {\bibfnamefont {S.}~\bibnamefont
  {Lapuschkin}}, \bibinfo {author} {\bibfnamefont {S.}~\bibnamefont
  {Wäldchen}}, \bibinfo {author} {\bibfnamefont {A.}~\bibnamefont {Binder}},
  \bibinfo {author} {\bibfnamefont {G.}~\bibnamefont {Montavon}}, \bibinfo
  {author} {\bibfnamefont {W.}~\bibnamefont {Samek}},\ and\ \bibinfo {author}
  {\bibfnamefont {K.-R.}\ \bibnamefont {Müller}},\ }\bibfield  {title}
  {\bibinfo {title} {Unmasking {clever} {Hans} predictors and assessing what
  machines really learn},\ }\href {https://doi.org/10.1038/s41467-019-08987-4}
  {\bibfield  {journal} {\bibinfo  {journal} {Nature Comm.}\ }\textbf {\bibinfo
  {volume} {10}},\ \bibinfo {pages} {1096} (\bibinfo {year}
  {2019})}\BibitemShut {NoStop}%
\bibitem [{\citenamefont {Kauffmann}\ \emph {et~al.}(2024)\citenamefont
  {Kauffmann}, \citenamefont {Dippel}, \citenamefont {Ruff}, \citenamefont
  {Samek}, \citenamefont {M{\"{u}}ller},\ and\ \citenamefont
  {Montavon}}]{unsupervised-ch}%
  \BibitemOpen
  \bibfield  {author} {\bibinfo {author} {\bibfnamefont {J.~R.}\ \bibnamefont
  {Kauffmann}}, \bibinfo {author} {\bibfnamefont {J.}~\bibnamefont {Dippel}},
  \bibinfo {author} {\bibfnamefont {L.}~\bibnamefont {Ruff}}, \bibinfo {author}
  {\bibfnamefont {W.}~\bibnamefont {Samek}}, \bibinfo {author} {\bibfnamefont
  {K.}~\bibnamefont {M{\"{u}}ller}},\ and\ \bibinfo {author} {\bibfnamefont
  {G.}~\bibnamefont {Montavon}},\ }\bibfield  {title} {\bibinfo {title} {{The
  clever Hans effect in unsupervised learning}},\ }\href
  {https://arxiv.org/abs/2408.08041} {\bibfield  {journal} {\bibinfo  {journal}
  {arXiv:2408.08041}\ } (\bibinfo {year} {2024})}\BibitemShut {NoStop}%
\bibitem [{\citenamefont {Geirhos}\ \emph {et~al.}(2020)\citenamefont
  {Geirhos}, \citenamefont {Jacobsen}, \citenamefont {Michaelis}, \citenamefont
  {Zemel}, \citenamefont {Brendel}, \citenamefont {Bethge},\ and\ \citenamefont
  {Wichmann}}]{geirhos_shortcut_2020}%
  \BibitemOpen
  \bibfield  {author} {\bibinfo {author} {\bibfnamefont {R.}~\bibnamefont
  {Geirhos}}, \bibinfo {author} {\bibfnamefont {J.-H.}\ \bibnamefont
  {Jacobsen}}, \bibinfo {author} {\bibfnamefont {C.}~\bibnamefont {Michaelis}},
  \bibinfo {author} {\bibfnamefont {R.}~\bibnamefont {Zemel}}, \bibinfo
  {author} {\bibfnamefont {W.}~\bibnamefont {Brendel}}, \bibinfo {author}
  {\bibfnamefont {M.}~\bibnamefont {Bethge}},\ and\ \bibinfo {author}
  {\bibfnamefont {F.~A.}\ \bibnamefont {Wichmann}},\ }\bibfield  {title}
  {\bibinfo {title} {Shortcut learning in deep neural networks},\ }\href
  {https://doi.org/10.1038/s42256-020-00257-z} {\bibfield  {journal} {\bibinfo
  {journal} {Nature Mach. Intell.}\ }\textbf {\bibinfo {volume} {2}},\ \bibinfo
  {pages} {665–673} (\bibinfo {year} {2020})}\BibitemShut {NoStop}%
\bibitem [{\citenamefont {DeGrave}\ \emph {et~al.}(2021)\citenamefont
  {DeGrave}, \citenamefont {Janizek},\ and\ \citenamefont {Lee}}]{DegraveJL21}%
  \BibitemOpen
  \bibfield  {author} {\bibinfo {author} {\bibfnamefont {A.~J.}\ \bibnamefont
  {DeGrave}}, \bibinfo {author} {\bibfnamefont {J.~D.}\ \bibnamefont
  {Janizek}},\ and\ \bibinfo {author} {\bibfnamefont {S.}~\bibnamefont {Lee}},\
  }\bibfield  {title} {\bibinfo {title} {{AI} for radiographic {COVID-19}
  detection selects shortcuts over signal},\ }\href
  {https://doi.org/10.1038/s42256-021-00338-7} {\bibfield  {journal} {\bibinfo
  {journal} {Nature Mach. Intell.}\ }\textbf {\bibinfo {volume} {3}},\ \bibinfo
  {pages} {610} (\bibinfo {year} {2021})}\BibitemShut {NoStop}%
\bibitem [{\citenamefont {Ross}\ \emph {et~al.}(2017)\citenamefont {Ross},
  \citenamefont {Hughes},\ and\ \citenamefont {Doshi{-}Velez}}]{RossHD17}%
  \BibitemOpen
  \bibfield  {author} {\bibinfo {author} {\bibfnamefont {A.~S.}\ \bibnamefont
  {Ross}}, \bibinfo {author} {\bibfnamefont {M.~C.}\ \bibnamefont {Hughes}},\
  and\ \bibinfo {author} {\bibfnamefont {F.}~\bibnamefont {Doshi{-}Velez}},\
  }\bibfield  {title} {\bibinfo {title} {Right for the right reasons: Training
  differentiable models by constraining their explanations},\ }in\ \href
  {https://doi.org/10.24963/ijcai.2017/371} {\emph {\bibinfo {booktitle}
  {{IJCAI}}}}\ (\bibinfo  {publisher} {ijcai.org},\ \bibinfo {year} {2017})\
  pp.\ \bibinfo {pages} {2662--2670}\BibitemShut {NoStop}%
\bibitem [{\citenamefont {Anders}\ \emph {et~al.}(2022)\citenamefont {Anders},
  \citenamefont {Weber}, \citenamefont {Neumann}, \citenamefont {Samek},
  \citenamefont {M{\"{u}}ller},\ and\ \citenamefont
  {Lapuschkin}}]{AndersWNSML22}%
  \BibitemOpen
  \bibfield  {author} {\bibinfo {author} {\bibfnamefont {C.~J.}\ \bibnamefont
  {Anders}}, \bibinfo {author} {\bibfnamefont {L.}~\bibnamefont {Weber}},
  \bibinfo {author} {\bibfnamefont {D.}~\bibnamefont {Neumann}}, \bibinfo
  {author} {\bibfnamefont {W.}~\bibnamefont {Samek}}, \bibinfo {author}
  {\bibfnamefont {K.}~\bibnamefont {M{\"{u}}ller}},\ and\ \bibinfo {author}
  {\bibfnamefont {S.}~\bibnamefont {Lapuschkin}},\ }\bibfield  {title}
  {\bibinfo {title} {{Finding and removing clever Hans: Using explanation
  methods to debug and improve deep models}},\ }\href
  {https://doi.org/10.1016/j.inffus.2021.07.015} {\bibfield  {journal}
  {\bibinfo  {journal} {Inf. Fusion}\ }\textbf {\bibinfo {volume} {77}},\
  \bibinfo {pages} {261} (\bibinfo {year} {2022})}\BibitemShut {NoStop}%
\bibitem [{\citenamefont {Chormai}\ \emph {et~al.}(2024)\citenamefont
  {Chormai}, \citenamefont {Herrmann}, \citenamefont {M{\"{u}}ller},\ and\
  \citenamefont {Montavon}}]{ChormaiHMM24}%
  \BibitemOpen
  \bibfield  {author} {\bibinfo {author} {\bibfnamefont {P.}~\bibnamefont
  {Chormai}}, \bibinfo {author} {\bibfnamefont {J.}~\bibnamefont {Herrmann}},
  \bibinfo {author} {\bibfnamefont {K.}~\bibnamefont {M{\"{u}}ller}},\ and\
  \bibinfo {author} {\bibfnamefont {G.}~\bibnamefont {Montavon}},\ }\bibfield
  {title} {\bibinfo {title} {Disentangled explanations of neural network
  predictions by finding relevant subspaces},\ }\href
  {https://doi.org/10.1109/TPAMI.2024.3388275} {\bibfield  {journal} {\bibinfo
  {journal} {{IEEE} Trans. Pattern Anal. Mach. Intell.}\ }\textbf {\bibinfo
  {volume} {46}},\ \bibinfo {pages} {7283} (\bibinfo {year}
  {2024})}\BibitemShut {NoStop}%
\bibitem [{\citenamefont {Krenn}\ \emph {et~al.}(2022)\citenamefont {Krenn},
  \citenamefont {Pollice}, \citenamefont {Guo}, \citenamefont {Aldeghi},
  \citenamefont {Cervera-Lierta}, \citenamefont {Friederich}, \citenamefont
  {dos Passos~Gomes}, \citenamefont {H\"{a}se}, \citenamefont {Jinich},
  \citenamefont {Nigam}, \citenamefont {Yao},\ and\ \citenamefont
  {Aspuru-Guzik}}]{Krenn2022}%
  \BibitemOpen
  \bibfield  {author} {\bibinfo {author} {\bibfnamefont {M.}~\bibnamefont
  {Krenn}}, \bibinfo {author} {\bibfnamefont {R.}~\bibnamefont {Pollice}},
  \bibinfo {author} {\bibfnamefont {S.~Y.}\ \bibnamefont {Guo}}, \bibinfo
  {author} {\bibfnamefont {M.}~\bibnamefont {Aldeghi}}, \bibinfo {author}
  {\bibfnamefont {A.}~\bibnamefont {Cervera-Lierta}}, \bibinfo {author}
  {\bibfnamefont {P.}~\bibnamefont {Friederich}}, \bibinfo {author}
  {\bibfnamefont {G.}~\bibnamefont {dos Passos~Gomes}}, \bibinfo {author}
  {\bibfnamefont {F.}~\bibnamefont {H\"{a}se}}, \bibinfo {author}
  {\bibfnamefont {A.}~\bibnamefont {Jinich}}, \bibinfo {author} {\bibfnamefont
  {A.}~\bibnamefont {Nigam}}, \bibinfo {author} {\bibfnamefont
  {Z.}~\bibnamefont {Yao}},\ and\ \bibinfo {author} {\bibfnamefont
  {A.}~\bibnamefont {Aspuru-Guzik}},\ }\bibfield  {title} {\bibinfo {title} {On
  scientific understanding with artificial intelligence},\ }\href
  {https://doi.org/10.1038/s42254-022-00518-3} {\bibfield  {journal} {\bibinfo
  {journal} {Nature Rev. Phys.}\ }\textbf {\bibinfo {volume} {4}},\ \bibinfo
  {pages} {761–769} (\bibinfo {year} {2022})}\BibitemShut {NoStop}%
\bibitem [{\citenamefont {Schramowski}\ \emph {et~al.}(2020)\citenamefont
  {Schramowski}, \citenamefont {Stammer}, \citenamefont {Teso}, \citenamefont
  {Brugger}, \citenamefont {Herbert}, \citenamefont {Shao}, \citenamefont
  {Luigs}, \citenamefont {Mahlein},\ and\ \citenamefont
  {Kersting}}]{SchramowskiSTBH20}%
  \BibitemOpen
  \bibfield  {author} {\bibinfo {author} {\bibfnamefont {P.}~\bibnamefont
  {Schramowski}}, \bibinfo {author} {\bibfnamefont {W.}~\bibnamefont
  {Stammer}}, \bibinfo {author} {\bibfnamefont {S.}~\bibnamefont {Teso}},
  \bibinfo {author} {\bibfnamefont {A.}~\bibnamefont {Brugger}}, \bibinfo
  {author} {\bibfnamefont {F.}~\bibnamefont {Herbert}}, \bibinfo {author}
  {\bibfnamefont {X.}~\bibnamefont {Shao}}, \bibinfo {author} {\bibfnamefont
  {H.}~\bibnamefont {Luigs}}, \bibinfo {author} {\bibfnamefont
  {A.}~\bibnamefont {Mahlein}},\ and\ \bibinfo {author} {\bibfnamefont
  {K.}~\bibnamefont {Kersting}},\ }\bibfield  {title} {\bibinfo {title} {Making
  deep neural networks right for the right scientific reasons by interacting
  with their explanations},\ }\href {https://doi.org/10.1038/s42256-020-0212-3}
  {\bibfield  {journal} {\bibinfo  {journal} {Nature Mach. Intell.}\ }\textbf
  {\bibinfo {volume} {2}},\ \bibinfo {pages} {476} (\bibinfo {year}
  {2020})}\BibitemShut {NoStop}%
\bibitem [{\citenamefont {Dreyer}\ \emph {et~al.}(2024)\citenamefont {Dreyer},
  \citenamefont {Achtibat}, \citenamefont {Samek},\ and\ \citenamefont
  {Lapuschkin}}]{Dreyer2024CVPR}%
  \BibitemOpen
  \bibfield  {author} {\bibinfo {author} {\bibfnamefont {M.}~\bibnamefont
  {Dreyer}}, \bibinfo {author} {\bibfnamefont {R.}~\bibnamefont {Achtibat}},
  \bibinfo {author} {\bibfnamefont {W.}~\bibnamefont {Samek}},\ and\ \bibinfo
  {author} {\bibfnamefont {S.}~\bibnamefont {Lapuschkin}},\ }\bibfield  {title}
  {\bibinfo {title} {Understanding the (extra-)ordinary: Validating deep model
  decisions with prototypical concept-based explanations},\ }in\ \href
  {https://openaccess.thecvf.com/content/CVPR2024W/SAIAD/html/Dreyer_Understanding_the_Extra-Ordinary_Validating_Deep_Model_Decisions_with_Prototypical_Concept-based_CVPRW_2024_paper.html}
  {\emph {\bibinfo {booktitle} {Proceedings of the IEEE/CVF Conference on
  Computer Vision and Pattern Recognition (CVPR) Workshops}}}\ (\bibinfo {year}
  {2024})\ pp.\ \bibinfo {pages} {3491--3501}\BibitemShut {NoStop}%
\bibitem [{\citenamefont {Bach}\ \emph {et~al.}(2015)\citenamefont {Bach},
  \citenamefont {Binder}, \citenamefont {Montavon}, \citenamefont {Klauschen},
  \citenamefont {M{\"u}ller},\ and\ \citenamefont {Samek}}]{bach2015pixel}%
  \BibitemOpen
  \bibfield  {author} {\bibinfo {author} {\bibfnamefont {S.}~\bibnamefont
  {Bach}}, \bibinfo {author} {\bibfnamefont {A.}~\bibnamefont {Binder}},
  \bibinfo {author} {\bibfnamefont {G.}~\bibnamefont {Montavon}}, \bibinfo
  {author} {\bibfnamefont {F.}~\bibnamefont {Klauschen}}, \bibinfo {author}
  {\bibfnamefont {K.-R.}\ \bibnamefont {M{\"u}ller}},\ and\ \bibinfo {author}
  {\bibfnamefont {W.}~\bibnamefont {Samek}},\ }\bibfield  {title} {\bibinfo
  {title} {On pixel-wise explanations for non-linear classifier decisions by
  layer-wise relevance propagation},\ }\href
  {https://doi.org/10.1371/journal.pone.0130140} {\bibfield  {journal}
  {\bibinfo  {journal} {PloS one}\ }\textbf {\bibinfo {volume} {10}},\ \bibinfo
  {pages} {e0130140} (\bibinfo {year} {2015})}\BibitemShut {NoStop}%
\bibitem [{\citenamefont {Schnake}\ \emph {et~al.}(2022)\citenamefont
  {Schnake}, \citenamefont {Eberle}, \citenamefont {Lederer}, \citenamefont
  {Nakajima}, \citenamefont {Schutt}, \citenamefont {Muller},\ and\
  \citenamefont {Montavon}}]{SchnakeHigherOrderExplanationsGraph2022}%
  \BibitemOpen
  \bibfield  {author} {\bibinfo {author} {\bibfnamefont {T.}~\bibnamefont
  {Schnake}}, \bibinfo {author} {\bibfnamefont {O.}~\bibnamefont {Eberle}},
  \bibinfo {author} {\bibfnamefont {J.}~\bibnamefont {Lederer}}, \bibinfo
  {author} {\bibfnamefont {S.}~\bibnamefont {Nakajima}}, \bibinfo {author}
  {\bibfnamefont {K.~T.}\ \bibnamefont {Schutt}}, \bibinfo {author}
  {\bibfnamefont {K.-R.}\ \bibnamefont {Muller}},\ and\ \bibinfo {author}
  {\bibfnamefont {G.}~\bibnamefont {Montavon}},\ }\bibfield  {title} {\bibinfo
  {title} {Higher-{{order explanations}} of {{graph neural networks}} via
  {{relevant walks}}},\ }\href {https://doi.org/10.1109/TPAMI.2021.3115452}
  {\bibfield  {journal} {\bibinfo  {journal} {IEEE Trans. Pattern Anal. Mach.
  Intell.}\ }\textbf {\bibinfo {volume} {44}},\ \bibinfo {pages} {7581}
  (\bibinfo {year} {2022})}\BibitemShut {NoStop}%
\bibitem [{\citenamefont {Kim}\ \emph {et~al.}(2018)\citenamefont {Kim},
  \citenamefont {Wattenberg}, \citenamefont {Gilmer}, \citenamefont {Cai},
  \citenamefont {Wexler}, \citenamefont {Vi{\'{e}}gas},\ and\ \citenamefont
  {Sayres}}]{KimWGCWVS18}%
  \BibitemOpen
  \bibfield  {author} {\bibinfo {author} {\bibfnamefont {B.}~\bibnamefont
  {Kim}}, \bibinfo {author} {\bibfnamefont {M.}~\bibnamefont {Wattenberg}},
  \bibinfo {author} {\bibfnamefont {J.}~\bibnamefont {Gilmer}}, \bibinfo
  {author} {\bibfnamefont {C.~J.}\ \bibnamefont {Cai}}, \bibinfo {author}
  {\bibfnamefont {J.}~\bibnamefont {Wexler}}, \bibinfo {author} {\bibfnamefont
  {F.~B.}\ \bibnamefont {Vi{\'{e}}gas}},\ and\ \bibinfo {author} {\bibfnamefont
  {R.}~\bibnamefont {Sayres}},\ }\bibfield  {title} {\bibinfo {title}
  {Interpretability beyond feature attribution: Quantitative testing with
  concept activation vectors {(TCAV)}},\ }in\ \href
  {https://proceedings.mlr.press/v80/kim18d.html} {\emph {\bibinfo {booktitle}
  {{ICML}}}},\ \bibinfo {series} {Proceedings of Machine Learning Research},
  Vol.~\bibinfo {volume} {80}\ (\bibinfo  {publisher} {{PMLR}},\ \bibinfo
  {year} {2018})\ pp.\ \bibinfo {pages} {2673--2682}\BibitemShut {NoStop}%
\bibitem [{\citenamefont {Achtibat}\ \emph {et~al.}(2023)\citenamefont
  {Achtibat}, \citenamefont {Dreyer}, \citenamefont {Eisenbraun}, \citenamefont
  {Bosse}, \citenamefont {Wiegand}, \citenamefont {Samek},\ and\ \citenamefont
  {Lapuschkin}}]{achtibat_attribution_2023}%
  \BibitemOpen
  \bibfield  {author} {\bibinfo {author} {\bibfnamefont {R.}~\bibnamefont
  {Achtibat}}, \bibinfo {author} {\bibfnamefont {M.}~\bibnamefont {Dreyer}},
  \bibinfo {author} {\bibfnamefont {I.}~\bibnamefont {Eisenbraun}}, \bibinfo
  {author} {\bibfnamefont {S.}~\bibnamefont {Bosse}}, \bibinfo {author}
  {\bibfnamefont {T.}~\bibnamefont {Wiegand}}, \bibinfo {author} {\bibfnamefont
  {W.}~\bibnamefont {Samek}},\ and\ \bibinfo {author} {\bibfnamefont
  {S.}~\bibnamefont {Lapuschkin}},\ }\bibfield  {title} {\bibinfo {title} {From
  attribution maps to human-understandable explanations through concept
  relevance propagation},\ }\href {https://doi.org/10.1038/s42256-023-00711-8}
  {\bibfield  {journal} {\bibinfo  {journal} {Nature Mach. Intell.}\ }\textbf
  {\bibinfo {volume} {5}},\ \bibinfo {pages} {1006} (\bibinfo {year}
  {2023})}\BibitemShut {NoStop}%
\bibitem [{\citenamefont {Vielhaben}\ \emph {et~al.}(2024)\citenamefont
  {Vielhaben}, \citenamefont {Lapuschkin}, \citenamefont {Montavon},\ and\
  \citenamefont {Samek}}]{VielhabenLMS24}%
  \BibitemOpen
  \bibfield  {author} {\bibinfo {author} {\bibfnamefont {J.}~\bibnamefont
  {Vielhaben}}, \bibinfo {author} {\bibfnamefont {S.}~\bibnamefont
  {Lapuschkin}}, \bibinfo {author} {\bibfnamefont {G.}~\bibnamefont
  {Montavon}},\ and\ \bibinfo {author} {\bibfnamefont {W.}~\bibnamefont
  {Samek}},\ }\bibfield  {title} {\bibinfo {title} {Explainable {AI} for time
  series via virtual inspection layers},\ }\href
  {https://doi.org/10.1016/j.patcog.2024.110309} {\bibfield  {journal}
  {\bibinfo  {journal} {Pattern Recognit.}\ }\textbf {\bibinfo {volume}
  {150}},\ \bibinfo {pages} {110309} (\bibinfo {year} {2024})}\BibitemShut
  {NoStop}%
\bibitem [{\citenamefont {Lou}\ \emph {et~al.}(2012)\citenamefont {Lou},
  \citenamefont {Caruana},\ and\ \citenamefont {Gehrke}}]{lou2012intelligible}%
  \BibitemOpen
  \bibfield  {author} {\bibinfo {author} {\bibfnamefont {Y.}~\bibnamefont
  {Lou}}, \bibinfo {author} {\bibfnamefont {R.}~\bibnamefont {Caruana}},\ and\
  \bibinfo {author} {\bibfnamefont {J.}~\bibnamefont {Gehrke}},\ }\bibfield
  {title} {\bibinfo {title} {Intelligible models for classification and
  regression},\ }in\ \href {https://doi.org/10.1145/2339530.2339556} {\emph
  {\bibinfo {booktitle} {Proceedings of the 18th ACM SIGKDD International
  Conference on Knowledge Discovery and Data Mining}}},\ \bibinfo {series and
  number} {KDD '12}\ (\bibinfo  {publisher} {Association for Computing
  Machinery},\ \bibinfo {address} {New York, NY, USA},\ \bibinfo {year}
  {2012})\ p.\ \bibinfo {pages} {150–158}\BibitemShut {NoStop}%
\bibitem [{\citenamefont {Zhou}\ \emph {et~al.}(2016)\citenamefont {Zhou},
  \citenamefont {Khosla}, \citenamefont {Lapedriza}, \citenamefont {Oliva},\
  and\ \citenamefont {Torralba}}]{ZhouKLOT16}%
  \BibitemOpen
  \bibfield  {author} {\bibinfo {author} {\bibfnamefont {B.}~\bibnamefont
  {Zhou}}, \bibinfo {author} {\bibfnamefont {A.}~\bibnamefont {Khosla}},
  \bibinfo {author} {\bibfnamefont {{\`{A}}.}~\bibnamefont {Lapedriza}},
  \bibinfo {author} {\bibfnamefont {A.}~\bibnamefont {Oliva}},\ and\ \bibinfo
  {author} {\bibfnamefont {A.}~\bibnamefont {Torralba}},\ }\bibfield  {title}
  {\bibinfo {title} {Learning deep features for discriminative localization},\
  }in\ \href
  {https://openaccess.thecvf.com/content_cvpr_2016/html/Zhou_Learning_Deep_Features_CVPR_2016_paper.html}
  {\emph {\bibinfo {booktitle} {{CVPR}}}}\ (\bibinfo  {publisher} {{IEEE}
  Computer Society},\ \bibinfo {year} {2016})\ pp.\ \bibinfo {pages}
  {2921--2929}\BibitemShut {NoStop}%
\bibitem [{\citenamefont {Vaswani}\ \emph {et~al.}(2017)\citenamefont
  {Vaswani}, \citenamefont {Shazeer}, \citenamefont {Parmar}, \citenamefont
  {Uszkoreit}, \citenamefont {Jones}, \citenamefont {Gomez}, \citenamefont
  {Kaiser},\ and\ \citenamefont {Polosukhin}}]{vaswani2017attention}%
  \BibitemOpen
  \bibfield  {author} {\bibinfo {author} {\bibfnamefont {A.}~\bibnamefont
  {Vaswani}}, \bibinfo {author} {\bibfnamefont {N.}~\bibnamefont {Shazeer}},
  \bibinfo {author} {\bibfnamefont {N.}~\bibnamefont {Parmar}}, \bibinfo
  {author} {\bibfnamefont {J.}~\bibnamefont {Uszkoreit}}, \bibinfo {author}
  {\bibfnamefont {L.}~\bibnamefont {Jones}}, \bibinfo {author} {\bibfnamefont
  {A.~N.}\ \bibnamefont {Gomez}}, \bibinfo {author} {\bibfnamefont
  {L.}~\bibnamefont {Kaiser}},\ and\ \bibinfo {author} {\bibfnamefont
  {I.}~\bibnamefont {Polosukhin}},\ }\bibfield  {title} {\bibinfo {title}
  {Attention is all you need},\ }in\ \href
  {https://papers.nips.cc/paper_files/paper/2017/hash/3f5ee243547dee91fbd053c1c4a845aa-Abstract.html}
  {\emph {\bibinfo {booktitle} {Proceedings of the 31st International
  Conference on Neural Information Processing Systems}}},\ \bibinfo {series and
  number} {NIPS'17}\ (\bibinfo  {publisher} {Curran Associates Inc.},\ \bibinfo
  {address} {Red Hook, NY, USA},\ \bibinfo {year} {2017})\ p.\ \bibinfo {pages}
  {6000–6010}\BibitemShut {NoStop}%
\bibitem [{\citenamefont {Chen}\ \emph {et~al.}(2018)\citenamefont {Chen},
  \citenamefont {Rubanova}, \citenamefont {Bettencourt},\ and\ \citenamefont
  {Duvenaud}}]{ChenRBD18}%
  \BibitemOpen
  \bibfield  {author} {\bibinfo {author} {\bibfnamefont {T.~Q.}\ \bibnamefont
  {Chen}}, \bibinfo {author} {\bibfnamefont {Y.}~\bibnamefont {Rubanova}},
  \bibinfo {author} {\bibfnamefont {J.}~\bibnamefont {Bettencourt}},\ and\
  \bibinfo {author} {\bibfnamefont {D.}~\bibnamefont {Duvenaud}},\ }\bibfield
  {title} {\bibinfo {title} {Neural ordinary differential equations},\ }in\
  \href
  {https://proceedings.neurips.cc/paper/2018/hash/69386f6bb1dfed68692a24c8686939b9-Abstract.html}
  {\emph {\bibinfo {booktitle} {NeurIPS}}}\ (\bibinfo {year} {2018})\ pp.\
  \bibinfo {pages} {6572--6583}\BibitemShut {NoStop}%
\bibitem [{\citenamefont {Sundararajan}\ \emph {et~al.}(2017)\citenamefont
  {Sundararajan}, \citenamefont {Taly},\ and\ \citenamefont
  {Yan}}]{sundararajan2017axiomatic}%
  \BibitemOpen
  \bibfield  {author} {\bibinfo {author} {\bibfnamefont {M.}~\bibnamefont
  {Sundararajan}}, \bibinfo {author} {\bibfnamefont {A.}~\bibnamefont {Taly}},\
  and\ \bibinfo {author} {\bibfnamefont {Q.}~\bibnamefont {Yan}},\ }\bibfield
  {title} {\bibinfo {title} {Axiomatic attribution for deep networks},\ }in\
  \href {https://proceedings.mlr.press/v70/sundararajan17a.html} {\emph
  {\bibinfo {booktitle} {International conference on machine learning}}}\
  (\bibinfo {organization} {PMLR},\ \bibinfo {year} {2017})\ pp.\ \bibinfo
  {pages} {3319--3328}\BibitemShut {NoStop}%
\bibitem [{\citenamefont {{\v{S}}trumbelj}\ and\ \citenamefont
  {Kononenko}(2010)}]{strumbelj2010efficient}%
  \BibitemOpen
  \bibfield  {author} {\bibinfo {author} {\bibfnamefont {E.}~\bibnamefont
  {{\v{S}}trumbelj}}\ and\ \bibinfo {author} {\bibfnamefont {I.}~\bibnamefont
  {Kononenko}},\ }\bibfield  {title} {\bibinfo {title} {An efficient
  explanation of individual classifications using game theory},\ }\href
  {http://jmlr.org/papers/v11/strumbelj10a.html} {\bibfield  {journal}
  {\bibinfo  {journal} {J. Mach. Learn. Res.}\ }\textbf {\bibinfo {volume}
  {11}},\ \bibinfo {pages} {1} (\bibinfo {year} {2010})}\BibitemShut {NoStop}%
\bibitem [{\citenamefont {Dombrowski}\ \emph {et~al.}(2023)\citenamefont
  {Dombrowski}, \citenamefont {Gerken}, \citenamefont {M{\"u}ller},\ and\
  \citenamefont {Kessel}}]{dombrowski2023diffeomorphic}%
  \BibitemOpen
  \bibfield  {author} {\bibinfo {author} {\bibfnamefont {A.-K.}\ \bibnamefont
  {Dombrowski}}, \bibinfo {author} {\bibfnamefont {J.~E.}\ \bibnamefont
  {Gerken}}, \bibinfo {author} {\bibfnamefont {K.-R.}\ \bibnamefont
  {M{\"u}ller}},\ and\ \bibinfo {author} {\bibfnamefont {P.}~\bibnamefont
  {Kessel}},\ }\bibfield  {title} {\bibinfo {title} {Diffeomorphic
  counterfactuals with generative models},\ }\href
  {https://ieeexplore.ieee.org/abstract/document/10345703} {\bibfield
  {journal} {\bibinfo  {journal} {IEEE Trans. Patt. Ana. Mach. Int.}\ }
  (\bibinfo {year} {2023})}\BibitemShut {NoStop}%
\bibitem [{\citenamefont {Nguyen}\ \emph {et~al.}(2016)\citenamefont {Nguyen},
  \citenamefont {Dosovitskiy}, \citenamefont {Yosinski}, \citenamefont {Brox},\
  and\ \citenamefont {Clune}}]{nguyen2016synthesizing}%
  \BibitemOpen
  \bibfield  {author} {\bibinfo {author} {\bibfnamefont {A.}~\bibnamefont
  {Nguyen}}, \bibinfo {author} {\bibfnamefont {A.}~\bibnamefont {Dosovitskiy}},
  \bibinfo {author} {\bibfnamefont {J.}~\bibnamefont {Yosinski}}, \bibinfo
  {author} {\bibfnamefont {T.}~\bibnamefont {Brox}},\ and\ \bibinfo {author}
  {\bibfnamefont {J.}~\bibnamefont {Clune}},\ }\bibfield  {title} {\bibinfo
  {title} {Synthesizing the preferred inputs for neurons in neural networks via
  deep generator networks},\ }\href {https://doi.org/10.48550/arXiv.1605.09304}
  {\bibfield  {journal} {\bibinfo  {journal} {Adv. Neu. Inf. Proc. Sys.
  (NeurIPS)}\ }\textbf {\bibinfo {volume} {29}},\ \bibinfo {pages} {3395}
  (\bibinfo {year} {2016})}\BibitemShut {NoStop}%
\bibitem [{\citenamefont {{Google Research}}(2015)}]{google2015deepdream}%
  \BibitemOpen
  \bibfield  {author} {\bibinfo {author} {\bibnamefont {{Google Research}}},\
  }\href@noop {} {\bibinfo {title} {Inceptionism: Going deeper into neural
  networks}},\ \bibinfo {howpublished}
  {\url{https://ai.googleblog.com/2015/06/inceptionism-going-deeper-into-neural.html}}
  (\bibinfo {year} {2015}),\ \bibinfo {note} {accessed: 2024-08-01}\BibitemShut
  {NoStop}%
\bibitem [{\citenamefont {Shapley}(1953)}]{shapley1953value}%
  \BibitemOpen
  \bibfield  {author} {\bibinfo {author} {\bibfnamefont {L.~S.}\ \bibnamefont
  {Shapley}},\ }\bibfield  {title} {\bibinfo {title} {A value for $n$-person
  games},\ }in\ \href {http://digital.casalini.it/9781400829156} {\emph
  {\bibinfo {booktitle} {Contributions to the Theory of Games (AM-28), Volume
  II}}}\ (\bibinfo  {publisher} {Princeton University Press},\ \bibinfo {year}
  {1953})\ pp.\ \bibinfo {pages} {307--317}\BibitemShut {NoStop}%
\bibitem [{\citenamefont {Lundberg}\ and\ \citenamefont
  {Lee}(2017)}]{lundbergUnifiedApproachInterpreting2017}%
  \BibitemOpen
  \bibfield  {author} {\bibinfo {author} {\bibfnamefont {S.~M.}\ \bibnamefont
  {Lundberg}}\ and\ \bibinfo {author} {\bibfnamefont {S.}~\bibnamefont {Lee}},\
  }\bibfield  {title} {\bibinfo {title} {A unified approach to interpreting
  model predictions},\ }\href {https://dl.acm.org/doi/10.5555/3295222.3295230}
  {\bibfield  {journal} {\bibinfo  {journal} {{NIPS}}\ ,\ \bibinfo {pages}
  {4765}} (\bibinfo {year} {2017})}\BibitemShut {NoStop}%
\bibitem [{\citenamefont {Sundararajan}\ and\ \citenamefont
  {Najmi}(2020)}]{sundararajanManyShapleyValues2020}%
  \BibitemOpen
  \bibfield  {author} {\bibinfo {author} {\bibfnamefont {M.}~\bibnamefont
  {Sundararajan}}\ and\ \bibinfo {author} {\bibfnamefont {A.}~\bibnamefont
  {Najmi}},\ }\bibfield  {title} {\bibinfo {title} {The many shapley values for
  model explanation},\ }in\ \href
  {https://proceedings.mlr.press/v119/sundararajan20b.html} {\emph {\bibinfo
  {booktitle} {International conference on machine learning}}}\ (\bibinfo
  {organization} {PMLR},\ \bibinfo {year} {2020})\ pp.\ \bibinfo {pages}
  {9269--9278}\BibitemShut {NoStop}%
\bibitem [{\citenamefont {Castro}\ \emph {et~al.}(2009)\citenamefont {Castro},
  \citenamefont {G{\'o}mez},\ and\ \citenamefont
  {Tejada}}]{CastroPolynomialCalculationShapley2009}%
  \BibitemOpen
  \bibfield  {author} {\bibinfo {author} {\bibfnamefont {J.}~\bibnamefont
  {Castro}}, \bibinfo {author} {\bibfnamefont {D.}~\bibnamefont {G{\'o}mez}},\
  and\ \bibinfo {author} {\bibfnamefont {J.}~\bibnamefont {Tejada}},\
  }\bibfield  {title} {\bibinfo {title} {Polynomial calculation of the
  {{Shapley}} value based on sampling},\ }\href
  {https://doi.org/10.1016/j.cor.2008.04.004} {\bibfield  {journal} {\bibinfo
  {journal} {Computers \& Operations Research}\ }\bibinfo {series} {Selected
  Papers Presented at the {{Tenth International Symposium}} on {{Locational
  Decisions}} ({{ISOLDE X}})},\ \textbf {\bibinfo {volume} {36}},\ \bibinfo
  {pages} {1726} (\bibinfo {year} {2009})}\BibitemShut {NoStop}%
\bibitem [{\citenamefont {Ancona}\ \emph {et~al.}(2019)\citenamefont {Ancona},
  \citenamefont {Oztireli},\ and\ \citenamefont
  {Gross}}]{AnconaExplainingDeepNeural2019a}%
  \BibitemOpen
  \bibfield  {author} {\bibinfo {author} {\bibfnamefont {M.}~\bibnamefont
  {Ancona}}, \bibinfo {author} {\bibfnamefont {C.}~\bibnamefont {Oztireli}},\
  and\ \bibinfo {author} {\bibfnamefont {M.}~\bibnamefont {Gross}},\ }\bibfield
   {title} {\bibinfo {title} {Explaining deep neural networks with a polynomial
  time algorithm for shapley value approximation},\ }in\ \href
  {https://proceedings.mlr.press/v97/ancona19a.html} {\emph {\bibinfo
  {booktitle} {International Conference on Machine Learning}}}\ (\bibinfo
  {organization} {PMLR},\ \bibinfo {year} {2019})\ pp.\ \bibinfo {pages}
  {272--281}\BibitemShut {NoStop}%
\bibitem [{\citenamefont {Rumelhart}\ \emph {et~al.}(1986)\citenamefont
  {Rumelhart}, \citenamefont {Hinton},\ and\ \citenamefont
  {Williams}}]{Rumelhart1986}%
  \BibitemOpen
  \bibfield  {author} {\bibinfo {author} {\bibfnamefont {D.~E.}\ \bibnamefont
  {Rumelhart}}, \bibinfo {author} {\bibfnamefont {G.~E.}\ \bibnamefont
  {Hinton}},\ and\ \bibinfo {author} {\bibfnamefont {R.~J.}\ \bibnamefont
  {Williams}},\ }\bibfield  {title} {\bibinfo {title} {Learning representations
  by back-propagating errors},\ }\href {https://doi.org/10.1038/323533a0}
  {\bibfield  {journal} {\bibinfo  {journal} {Nature}\ }\textbf {\bibinfo
  {volume} {323}},\ \bibinfo {pages} {533–536} (\bibinfo {year}
  {1986})}\BibitemShut {NoStop}%
\bibitem [{\citenamefont {Montavon}\ \emph {et~al.}(2020)\citenamefont
  {Montavon}, \citenamefont {Kauffmann}, \citenamefont {Samek},\ and\
  \citenamefont {M{\"{u}}ller}}]{MontavonKSM20}%
  \BibitemOpen
  \bibfield  {author} {\bibinfo {author} {\bibfnamefont {G.}~\bibnamefont
  {Montavon}}, \bibinfo {author} {\bibfnamefont {J.~R.}\ \bibnamefont
  {Kauffmann}}, \bibinfo {author} {\bibfnamefont {W.}~\bibnamefont {Samek}},\
  and\ \bibinfo {author} {\bibfnamefont {K.}~\bibnamefont {M{\"{u}}ller}},\
  }\bibfield  {title} {\bibinfo {title} {Explaining the predictions of
  unsupervised learning models},\ }in\ \href
  {https://doi.org/10.1007/978-3-031-04083-2_7} {\emph {\bibinfo {booktitle}
  {xxAI@ICML}}},\ \bibinfo {series} {Lecture Notes in Computer Science}, Vol.\
  \bibinfo {volume} {13200}\ (\bibinfo  {publisher} {Springer},\ \bibinfo
  {year} {2020})\ pp.\ \bibinfo {pages} {117--138}\BibitemShut {NoStop}%
\bibitem [{\citenamefont {Ali}\ \emph {et~al.}(2022)\citenamefont {Ali},
  \citenamefont {Schnake}, \citenamefont {Eberle}, \citenamefont {Montavon},
  \citenamefont {M{\"{u}}ller},\ and\ \citenamefont {Wolf}}]{AliSEMMW22}%
  \BibitemOpen
  \bibfield  {author} {\bibinfo {author} {\bibfnamefont {A.}~\bibnamefont
  {Ali}}, \bibinfo {author} {\bibfnamefont {T.}~\bibnamefont {Schnake}},
  \bibinfo {author} {\bibfnamefont {O.}~\bibnamefont {Eberle}}, \bibinfo
  {author} {\bibfnamefont {G.}~\bibnamefont {Montavon}}, \bibinfo {author}
  {\bibfnamefont {K.}~\bibnamefont {M{\"{u}}ller}},\ and\ \bibinfo {author}
  {\bibfnamefont {L.}~\bibnamefont {Wolf}},\ }\bibfield  {title} {\bibinfo
  {title} {{XAI} for transformers: Better explanations through conservative
  propagation},\ }in\ \href {https://proceedings.mlr.press/v162/ali22a.html}
  {\emph {\bibinfo {booktitle} {{ICML}}}},\ \bibinfo {series} {Proceedings of
  Machine Learning Research}, Vol.\ \bibinfo {volume} {162}\ (\bibinfo
  {publisher} {{PMLR}},\ \bibinfo {year} {2022})\ pp.\ \bibinfo {pages}
  {435--451}\BibitemShut {NoStop}%
\bibitem [{\citenamefont {Montavon}\ \emph {et~al.}(2019)\citenamefont
  {Montavon}, \citenamefont {Binder}, \citenamefont {Lapuschkin}, \citenamefont
  {Samek},\ and\ \citenamefont
  {Müller}}]{montavonLayerWiseRelevancePropagation2019}%
  \BibitemOpen
  \bibfield  {author} {\bibinfo {author} {\bibfnamefont {G.}~\bibnamefont
  {Montavon}}, \bibinfo {author} {\bibfnamefont {A.}~\bibnamefont {Binder}},
  \bibinfo {author} {\bibfnamefont {S.}~\bibnamefont {Lapuschkin}}, \bibinfo
  {author} {\bibfnamefont {W.}~\bibnamefont {Samek}},\ and\ \bibinfo {author}
  {\bibfnamefont {K.-R.}\ \bibnamefont {Müller}},\ }\bibfield  {title}
  {\bibinfo {title} {Layer-{{wise relevance propagation}}: {{An overview}}},\
  }in\ \href {https://doi.org/10.1007/978-3-030-28954-6_10} {\emph {\bibinfo
  {booktitle} {Explainable {{AI}}: {{Interpreting}}, {{Explaining}} and
  {{Visualizing Deep Learning}}}}},\ Vol.\ \bibinfo {volume} {11700},\ \bibinfo
  {editor} {edited by\ \bibinfo {editor} {\bibfnamefont {W.}~\bibnamefont
  {Samek}}, \bibinfo {editor} {\bibfnamefont {G.}~\bibnamefont {Montavon}},
  \bibinfo {editor} {\bibfnamefont {A.}~\bibnamefont {Vedaldi}}, \bibinfo
  {editor} {\bibfnamefont {L.~K.}\ \bibnamefont {Hansen}},\ and\ \bibinfo
  {editor} {\bibfnamefont {K.-R.}\ \bibnamefont {Müller}}}\ (\bibinfo
  {publisher} {Springer International Publishing},\ \bibinfo {year} {2019})\
  pp.\ \bibinfo {pages} {193--209}\BibitemShut {NoStop}%
\bibitem [{\citenamefont {Samek}\ \emph {et~al.}(2017)\citenamefont {Samek},
  \citenamefont {Binder}, \citenamefont {Montavon}, \citenamefont
  {Lapuschkin},\ and\ \citenamefont {M{\"{u}}ller}}]{SamekBMLM17}%
  \BibitemOpen
  \bibfield  {author} {\bibinfo {author} {\bibfnamefont {W.}~\bibnamefont
  {Samek}}, \bibinfo {author} {\bibfnamefont {A.}~\bibnamefont {Binder}},
  \bibinfo {author} {\bibfnamefont {G.}~\bibnamefont {Montavon}}, \bibinfo
  {author} {\bibfnamefont {S.}~\bibnamefont {Lapuschkin}},\ and\ \bibinfo
  {author} {\bibfnamefont {K.}~\bibnamefont {M{\"{u}}ller}},\ }\bibfield
  {title} {\bibinfo {title} {Evaluating the visualization of what a deep neural
  network has learned},\ }\href {https://doi.org/10.1109/TNNLS.2016.2599820}
  {\bibfield  {journal} {\bibinfo  {journal} {{IEEE} Trans. Neural Networks
  Learn. Syst.}\ }\textbf {\bibinfo {volume} {28}},\ \bibinfo {pages} {2660}
  (\bibinfo {year} {2017})}\BibitemShut {NoStop}%
\bibitem [{\citenamefont {Arras}\ \emph {et~al.}(2022)\citenamefont {Arras},
  \citenamefont {Osman},\ and\ \citenamefont {Samek}}]{Arras2022clevrxai}%
  \BibitemOpen
  \bibfield  {author} {\bibinfo {author} {\bibfnamefont {L.}~\bibnamefont
  {Arras}}, \bibinfo {author} {\bibfnamefont {A.}~\bibnamefont {Osman}},\ and\
  \bibinfo {author} {\bibfnamefont {W.}~\bibnamefont {Samek}},\ }\bibfield
  {title} {\bibinfo {title} {{CLEVR-XAI: A benchmark dataset for the ground
  truth evaluation of neural network explanations}},\ }\href
  {https://doi.org/10.1016/j.inffus.2021.11.008} {\bibfield  {journal}
  {\bibinfo  {journal} {Information Fusion}\ }\textbf {\bibinfo {volume}
  {81}},\ \bibinfo {pages} {14–40} (\bibinfo {year} {2022})}\BibitemShut
  {NoStop}%
\bibitem [{\citenamefont {Hedström}\ \emph {et~al.}(2023)\citenamefont
  {Hedström}, \citenamefont {Weber}, \citenamefont {Bareeva}, \citenamefont
  {Krakowczyk}, \citenamefont {Motzkus}, \citenamefont {Samek}, \citenamefont
  {Lapuschkin},\ and\ \citenamefont
  {Höhne}}]{hedstromQuantusExplainableAI2023}%
  \BibitemOpen
  \bibfield  {author} {\bibinfo {author} {\bibfnamefont {A.}~\bibnamefont
  {Hedström}}, \bibinfo {author} {\bibfnamefont {L.}~\bibnamefont {Weber}},
  \bibinfo {author} {\bibfnamefont {D.}~\bibnamefont {Bareeva}}, \bibinfo
  {author} {\bibfnamefont {D.}~\bibnamefont {Krakowczyk}}, \bibinfo {author}
  {\bibfnamefont {F.}~\bibnamefont {Motzkus}}, \bibinfo {author} {\bibfnamefont
  {W.}~\bibnamefont {Samek}}, \bibinfo {author} {\bibfnamefont
  {S.}~\bibnamefont {Lapuschkin}},\ and\ \bibinfo {author} {\bibfnamefont
  {M.~M.-C.}\ \bibnamefont {Höhne}},\ }\bibfield  {title} {\bibinfo {title}
  {Quantus: An explainable ai toolkit for responsible evaluation of neural
  network explanations and beyond},\ }\href
  {http://jmlr.org/papers/v24/22-0142.html} {\bibfield  {journal} {\bibinfo
  {journal} {J. Mach. Learn. Res.}\ }\textbf {\bibinfo {volume} {24}},\
  \bibinfo {pages} {1} (\bibinfo {year} {2023})}\BibitemShut {NoStop}%
\bibitem [{\citenamefont {Swartout}\ and\ \citenamefont
  {Moore}(1993)}]{swartout1993explanation}%
  \BibitemOpen
  \bibfield  {author} {\bibinfo {author} {\bibfnamefont {W.~R.}\ \bibnamefont
  {Swartout}}\ and\ \bibinfo {author} {\bibfnamefont {J.~D.}\ \bibnamefont
  {Moore}},\ }\bibfield  {title} {\bibinfo {title} {Explanation in second
  generation expert systems},\ }in\ \href
  {https://doi.org/10.1007/978-3-642-77927-5_24} {\emph {\bibinfo {booktitle}
  {Second generation expert systems}}}\ (\bibinfo  {publisher} {Springer},\
  \bibinfo {year} {1993})\ pp.\ \bibinfo {pages} {543--585}\BibitemShut
  {NoStop}%
\bibitem [{\citenamefont {Huang}\ \emph {et~al.}(2020)\citenamefont {Huang},
  \citenamefont {Kueng},\ and\ \citenamefont {Preskill}}]{Shadows}%
  \BibitemOpen
  \bibfield  {author} {\bibinfo {author} {\bibfnamefont {H.-Y.}\ \bibnamefont
  {Huang}}, \bibinfo {author} {\bibfnamefont {R.}~\bibnamefont {Kueng}},\ and\
  \bibinfo {author} {\bibfnamefont {J.}~\bibnamefont {Preskill}},\ }\bibfield
  {title} {\bibinfo {title} {Predicting many properties of a quantum system
  from very few measurements},\ }\href
  {https://doi.org/10.1038/s41567-020-0932-7} {\bibfield  {journal} {\bibinfo
  {journal} {Nature Phys.}\ }\textbf {\bibinfo {volume} {16}},\ \bibinfo
  {pages} {1050} (\bibinfo {year} {2020})}\BibitemShut {NoStop}%
\bibitem [{\citenamefont {Cirac}\ \emph {et~al.}(2021)\citenamefont {Cirac},
  \citenamefont {P\'erez-Garc\'{\i}a}, \citenamefont {Schuch},\ and\
  \citenamefont {Verstraete}}]{RevModPhys.93.045003}%
  \BibitemOpen
  \bibfield  {author} {\bibinfo {author} {\bibfnamefont {J.~I.}\ \bibnamefont
  {Cirac}}, \bibinfo {author} {\bibfnamefont {D.}~\bibnamefont
  {P\'erez-Garc\'{\i}a}}, \bibinfo {author} {\bibfnamefont {N.}~\bibnamefont
  {Schuch}},\ and\ \bibinfo {author} {\bibfnamefont {F.}~\bibnamefont
  {Verstraete}},\ }\bibfield  {title} {\bibinfo {title} {Matrix product states
  and projected entangled pair states: Concepts, symmetries, theorems},\ }\href
  {https://doi.org/10.1103/RevModPhys.93.045003} {\bibfield  {journal}
  {\bibinfo  {journal} {Rev. Mod. Phys.}\ }\textbf {\bibinfo {volume} {93}},\
  \bibinfo {pages} {045003} (\bibinfo {year} {2021})}\BibitemShut {NoStop}%
\bibitem [{\citenamefont {Eisert}\ \emph {et~al.}(2010)\citenamefont {Eisert},
  \citenamefont {Cramer},\ and\ \citenamefont {Plenio}}]{AreaReview}%
  \BibitemOpen
  \bibfield  {author} {\bibinfo {author} {\bibfnamefont {J.}~\bibnamefont
  {Eisert}}, \bibinfo {author} {\bibfnamefont {M.}~\bibnamefont {Cramer}},\
  and\ \bibinfo {author} {\bibfnamefont {M.~B.}\ \bibnamefont {Plenio}},\
  }\bibfield  {title} {\bibinfo {title} {Area laws for the entanglement
  entropy},\ }\href {https://doi.org/10.1103/RevModPhys.82.277} {\bibfield
  {journal} {\bibinfo  {journal} {Rev. Mod. Phys.}\ }\textbf {\bibinfo {volume}
  {82}},\ \bibinfo {pages} {277} (\bibinfo {year} {2010})}\BibitemShut
  {NoStop}%
\bibitem [{\citenamefont {Gottesman}(1998)}]{PhysRevA.57.127}%
  \BibitemOpen
  \bibfield  {author} {\bibinfo {author} {\bibfnamefont {D.}~\bibnamefont
  {Gottesman}},\ }\bibfield  {title} {\bibinfo {title} {Theory of
  fault-tolerant quantum computation},\ }\href
  {https://doi.org/10.1103/PhysRevA.57.127} {\bibfield  {journal} {\bibinfo
  {journal} {Phys. Rev. A}\ }\textbf {\bibinfo {volume} {57}},\ \bibinfo
  {pages} {127} (\bibinfo {year} {1998})}\BibitemShut {NoStop}%
\bibitem [{\citenamefont {Masot-Llima}\ and\ \citenamefont
  {Garcia-Saez}(2024)}]{Sergi}%
  \BibitemOpen
  \bibfield  {author} {\bibinfo {author} {\bibfnamefont {S.}~\bibnamefont
  {Masot-Llima}}\ and\ \bibinfo {author} {\bibfnamefont {A.}~\bibnamefont
  {Garcia-Saez}},\ }\bibfield  {title} {\bibinfo {title} {Stabilizer tensor
  networks: universal quantum simulator on a basis of stabilizer states},\
  }\href {https://arxiv.org/abs/2403.08724} {\bibfield  {journal} {\bibinfo
  {journal} {arXiv:2403.08724}\ } (\bibinfo {year} {2024})}\BibitemShut
  {NoStop}%
\bibitem [{\citenamefont {Chinzei}\ \emph {et~al.}(2024)\citenamefont
  {Chinzei}, \citenamefont {Yamano}, \citenamefont {Tran}, \citenamefont
  {Endo},\ and\ \citenamefont {Oshima}}]{chinzei2024tradeoff}%
  \BibitemOpen
  \bibfield  {author} {\bibinfo {author} {\bibfnamefont {K.}~\bibnamefont
  {Chinzei}}, \bibinfo {author} {\bibfnamefont {S.}~\bibnamefont {Yamano}},
  \bibinfo {author} {\bibfnamefont {Q.~H.}\ \bibnamefont {Tran}}, \bibinfo
  {author} {\bibfnamefont {Y.}~\bibnamefont {Endo}},\ and\ \bibinfo {author}
  {\bibfnamefont {H.}~\bibnamefont {Oshima}},\ }\bibfield  {title} {\bibinfo
  {title} {Trade-off between gradient measurement efficiency and expressivity
  in deep quantum neural networks},\ }\href {https://arxiv.org/abs/2406.18316}
  {\bibfield  {journal} {\bibinfo  {journal} {arXiv:2406.18316}\ } (\bibinfo
  {year} {2024})}\BibitemShut {NoStop}%
\bibitem [{\citenamefont {Gil-Fuster}\ \emph {et~al.}(2024)\citenamefont
  {Gil-Fuster}, \citenamefont {Gyurik}, \citenamefont {Pérez-Salinas},\ and\
  \citenamefont {Dunjko}}]{gilfuster2024relation}%
  \BibitemOpen
  \bibfield  {author} {\bibinfo {author} {\bibfnamefont {E.}~\bibnamefont
  {Gil-Fuster}}, \bibinfo {author} {\bibfnamefont {C.}~\bibnamefont {Gyurik}},
  \bibinfo {author} {\bibfnamefont {A.}~\bibnamefont {Pérez-Salinas}},\ and\
  \bibinfo {author} {\bibfnamefont {V.}~\bibnamefont {Dunjko}},\ }\bibfield
  {title} {\bibinfo {title} {On the relation between trainability and
  dequantization of variational quantum learning models},\ }\href
  {https://arxiv.org/abs/2406.07072} {\bibfield  {journal} {\bibinfo  {journal}
  {arXiv:2406.07072}\ } (\bibinfo {year} {2024})}\BibitemShut {NoStop}%
\bibitem [{\citenamefont {Ribeiro}\ \emph {et~al.}(2016)\citenamefont
  {Ribeiro}, \citenamefont {Singh},\ and\ \citenamefont
  {Guestrin}}]{ribeiro2016why}%
  \BibitemOpen
  \bibfield  {author} {\bibinfo {author} {\bibfnamefont {M.~T.}\ \bibnamefont
  {Ribeiro}}, \bibinfo {author} {\bibfnamefont {S.}~\bibnamefont {Singh}},\
  and\ \bibinfo {author} {\bibfnamefont {C.}~\bibnamefont {Guestrin}},\
  }\bibfield  {title} {\bibinfo {title} {{"Why should I trust you?": Explaining
  the predictions of any classifier}},\ }in\ \href
  {https://doi.org/10.1145/2939672.2939778} {\emph {\bibinfo {booktitle}
  {Proceedings of the 22nd ACM SIGKDD International Conference on Knowledge
  Discovery and Data Mining}}},\ \bibinfo {series and number} {KDD '16}\
  (\bibinfo  {publisher} {Association for Computing Machinery},\ \bibinfo
  {address} {New York, NY, USA},\ \bibinfo {year} {2016})\ p.\ \bibinfo {pages}
  {1135–1144}\BibitemShut {NoStop}%
\bibitem [{\citenamefont {Montavon}\ \emph {et~al.}(2018)\citenamefont
  {Montavon}, \citenamefont {Samek},\ and\ \citenamefont
  {Müller}}]{montavon2018methods}%
  \BibitemOpen
  \bibfield  {author} {\bibinfo {author} {\bibfnamefont {G.}~\bibnamefont
  {Montavon}}, \bibinfo {author} {\bibfnamefont {W.}~\bibnamefont {Samek}},\
  and\ \bibinfo {author} {\bibfnamefont {K.-R.}\ \bibnamefont {Müller}},\
  }\bibfield  {title} {\bibinfo {title} {Methods for interpreting and
  understanding deep neural networks},\ }\href
  {https://doi.org/https://doi.org/10.1016/j.dsp.2017.10.011} {\bibfield
  {journal} {\bibinfo  {journal} {Dig. Sign. Proc.}\ }\textbf {\bibinfo
  {volume} {73}},\ \bibinfo {pages} {1} (\bibinfo {year} {2018})}\BibitemShut
  {NoStop}%
\bibitem [{\citenamefont {Achtibat}\ \emph {et~al.}(2024)\citenamefont
  {Achtibat}, \citenamefont {Hatefi}, \citenamefont {Dreyer}, \citenamefont
  {Jain}, \citenamefont {Wiegand}, \citenamefont {Lapuschkin},\ and\
  \citenamefont {Samek}}]{AchICML24}%
  \BibitemOpen
  \bibfield  {author} {\bibinfo {author} {\bibfnamefont {R.}~\bibnamefont
  {Achtibat}}, \bibinfo {author} {\bibfnamefont {S.~M.~V.}\ \bibnamefont
  {Hatefi}}, \bibinfo {author} {\bibfnamefont {M.}~\bibnamefont {Dreyer}},
  \bibinfo {author} {\bibfnamefont {A.}~\bibnamefont {Jain}}, \bibinfo {author}
  {\bibfnamefont {T.}~\bibnamefont {Wiegand}}, \bibinfo {author} {\bibfnamefont
  {S.}~\bibnamefont {Lapuschkin}},\ and\ \bibinfo {author} {\bibfnamefont
  {W.}~\bibnamefont {Samek}},\ }\bibfield  {title} {\bibinfo {title}
  {{A}ttn{LRP}: Attention-aware layer-wise relevance propagation for
  transformers},\ }in\ \href
  {https://proceedings.mlr.press/v235/achtibat24a.html} {\emph {\bibinfo
  {booktitle} {Proceedings of the 41st International Conference on Machine
  Learning}}},\ \bibinfo {series} {Proceedings of Machine Learning Research},
  Vol.\ \bibinfo {volume} {235}\ (\bibinfo {year} {2024})\ pp.\ \bibinfo
  {pages} {135--168}\BibitemShut {NoStop}%
\bibitem [{\citenamefont {Bowles}\ \emph {et~al.}(2024)\citenamefont {Bowles},
  \citenamefont {Ahmed},\ and\ \citenamefont
  {Schuld}}]{bowles2024betterclassicalsubtleart}%
  \BibitemOpen
  \bibfield  {author} {\bibinfo {author} {\bibfnamefont {J.}~\bibnamefont
  {Bowles}}, \bibinfo {author} {\bibfnamefont {S.}~\bibnamefont {Ahmed}},\ and\
  \bibinfo {author} {\bibfnamefont {M.}~\bibnamefont {Schuld}},\ }\bibfield
  {title} {\bibinfo {title} {Better than classical? the subtle art of
  benchmarking quantum machine learning models},\ }\href
  {https://arxiv.org/abs/2403.07059} {\bibfield  {journal} {\bibinfo  {journal}
  {arXiv:2403.07059}\ } (\bibinfo {year} {2024})}\BibitemShut {NoStop}%
\bibitem [{\citenamefont {Schuld}(2021)}]{schuld2021supervised}%
  \BibitemOpen
  \bibfield  {author} {\bibinfo {author} {\bibfnamefont {M.}~\bibnamefont
  {Schuld}},\ }\bibfield  {title} {\bibinfo {title} {Supervised quantum machine
  learning models are kernel methods},\ }\href
  {https://arxiv.org/abs/2101.11020} {\bibfield  {journal} {\bibinfo  {journal}
  {arXiv:2101.11020}\ } (\bibinfo {year} {2021})}\BibitemShut {NoStop}%
\bibitem [{\citenamefont {Cybiński}\ \emph {et~al.}(2024)\citenamefont
  {Cybiński}, \citenamefont {Enouen}, \citenamefont {Georges},\ and\
  \citenamefont {Dawid}}]{cybinski2024speakphysicistunderstandyou}%
  \BibitemOpen
  \bibfield  {author} {\bibinfo {author} {\bibfnamefont {K.}~\bibnamefont
  {Cybiński}}, \bibinfo {author} {\bibfnamefont {J.}~\bibnamefont {Enouen}},
  \bibinfo {author} {\bibfnamefont {A.}~\bibnamefont {Georges}},\ and\ \bibinfo
  {author} {\bibfnamefont {A.}~\bibnamefont {Dawid}},\ }\bibfield  {title}
  {\bibinfo {title} {Speak so a physicist can understand you! tetriscnn for
  detecting phase transitions and order parameters},\ }\href
  {https://arxiv.org/abs/2411.02237} {\bibfield  {journal} {\bibinfo  {journal}
  {arXiv: 2411.02237}\ } (\bibinfo {year} {2024})}\BibitemShut {NoStop}%
\bibitem [{\citenamefont {Montavon}\ \emph {et~al.}(2017)\citenamefont
  {Montavon}, \citenamefont {Lapuschkin}, \citenamefont {Binder}, \citenamefont
  {Samek},\ and\ \citenamefont {Müller}}]{montavon2017explaining}%
  \BibitemOpen
  \bibfield  {author} {\bibinfo {author} {\bibfnamefont {G.}~\bibnamefont
  {Montavon}}, \bibinfo {author} {\bibfnamefont {S.}~\bibnamefont
  {Lapuschkin}}, \bibinfo {author} {\bibfnamefont {A.}~\bibnamefont {Binder}},
  \bibinfo {author} {\bibfnamefont {W.}~\bibnamefont {Samek}},\ and\ \bibinfo
  {author} {\bibfnamefont {K.-R.}\ \bibnamefont {Müller}},\ }\bibfield
  {title} {\bibinfo {title} {Explaining non-linear classification decisions
  with deep taylor decomposition},\ }\href
  {https://doi.org/https://doi.org/10.1016/j.patcog.2016.11.008} {\bibfield
  {journal} {\bibinfo  {journal} {Pattern Recognition}\ }\textbf {\bibinfo
  {volume} {65}},\ \bibinfo {pages} {211} (\bibinfo {year} {2017})}\BibitemShut
  {NoStop}%
\bibitem [{\citenamefont {Schuld}\ \emph {et~al.}(2020)\citenamefont {Schuld},
  \citenamefont {Bocharov}, \citenamefont {Svore},\ and\ \citenamefont
  {Wiebe}}]{Schuld2020circuit}%
  \BibitemOpen
  \bibfield  {author} {\bibinfo {author} {\bibfnamefont {M.}~\bibnamefont
  {Schuld}}, \bibinfo {author} {\bibfnamefont {A.}~\bibnamefont {Bocharov}},
  \bibinfo {author} {\bibfnamefont {K.~M.}\ \bibnamefont {Svore}},\ and\
  \bibinfo {author} {\bibfnamefont {N.}~\bibnamefont {Wiebe}},\ }\bibfield
  {title} {\bibinfo {title} {Circuit-centric quantum classifiers},\ }\href
  {https://doi.org/10.1103/physreva.101.032308} {\bibfield  {journal} {\bibinfo
   {journal} {Phys. Rev. A}\ }\textbf {\bibinfo {volume} {101}},\ \bibinfo
  {pages} {032308} (\bibinfo {year} {2020})}\BibitemShut {NoStop}%
\bibitem [{\citenamefont {Bergholm}\ \emph {et~al.}(2022)\citenamefont
  {Bergholm}, \citenamefont {Izaac}, \citenamefont {Schuld}, \citenamefont
  {Gogolin}, \citenamefont {Ahmed}, \citenamefont {Ajith}, \citenamefont
  {Alam}, \citenamefont {Alonso-Linaje}, \citenamefont {AkashNarayanan},
  \citenamefont {Asadi}, \citenamefont {Arrazola}, \citenamefont {Azad},
  \citenamefont {Banning}, \citenamefont {Blank}, \citenamefont {Bromley},
  \citenamefont {Cordier}, \citenamefont {Ceroni}, \citenamefont {Delgado},
  \citenamefont {Matteo}, \citenamefont {Dusko}, \citenamefont {Garg},
  \citenamefont {Guala}, \citenamefont {Hayes}, \citenamefont {Hill},
  \citenamefont {Ijaz}, \citenamefont {Isacsson}, \citenamefont {Ittah},
  \citenamefont {Jahangiri}, \citenamefont {Jain}, \citenamefont {Jiang},
  \citenamefont {Khandelwal}, \citenamefont {Kottmann}, \citenamefont {Lang},
  \citenamefont {Lee}, \citenamefont {Loke}, \citenamefont {Lowe},
  \citenamefont {McKiernan}, \citenamefont {Meyer}, \citenamefont
  {Montañez-Barrera}, \citenamefont {Moyard}, \citenamefont {Niu},
  \citenamefont {O'Riordan}, \citenamefont {Oud}, \citenamefont {Panigrahi},
  \citenamefont {Park}, \citenamefont {Polatajko}, \citenamefont {Quesada},
  \citenamefont {Roberts}, \citenamefont {Sá}, \citenamefont {Schoch},
  \citenamefont {Shi}, \citenamefont {Shu}, \citenamefont {Sim}, \citenamefont
  {Singh}, \citenamefont {Strandberg}, \citenamefont {Soni}, \citenamefont
  {Száva}, \citenamefont {Thabet}, \citenamefont {Vargas-Hernández},
  \citenamefont {Vincent}, \citenamefont {Vitucci}, \citenamefont {Weber},
  \citenamefont {Wierichs}, \citenamefont {Wiersema}, \citenamefont {Willmann},
  \citenamefont {Wong}, \citenamefont {Zhang},\ and\ \citenamefont
  {Killoran}}]{bergholm2022pennylane}%
  \BibitemOpen
  \bibfield  {author} {\bibinfo {author} {\bibfnamefont {V.}~\bibnamefont
  {Bergholm}}, \bibinfo {author} {\bibfnamefont {J.}~\bibnamefont {Izaac}},
  \bibinfo {author} {\bibfnamefont {M.}~\bibnamefont {Schuld}}, \bibinfo
  {author} {\bibfnamefont {C.}~\bibnamefont {Gogolin}}, \bibinfo {author}
  {\bibfnamefont {S.}~\bibnamefont {Ahmed}}, \bibinfo {author} {\bibfnamefont
  {V.}~\bibnamefont {Ajith}}, \bibinfo {author} {\bibfnamefont {M.~S.}\
  \bibnamefont {Alam}}, \bibinfo {author} {\bibfnamefont {G.}~\bibnamefont
  {Alonso-Linaje}}, \bibinfo {author} {\bibfnamefont {B.}~\bibnamefont
  {AkashNarayanan}}, \bibinfo {author} {\bibfnamefont {A.}~\bibnamefont
  {Asadi}}, \bibinfo {author} {\bibfnamefont {J.~M.}\ \bibnamefont {Arrazola}},
  \bibinfo {author} {\bibfnamefont {U.}~\bibnamefont {Azad}}, \bibinfo {author}
  {\bibfnamefont {S.}~\bibnamefont {Banning}}, \bibinfo {author} {\bibfnamefont
  {C.}~\bibnamefont {Blank}}, \bibinfo {author} {\bibfnamefont {T.~R.}\
  \bibnamefont {Bromley}}, \bibinfo {author} {\bibfnamefont {B.~A.}\
  \bibnamefont {Cordier}}, \bibinfo {author} {\bibfnamefont {J.}~\bibnamefont
  {Ceroni}}, \bibinfo {author} {\bibfnamefont {A.}~\bibnamefont {Delgado}},
  \bibinfo {author} {\bibfnamefont {O.~D.}\ \bibnamefont {Matteo}}, \bibinfo
  {author} {\bibfnamefont {A.}~\bibnamefont {Dusko}}, \bibinfo {author}
  {\bibfnamefont {T.}~\bibnamefont {Garg}}, \bibinfo {author} {\bibfnamefont
  {D.}~\bibnamefont {Guala}}, \bibinfo {author} {\bibfnamefont
  {A.}~\bibnamefont {Hayes}}, \bibinfo {author} {\bibfnamefont
  {R.}~\bibnamefont {Hill}}, \bibinfo {author} {\bibfnamefont {A.}~\bibnamefont
  {Ijaz}}, \bibinfo {author} {\bibfnamefont {T.}~\bibnamefont {Isacsson}},
  \bibinfo {author} {\bibfnamefont {D.}~\bibnamefont {Ittah}}, \bibinfo
  {author} {\bibfnamefont {S.}~\bibnamefont {Jahangiri}}, \bibinfo {author}
  {\bibfnamefont {P.}~\bibnamefont {Jain}}, \bibinfo {author} {\bibfnamefont
  {E.}~\bibnamefont {Jiang}}, \bibinfo {author} {\bibfnamefont
  {A.}~\bibnamefont {Khandelwal}}, \bibinfo {author} {\bibfnamefont
  {K.}~\bibnamefont {Kottmann}}, \bibinfo {author} {\bibfnamefont {R.~A.}\
  \bibnamefont {Lang}}, \bibinfo {author} {\bibfnamefont {C.}~\bibnamefont
  {Lee}}, \bibinfo {author} {\bibfnamefont {T.}~\bibnamefont {Loke}}, \bibinfo
  {author} {\bibfnamefont {A.}~\bibnamefont {Lowe}}, \bibinfo {author}
  {\bibfnamefont {K.}~\bibnamefont {McKiernan}}, \bibinfo {author}
  {\bibfnamefont {J.~J.}\ \bibnamefont {Meyer}}, \bibinfo {author}
  {\bibfnamefont {J.~A.}\ \bibnamefont {Montañez-Barrera}}, \bibinfo {author}
  {\bibfnamefont {R.}~\bibnamefont {Moyard}}, \bibinfo {author} {\bibfnamefont
  {Z.}~\bibnamefont {Niu}}, \bibinfo {author} {\bibfnamefont {L.~J.}\
  \bibnamefont {O'Riordan}}, \bibinfo {author} {\bibfnamefont {S.}~\bibnamefont
  {Oud}}, \bibinfo {author} {\bibfnamefont {A.}~\bibnamefont {Panigrahi}},
  \bibinfo {author} {\bibfnamefont {C.-Y.}\ \bibnamefont {Park}}, \bibinfo
  {author} {\bibfnamefont {D.}~\bibnamefont {Polatajko}}, \bibinfo {author}
  {\bibfnamefont {N.}~\bibnamefont {Quesada}}, \bibinfo {author} {\bibfnamefont
  {C.}~\bibnamefont {Roberts}}, \bibinfo {author} {\bibfnamefont
  {N.}~\bibnamefont {Sá}}, \bibinfo {author} {\bibfnamefont {I.}~\bibnamefont
  {Schoch}}, \bibinfo {author} {\bibfnamefont {B.}~\bibnamefont {Shi}},
  \bibinfo {author} {\bibfnamefont {S.}~\bibnamefont {Shu}}, \bibinfo {author}
  {\bibfnamefont {S.}~\bibnamefont {Sim}}, \bibinfo {author} {\bibfnamefont
  {A.}~\bibnamefont {Singh}}, \bibinfo {author} {\bibfnamefont
  {I.}~\bibnamefont {Strandberg}}, \bibinfo {author} {\bibfnamefont
  {J.}~\bibnamefont {Soni}}, \bibinfo {author} {\bibfnamefont {A.}~\bibnamefont
  {Száva}}, \bibinfo {author} {\bibfnamefont {S.}~\bibnamefont {Thabet}},
  \bibinfo {author} {\bibfnamefont {R.~A.}\ \bibnamefont {Vargas-Hernández}},
  \bibinfo {author} {\bibfnamefont {T.}~\bibnamefont {Vincent}}, \bibinfo
  {author} {\bibfnamefont {N.}~\bibnamefont {Vitucci}}, \bibinfo {author}
  {\bibfnamefont {M.}~\bibnamefont {Weber}}, \bibinfo {author} {\bibfnamefont
  {D.}~\bibnamefont {Wierichs}}, \bibinfo {author} {\bibfnamefont
  {R.}~\bibnamefont {Wiersema}}, \bibinfo {author} {\bibfnamefont
  {M.}~\bibnamefont {Willmann}}, \bibinfo {author} {\bibfnamefont
  {V.}~\bibnamefont {Wong}}, \bibinfo {author} {\bibfnamefont {S.}~\bibnamefont
  {Zhang}},\ and\ \bibinfo {author} {\bibfnamefont {N.}~\bibnamefont
  {Killoran}},\ }\bibfield  {title} {\bibinfo {title} {Pennylane: Automatic
  differentiation of hybrid quantum-classical computations},\ }\href
  {https://arxiv.org/abs/1811.04968} {\bibfield  {journal} {\bibinfo  {journal}
  {arXiv:1811.04968}\ } (\bibinfo {year} {2022})}\BibitemShut {NoStop}%
\bibitem [{\citenamefont {Frostig}\ \emph {et~al.}(2018)\citenamefont
  {Frostig}, \citenamefont {Johnson},\ and\ \citenamefont
  {Leary}}]{frostig2018Jax}%
  \BibitemOpen
  \bibfield  {author} {\bibinfo {author} {\bibfnamefont {R.}~\bibnamefont
  {Frostig}}, \bibinfo {author} {\bibfnamefont {M.}~\bibnamefont {Johnson}},\
  and\ \bibinfo {author} {\bibfnamefont {C.}~\bibnamefont {Leary}},\ }\href
  {https://mlsys.org/Conferences/doc/2018/146.pdf} {\bibinfo {title} {Compiling
  machine learning programs via high-level tracing}} (\bibinfo {year}
  {2018})\BibitemShut {NoStop}%
\bibitem [{\citenamefont {Smilkov}\ \emph {et~al.}(2017)\citenamefont
  {Smilkov}, \citenamefont {Thorat}, \citenamefont {Kim}, \citenamefont
  {Viégas},\ and\ \citenamefont {Wattenberg}}]{smilkov2017smoothgrad}%
  \BibitemOpen
  \bibfield  {author} {\bibinfo {author} {\bibfnamefont {D.}~\bibnamefont
  {Smilkov}}, \bibinfo {author} {\bibfnamefont {N.}~\bibnamefont {Thorat}},
  \bibinfo {author} {\bibfnamefont {B.}~\bibnamefont {Kim}}, \bibinfo {author}
  {\bibfnamefont {F.}~\bibnamefont {Viégas}},\ and\ \bibinfo {author}
  {\bibfnamefont {M.}~\bibnamefont {Wattenberg}},\ }\bibfield  {title}
  {\bibinfo {title} {Smoothgrad: removing noise by adding noise},\ }\href
  {https://arxiv.org/abs/1706.03825} {\bibfield  {journal} {\bibinfo  {journal}
  {arXiv:1706.03825}\ } (\bibinfo {year} {2017})}\BibitemShut {NoStop}%
\end{thebibliography}%

\onecolumngrid
\clearpage
\appendix

\setcounter{figure}{0}
\setcounter{theorem}{0}
\setcounter{footnote}{0}

\begin{center}
\large{Supplementary Material for \\ ``Opportunities and limitations of explaining quantum machine learning''
}
\end{center}

\section{Probabilistic functions for machine learning}\label{a:QMLvsPML}

    For didactic purposes, one could consider the closest fully-classical analogous to the hypothesis families of quantum functions introduced in Section~\ref{ss:QML}, as shown in Fig.~\ref{fig:QMLvsPML}.
    Such an exercise would involve, e.g., 
    specifying a parametrization of the set of doubly-stochastic matrices.
    A real-valued function could then arise from a probabilistic classical circuit as a three-step process: (1) start from an easy distribution (like the Kronecker delta on the all-$0$ string), (2) 
    apply a sequence of parametrized doubly-stochastic 
    gates (some of which would be data dependent, and some not), and 
    (3) estimate the expectation value of a fixed function on bits $b$.
    Doubly-stochastic transformations play the role of unitary matrices in this hypothetical scenario.
    Unitary matrices can safely be thought of as the generalization of rotations to complex-valued fields.
    This way, the action of a unitary on the matrix $D_p$ is akin to rotating the vector $p$ in a way that does not incur issues with the signs of the entries or the normalization.
    The main difference between doubly stochastic transformations and unitary transformations is that the former can only increase the Shannon entropy of the distribution (the inverse of a doubly stochastic matrix need not be stochastic), whereas the latter also allow for decreasing Shannon entropy, since the inverse of a unitary 
    matrix is also unitary.
    
    One immediately notices that even though one could define classical ML models based on these probabilistic circuits, these are really far from what practitioners use.
    With this presentation we hope to illustrate the conceptual simplicity of PQC-based QML models, yet in practice the function families PQC give rise to are rich and intricate.

    \begin{figure*}[h]
        \centering
        \includegraphics{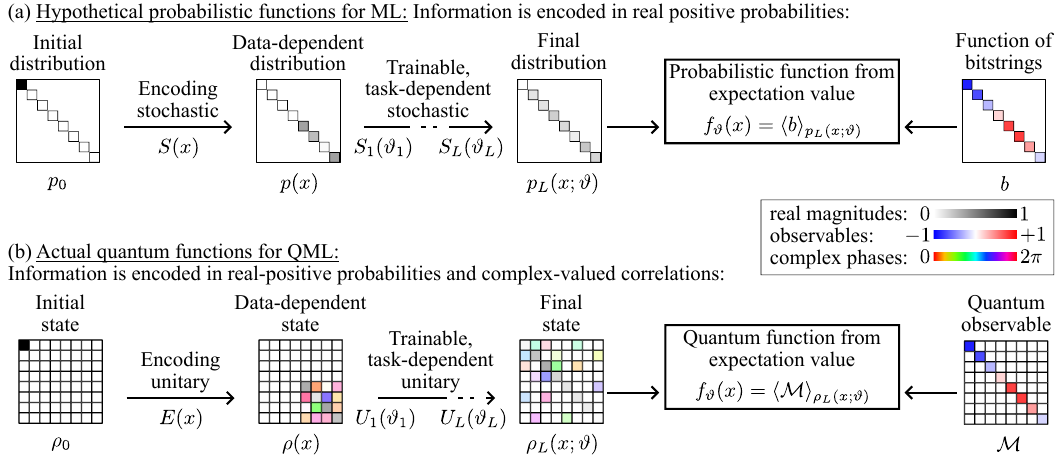}
        \caption{
            Sketch illustrating the relation between the quantum functions used in QML and the conceptually closest fully-classical probabilistic functions one could use for ML.
            (a) Classical probability distributions can be represented as diagonal matrices.
            Then, information on the data or trainable parameters can be encoded as a bi-stochastic transformation, resulting in a parametrized final distribution.
            A real-valued function can be recovered from evaluating the expectation value of any function of bitstrings with respect to the classical final distribution.
            (b) The same idea applies, just that now the classical information is stored in quantum states, represented as positive semi-definite, unit trace, Hermitian matrices.
            The core mechanisms are left unchanged, just now we have access to complex-valued entries outside the main diagonal.
            Again, a real-valued function can be recovered as the expectation value of a fixed quantum observable with respect to the final parametrized quantum state.
        }
        \label{fig:QMLvsPML}
    \end{figure*}

\section{Illustrative presentation of \LRP{}}\label{a:LRP}

    As introduced in Section~\ref{sss:divide}, we consider functions following a simple computational graph
    \begin{center}
        \begin{tikzcd}[column sep=0.5cm, row sep = small]
           x=z_0 \arrow[r, mapsto, "\phi_1"] & z_1 \arrow[r, mapsto, "\phi_2"] & \cdots \arrow[r, mapsto, "\phi_l"] & z_l \arrow[r, mapsto, "\phi_{l+1}"] & \cdots \arrow[r, mapsto, "\phi_L"] & z_L = f(x).
        \end{tikzcd}
    \end{center}
    Then, the first step in \LRP{} is to produce a relevance heat-map for the penultimate intermediate state, which we denote $R^{(L-1)}(x)$.
    This intermediate explanation has the same size and shape as $z_{L-1}$, so it ultimately depends on the structure of the model.
    In this line, we would say $R^{(L)}(x) = f(x)$, which for simplicity we take to be a single real number.
    We would write the explanation $R^{(L-1)}(x)$ as a function of $f(x)=z_L$ and $z_{L-1}$:
    \begin{center}
        \begin{tikzcd}
            f(x) = z_L = R^{(L)}(x) \arrow[r,mapsto] & R^{(L-1)}(x). \\
            z_{L-1} \arrow[u,mapsto,"\phi_{L}"] \arrow[ur, mapsto]
        \end{tikzcd}
    \end{center}
    For ease of notation, from now on we drop the $x$ dependence of the intermediate explanations, but in what follows we talk about explaining the model prediction for a single input (we are studying local explanations).

    Now we have an explanation for the penultimate hidden layer, and our goal is to, from here, produce an explanation for the second-to-last one.
    For the penultimate layer, though, in general we have $z_{L-1}\neq R^{(L-1)}$, which was the case for the last layer.
    This means that, in order to have an intermediate explanation for the next layer $R^{(L-2)}$, we must also take $R^{(L-1)}$ into account:
    \begin{center}
        \begin{tikzcd}[row sep = small]
              & R^{(L-1)} \arrow[r,mapsto] & R^{(L-2)}. \\
            R^{(L)} \arrow[ur,mapsto] \\
            & z_{L-1} \arrow[ul,mapsto,"\phi_{L}"'] \arrow[uu, mapsto] \arrow[uur, mapsto] & z_{L-1} \arrow[l, mapsto, "\phi_{L-1}"'] \arrow[uu, mapsto]
        \end{tikzcd}
    \end{center}
    This idea gives us a blueprint that can be followed throughout: the $l^\text{th}$ intermediate explanation is a simple function of $z_l, z_{l+1}$ and the $(l+1)^\text{th}$ intermediate explanation~\cite{montavon2017explaining}:
    \begin{center}
        \begin{tikzcd}[column sep = 0.5 cm, row sep = small]
              & R^{(L-1)} \arrow[r,mapsto] &\cdots \arrow[r,mapsto] & R^{(1)} \arrow[r, mapsto] & R^{(0)} \\
            z_L=R^{(L)} \arrow[ur,mapsto] \\
            & z_{L-1} \arrow[ul,mapsto,"\phi_{L}"'] \arrow[uu, mapsto] \arrow[uur, mapsto] & \cdots \arrow[l,mapsto, "\phi_{L-1}"'] \arrow[uur, mapsto] & z_1 \arrow[l, mapsto, "\phi_2"'] \arrow[uu, mapsto] \arrow[uur, mapsto] & z_0 \arrow[l, mapsto, "\phi_1"'] \arrow[uu, mapsto].
        \end{tikzcd}
    \end{center}
    At this point we recall that $z_0=x$ and it follows that $R^{(0)}$ is the last explanation, thus the one 
    we produce for this model $E{}(x)=R^{(0)}$.

\section{Derivation of the \taylorinf{} explanation}\label{a:taylor}

    \begin{proposition}[Taylor explanation]\label{prop:taylorrel}
        Let $f\colon\bbR^d\to\bbR$ be a trigonometric polynomial 
        \begin{align}
            f(x) &= \sum_{\omega\in\{0,\pm1\}^d}
        \left(    
a_\omega\cos\langle\omega,x\rangle+b_\omega\sin\langle\omega,x\rangle
\right)
        \end{align}
        of degree at most $1$ over each one of $d$ variables\footnote{
            Here we take multivariate polynomials to also allow for negative frequencies up to minus the degree.
        }.
        Then for any point $\tx\in\bbR^d$, $f$ can be written as
        \begin{align}
            f(x) &= f(\tx) + \sum_{i=1}^d T_i(x,\tx) + \varepsilon,
        \end{align}
        where $T_i$ is defined as 
        \begin{align}
            T_i(x,y) &:= \sin(x_i-y_i)\left.\pder[f(x)]{x_i}\right|_{x=y} - (1-\cos(x_i-y_i))\left.\pder[^2f(x)]{x_i^2}\right|_{x=y},
        \end{align}
        and $\varepsilon$ contains all the cross-derivative contributions, that is, terms of the form \begin{align}
        \left.\left(\prod_{i=1}^d\pder[^{n_i}]{x_i^{n_i}}\right)f(x)\right|_{x=y}
        \end{align}
        with $\lvert n\rvert_0\geq2$.
        Here $n=(n_i)_{i=1}^d$ is the vector of derivative orders, and $\lvert n\rvert_0\geq2$ means that it has at least two different non-$0$ entries.
    \end{proposition}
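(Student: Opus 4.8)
The plan is to expand $f$ in its multivariate Taylor series about $\tx$ and then sort the terms by how many distinct coordinates each one differentiates. Since $f$ is a finite trigonometric polynomial it is entire, and because every frequency satisfies $\omega_i\in\{0,\pm1\}$ each mixed partial derivative $\bigl(\prod_i\partial_i^{n_i}\bigr)f(\tx)$ is bounded uniformly in $n$ by $M:=\sum_\omega(\lvert a_\omega\rvert+\lvert b_\omega\rvert)$. The Taylor series is therefore dominated termwise by $M\prod_i\lvert x_i-\tx_i\rvert^{n_i}/n_i!$, which sums to $M\prod_i\e^{\lvert x_i-\tx_i\rvert}<\infty$, so it converges absolutely and its terms may be regrouped freely. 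Writing
\begin{align}
    f(x) &= \sum_{n\in\bbN_0^d}\left.\left(\prod_{i=1}^d\partial_i^{n_i}\right)f(x)\right|_{x=\tx}\prod_{i=1}^d\frac{(x_i-\tx_i)^{n_i}}{n_i!},
\end{align}
I would partition the sum by $\lvert n\rvert_0$: the index $n=0$ gives the constant $f(\tx)$; every index with $\lvert n\rvert_0\geq2$ is absorbed into $\varepsilon$ by definition; and the remaining single-coordinate indices ($\lvert n\rvert_0=1$) are what I must show sum to $\sum_{i=1}^d T_i(x,\tx)$.

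First I would fix $i$ and collect the terms supported on the single index $i$, namely $\sum_{k\geq1}\partial_i^k f(\tx)(x_i-\tx_i)^k/k!$. The key observation is that these are exactly the constant-subtracted Taylor coefficients of the univariate restriction $g_i(t):=f(\tx+t e_i)$, with $e_i$ the $i$-th standard basis vector, because $\partial_i^k f(\tx)=g_i^{(k)}(0)$ for every $k\geq1$. Hence the single-coordinate contribution at index $i$ equals $g_i(x_i-\tx_i)-g_i(0)$, and it remains only to evaluate this in closed form.

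The decisive structural input is the degree-$1$ hypothesis. Writing $\langle\omega,\tx+t e_i\rangle=\langle\omega,\tx\rangle+t\omega_i$ with $\omega_i\in\{0,\pm1\}$ and applying the angle-addition identities, every summand of $f$ restricts along $e_i$ to an expression of the form $A+B\cos t+C\sin t$, so that $g_i(t)=A_i+B_i\cos t+C_i\sin t$ for constants depending on $\tx$. I would then read off $C_i=g_i'(0)=\partial_i f(\tx)$ and $B_i=-g_i''(0)=-\partial_i^2 f(\tx)$, giving
\begin{align}
    g_i(x_i-\tx_i)-g_i(0) &= C_i\sin(x_i-\tx_i)+B_i\bigl(\cos(x_i-\tx_i)-1\bigr),
\end{align}
which upon substituting these two coefficients is precisely $T_i(x,\tx)$; summing over $i$ then closes the argument. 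The only genuinely delicate point I anticipate is the bookkeeping that legitimizes the regrouping and the clean split between single-coordinate terms and the cross terms collected in $\varepsilon$. The frequency restriction $\omega_i\in\{0,\pm1\}$ is exactly what tames this: it forces each univariate restriction $g_i$ to collapse from an a priori infinite power series into the three-term closed form above, so that the per-coordinate infinite sum telescopes into the finite trigonometric expression $T_i$ rather than leaving an analytic tail.
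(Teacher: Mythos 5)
Your proof is correct, and its skeleton is the same as the paper's: expand $f$ in its multivariate Taylor series about $\tx$, partition the multi-indices by how many distinct coordinates are differentiated, absorb everything with $\lvert n\rvert_0\geq2$ into $\varepsilon$, and evaluate the single-coordinate series in closed form. Where you genuinely differ is in that last step. The paper stays inside the series: it derives the derivative-periodicity identities $\partial_i^{2k}f=(-1)^{k+1}\partial_i^2 f$ and $\partial_i^{2k+1}f=(-1)^{k}\partial_i f$ (from $\omega_i^3=\omega_i$), and then resums the even and odd sub-series into $(1-\cos)$ and $\sin$. You instead restrict $f$ to the line $t\mapsto\tx+te_i$, observe that the degree-$1$ hypothesis collapses this restriction to $g_i(t)=A_i+B_i\cos t+C_i\sin t$, and read off $B_i$ and $C_i$ by matching first and second derivatives at $0$. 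The two arguments are equivalent in content, but yours trades an infinite resummation for a three-coefficient matching, and your explicit absolute-convergence bound (uniformly bounded mixed partials, hence termwise domination by $M\prod_i \e^{\lvert x_i-\tx_i\rvert}$) supplies the justification for freely regrouping the multivariate series --- a step the paper's manipulation performs silently.

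One point you should be aware of: your derivation produces a \emph{plus} sign in front of the second-derivative term, $T_i=\sin(x_i-\tx_i)\,\partial_i f(\tx)+(1-\cos(x_i-\tx_i))\,\partial_i^2 f(\tx)$, whereas the proposition as printed defines $T_i$ with a \emph{minus} sign there. Your sign is the correct one: it agrees with the final line of the paper's own proof and with the definition of \taylorinf{} in Section~\ref{ss:taylorinf}, and a one-dimensional sanity check confirms it (take $f=\cos$ and $\tx=0$; then $\partial_1 f(0)=0$, $\partial_1^2f(0)=-1$, and $f(x)-f(0)=\cos x-1=(1-\cos x)\,\partial_1^2f(0)$, which forces the plus sign). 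So your closing claim that the expression ``is precisely $T_i$'' holds for the corrected statement; the minus sign in the proposition is a typo in the paper, not a gap in your argument.
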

    \begin{proof}
        We show the identity directly by using the general formula for multivariate Taylor expansion 
        \begin{align}
            f(x) &= \sum_{\lvert n\rvert_1=0}^\infty \left.\pder[^{\vert n\rvert_1}f(x)]{x_1^{n_1}\ldots\partial x_d^{n_d}}\right|_{x=\tx} \prod_{i=1}^d\frac{(x_i-\tx_i)^{n_i}}{n_i!}
        \end{align}
        around an arbitrary point $\tx\in\bbR^d$.
        Again, we call $n=(n_i)_i\in\bbN^d$ the vector of derivative orders, and $\lvert n\rvert_1=\sum_{i=1}^d n_i$ the total derivative order across all variables.
        We use the known identities for even and odd order derivatives of trigonometric polynomials, to get
        \begin{align}
            \pder[^{2k}f]{x_i^{2k}} &= (-1)^{k+1} \pder[^2 f]{x_i^2}, \text{ for all } k\in\bbN\setminus\{0\}, \\
            \pder[^{2k+1}f]{x_i^{2k+1}} &= (-1)^k\pder[f]{x_i}, \text{ for all } k\in\bbN.
        \end{align}
        For these identities, we need only use the fact that $\omega_i^3=\omega_i$ for each component $\omega_i$ of each frequency vector $\omega\in\{0,\pm1\}^d$.
        Now we re-organize the terms of the Taylor formula as
        \begin{align}
            f(x) &= \left.\left[\prod_{i=1}^d\left(\sum_{n_i=0}^\infty\frac{(x_i-\tx_i)^{n_i}}{n_i!}\pder[^{n_i}]{x_i^{n_i}}\right)\right]f(x)\right|_{x=\tx}.
        \end{align}
        We split each of the sums of operators into 
        three separate terms
        \begin{align}
            \sum_{n_i=0}^\infty\frac{(x_i-\tx_i)^{n_i}}{n_i!}\pder[^{n_i}]{x_i^{n_i}} &= \bbI + \sum_{n_i=1}^\infty \frac{(x_i-\tx_i)^{2n_i}}{(2n_i)!}\pder[^{2n_i}]{x_i^{2n_i}} + \sum_{n_i=0}^\infty \frac{(x_i-\tx_i)^{2n_i+1}}{(2n_i+1)!}\pder[^{2n_i+1}]{x_i^{2n_i+1}}.
        \end{align}
        With this grouping of terms, we focus on the terms which overall contain derivatives (of any order) with respect to a single variable, and call all other terms $E$, to obtain
        \begin{align}
            f(x) &= \left.\left[\prod_{i=1}^d\left(\bbI + \sum_{n_i=1}^\infty \frac{(x_i-\tx_i)^{2n_i}}{(2n_i)!}\pder[^{2n_i}]{x_i^{2n_i}} + \sum_{n_i=0}^\infty \frac{(x_i-\tx_i)^{2n_i+1}}{(2n_i+1)!}\pder[^{2n_i+1}]{x_i^{2n_i+1}}\right)\right]f(x)\right|_{x=\tx} \\
            \nonumber
            &= \left.\left[\bbI+\sum_{i=1}^d\left(\sum_{n_i=1}^\infty \frac{(x_i-\tx_i)^{2n_i}}{(2n_i)!}\pder[^{2n_i}]{x_i^{2n_i}} + \sum_{n_i=0}^\infty \frac{(x_i-\tx_i)^{2n_i+1}}{(2n_i+1)!}\pder[^{2n_i+1}]{x_i^{2n_i+1}}\right)+E\right]f(x)\right|_{x=\tx} \\
            \nonumber
            &= f(\tx) + \sum_{i=1}^d\left(\sum_{n_i=1}^\infty \frac{(x_i-\tx_i)^{2n_i}}{(2n_i)!}\left.\pder[^{2n_i}f(x)]{x_i^{2n_i}}\right|_{x=\tx} + \sum_{n_i=0}^\infty \frac{(x_i-\tx_i)^{2n_i+1}}{(2n_i+1)!} \left.\pder[^{2n_i+1}f(x)]{x_i^{2n_i+1}}\right|_{x=\tx}\right) + \left.E(f(x))\right|_{x=\tx}.
            \nonumber
        \end{align}
        At this point, we only need to use the identities we introduced above, and 
        we call $\left.E(f(x))\right|_{x=\tx}=\varepsilon$,
        \begin{align}
            f(x) &= f(\tx) + \sum_{i=1}^d\left((1-\cos(x_i-\tx_i)) \left.\pder[^2f(x)]{x_i^2}\right|_{x=\tx} + \sin(x_i-\tx_i) \left.\pder[f(x)]{x_i}\right|_{x=\tx}\right) + \varepsilon \\
            &= f(\tx) + \sum_{i=1}^d T_i(x,\tx) + \varepsilon,
            \nonumber
        \end{align}
        which is what we had to show.
    \end{proof}

\section{Step-by-step derivation of a digital twin neural network}\label{a:twiNN}

    We consider an encoding-first parametrized quantum circuit as a three step process:
    \begin{enumerate}
        \item Create a data-dependent quantum state $x\mapsto\rho(x)$.
        \item Create a task-dependent parametrized observable $\vartheta\mapsto\calM(\vartheta)$.
        \item Take their inner product $f(x, \vartheta) = \Tr\{\rho(x)\calM(\vartheta)\}$.
    \end{enumerate}

    In this section, we derive a neuralization of such a PQC, in what we have dubbed a digital twiNN (twin Neural Network).
    Our approach corresponds to rewriting the PQC as a Neural Network, which involves rethinking the quantum objects $\rho(x)$ and $\calM(\vartheta)$ as real-valued, large matrices.
    Again, we consider a three step process:
    \begin{enumerate}
        \item Create a data-dependent feature matrix $x\mapsto A(x)$.
        \item Create a task-dependent linear layer $\vartheta\mapsto M(\vartheta)$.
        \item Take their inner product $f(x, \vartheta)=\frac{1}{2}\Tr\{ A(x)M(\vartheta)\}$.
    \end{enumerate}

    We first introduce the recipe to construct $A$ and $M$, and then show that the PQC and the twiNN indeed produce the same input-parameters-output relations.
    Keeping implementation in mind, we have different needs that $A$ and $M$ must fulfill.
    For $A$, we would like to have a closed-form expression that depends explicitly on $x$ for each entry of the feature map.
    For $M$, we are satisfied with a formula that depends explicitly on each entry of $\calM(\vartheta)$, thus being satisfied with only implicit dependence on the entries of $\vartheta$.

    We first introduce a natural way of expanding complex-valued matrices into slightly larger real-valued matrices in a way that respects the inner products.
    For an $N$-dimensional matrix $\bbC^{N\times N}$, consider the map
    \begin{align}
        \sfM\colon\bbC^{N\times N} &\to\bbR^{2N\times2N},\\
        U &\mapsto \begin{pmatrix} \Re(U) & -\Im(U) \\ \Im(U) & \Re(U) \end{pmatrix},
    \end{align}
    where $\Re$ and $\Im$ stand for the entrywise real and imaginary parts, respectively.

    \begin{lemma}[The map $\sfM$ is an isomorphism]\label{al:isomorphism}
        For any $U,V\in\bbC^{N\times N}$ complex matrices and $UV$ their product, it follows that $\sfM(U)\sfM(V) = \sfM(UV)$.
    \end{lemma}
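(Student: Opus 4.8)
The plan is to prove the homomorphism property by a direct block computation, reducing everything to the elementary fact that the $2\times2$ real representation $a+\mathrm{i}b\mapsto\bigl(\begin{smallmatrix} a & -b \\ b & a\end{smallmatrix}\bigr)$ of $\bbC$ is multiplicative, lifted to matrix entries. First I would split each argument into its real and imaginary parts, writing $U = A + \mathrm{i}B$ and $V = C + \mathrm{i}D$ with $A,B,C,D\in\bbR^{N\times N}$, so that by definition
\[
\sfM(U) = \begin{pmatrix} A & -B \\ B & A \end{pmatrix}, \qquad \sfM(V) = \begin{pmatrix} C & -D \\ D & C \end{pmatrix}.
\]

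Next I would compute the complex product $UV = (A+\mathrm{i}B)(C+\mathrm{i}D) = (AC - BD) + \mathrm{i}(AD + BC)$, which identifies $\Re(UV) = AC - BD$ and $\Im(UV) = AD + BC$, and hence fixes the target
\[
\sfM(UV) = \begin{pmatrix} AC - BD & -(AD + BC) \\ AD + BC & AC - BD \end{pmatrix}.
\]
The core step is then to multiply the two block matrices $\sfM(U)\sfM(V)$ out and verify the four $N\times N$ blocks agree: both diagonal blocks come out to $AC - BD$, the top-right block to $-AD - BC$, and the bottom-left block to $AD + BC$, matching $\sfM(UV)$ block by block.

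The computation is routine, so the only real subtlety to track is the non-commutativity of matrix multiplication: because $A,B,C,D$ need not commute, one must keep every product in its correct left-to-right order throughout (so $AC$, never $CA$) and keep the sign bookkeeping from the off-diagonal $-B$ and $-D$ consistent. Conceptually the identity is forced, which serves as a sanity check: $\sfM(U)$ is precisely the real matrix of the $\bbR$-linear map ``multiply by $U$'' on $\bbC^N\cong\bbR^{2N}$ under the coordinate splitting into real and imaginary parts, and since multiplication by $UV$ is the composition of multiplication by $V$ followed by multiplication by $U$, functoriality of passing to matrices yields $\sfM(UV) = \sfM(U)\sfM(V)$ with no computation at all. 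I would present the direct block multiplication as the self-contained proof and note this representation-theoretic viewpoint as the reason the blocks must agree.
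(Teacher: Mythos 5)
Your proposal is correct and follows essentially the same route as the paper's proof: decompose $U$ and $V$ into real and imaginary parts, identify $\Re(UV)$ and $\Im(UV)$ from the complex product, and verify the four blocks of $\sfM(U)\sfM(V)$ match $\sfM(UV)$ by direct block multiplication. Your closing observation that $\sfM(U)$ is the matrix of the $\bbR$-linear map \enquote{multiply by $U$} on $\bbC^N\cong\bbR^{2N}$, so the identity follows from functoriality, is a nice conceptual addition not present in the paper, but the core argument is the same.
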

    \begin{proof}
        We show the identity directly.
        First we identify the real and imaginary parts of $UV$ based on those of $U$ and $V$, and then we show that the block structure of $\sfM(UV)$ corresponds to the correct parts.

        Consider the expansion of $U$ and $V$ in their real and imaginary parts $U=\Re(U)+i\Im(U)$, $V=\Re(V)+i\Im(V)$.
        Then take their product $UV = \Re(U)\Re(V) - \Im(U)\Im(V) + i(\Re(U)\Im(V) + \Im(U)\Re(V))$, from where it follows that the real and imaginary parts of the product are $\Re(UV) = \Re(U)\Re(V) - \Im(U)\Im(V)$ and $\Im(UV) = \Re(U)\Im(V) + \Im(U)\Re(V)$.
        Next, consider the 
        matrix-matrix multiplication
        \begin{align}
            \sfM(U)\sfM(V) &= \begin{pmatrix} \Re(U) & -\Im(U) \\ \Im(U) & \Re(U) \end{pmatrix} \begin{pmatrix} \Re(V) & -\Im(V) \\ \Im(V) & \Re(V) \end{pmatrix} 
            \\
            \nonumber
            &= \begin{pmatrix} \Re(U)\Re(V) - \Im(U)\Im(V) & -\Re(U)\Im(V) -\Im(U)\Re(V) \\ \Re(U)\Im(V) + \Im(U)\Re(V) & \Im(U)\Im(V) + \Re(U)\Re(V) \end{pmatrix} \\
              \nonumber
            &= \begin{pmatrix} \Re(UV) & -\Im(UV) \\ \Im(UV) & \Re(UV) \end{pmatrix} = \sfM(UV).
              \nonumber
        \end{align}
        Indeed, we have shown that $\sfM(U)\sfM(V)=\sfM(UV)$.
    \end{proof}

    \begin{lemma}[Trace of $\sfM(H)$]\label{al:trace}
        For any $H\in\bbC^{N\times N}$ Hermitian, $H=H\dagg$, it holds that $\Tr\{\sfM(H)\} = 2 \Tr\{H\}$.
    \end{lemma}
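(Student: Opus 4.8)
The plan is to exploit the block structure of $\sfM(H)$ directly, using the elementary fact that the trace of a matrix written in $2\times2$ block form is the sum of the traces of its diagonal blocks. By the definition of the map,
\begin{align}
    \sfM(H) = \begin{pmatrix} \Re(H) & -\Im(H) \\ \Im(H) & \Re(H) \end{pmatrix},
\end{align}
so both diagonal blocks equal $\Re(H)$. First I would simply read off
\begin{align}
    \Tr\{\sfM(H)\} = \Tr\{\Re(H)\} + \Tr\{\Re(H)\} = 2\,\Tr\{\Re(H)\}.
\end{align}

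The only substantive step is to relate $\Tr\{\Re(H)\}$ back to $\Tr\{H\}$, and this is where Hermiticity enters. Since $H=H\dagg$, each diagonal entry satisfies $H_{ii}=\overline{H_{ii}}$ and is therefore real. Consequently the diagonal of $\Im(H)$ vanishes and the diagonal of $\Re(H)$ agrees entrywise with the diagonal of $H$. Because the trace only sees diagonal entries, this yields $\Tr\{\Re(H)\}=\sum_i \Re(H_{ii})=\sum_i H_{ii}=\Tr\{H\}$, and combining with the previous display gives $\Tr\{\sfM(H)\}=2\,\Tr\{H\}$.

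I do not expect any genuine obstacle: the result is a one-line computation once the block structure is invoked. The only conceptual point worth flagging is precisely where the Hermiticity hypothesis is needed, namely to guarantee that the diagonal is real so that passing to the real part leaves the trace unchanged; for a general (non-Hermitian) $H$ the same argument would only deliver $\Tr\{\sfM(H)\}=2\,\real\,\Tr\{H\}$. I would therefore keep the write-up short, emphasising the block-trace identity and the reality of the diagonal as the two ingredients.
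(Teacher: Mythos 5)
Your proof is correct and follows essentially the same route as the paper's: both read off the block-diagonal structure of $\sfM(H)$ to get $2\Tr\{\Re(H)\}$ and then use Hermiticity (real diagonal) to identify $\Tr\{\Re(H)\}$ with $\Tr\{H\}$. Your closing remark that a general $H$ would only yield $2\real\Tr\{H\}$ is a nice additional observation, but the argument itself matches the paper's.
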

    \begin{proof}
        We show this quickly and directly.
        One basic property of Hermitian matrices is that their main diagonal is real valued. Since the trace only involves the entries on the main diagonal, it follows that $\Tr\{H\}=\Tr\{\Re(H)\}$.
        From this, if we look at the block structure of $\sfM(H)$, it follows that the main diagonal of $\sfM(H)$ is nothing but two copies of the main diagonal of $H$, one after the other.
        And, hence, it is clear that $\Tr\{\sfM(H)\} = 2 \Tr\{\Re(H)\} = 2\Tr\{H\}$, thus proving the lemma.
    \end{proof}

    Next, for the feature map $A(x)$.
    Without loss of generality, we assume the PQC encodes $d$-dimensional data $x\in\bbR^d$ in $d$ qubits, each component being introduced once as the parameter of a parametrized Pauli-$X$ rotation $R_X(x_i)$, on the corresponding qubit.

    \begin{lemma}[Entries of $A$]\label{al:A_entries}
        We have efficient formulas for computing any entry of $A(x)$.
    \end{lemma}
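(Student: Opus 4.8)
The plan is to recall that the twiNN construction identifies the feature matrix with the real embedding of the encoded state, $A(x)=\sfM(\rho(x))$, so that every entry of $A(x)$ is just the real or imaginary part of a corresponding entry of $\rho(x)$, placed according to the four-block layout of $\sfM$. It therefore suffices to exhibit a closed-form, efficiently evaluable expression for each entry $\rho(x)_{r,c}$ with $r,c\in\{0,1\}^d$. First I would use that, in the interaction picture, $\rho(x)=E(x)\rho_0 E\dagg(x)$ is the purely encoded state, with $E(x)=\bigotimes_{i=1}^d R_X(x_i)$ acting on the standard product initial state $\rho_0=(\ketbra{0}{0})^{\otimes d}$. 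Because both the encoding and the initial state factorize across qubits, $\rho(x)$ is itself a product state, $\rho(x)=\bigotimes_{i=1}^d\sigma(x_i)$, where $\sigma(x_i)=R_X(x_i)\ketbra{0}{0}R_X\dagg(x_i)$ is a single-qubit density matrix.

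Next I would compute the single-qubit factor explicitly. Writing $R_X(\theta)=\cos(\theta/2)\bbI-i\sin(\theta/2)X$ and applying it to $\ket{0}$ yields the elementary closed form
\begin{align}
\sigma(\theta) &= \tfrac12\left(\bbI - \sin\theta\, Y + \cos\theta\, Z\right),
\end{align}
whose four entries are trivially computable trigonometric expressions in the single variable $\theta$. Identifying the global row and column indices of a $2^d$-dimensional matrix with bit-strings $r=(r_1,\dots,r_d)$ and $c=(c_1,\dots,c_d)$, the product structure gives the factorized formula
\begin{align}
\rho(x)_{r,c} &= \prod_{i=1}^d \sigma(x_i)_{r_i,c_i},
\end{align}
so each entry of $\rho(x)$ is a product of $d$ explicit single-qubit entries. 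Each factor costs $O(1)$ to evaluate, hence any prescribed entry is computable in time $O(d)$, i.e.\ linearly in the number of qubits. Reading off the matching entry of $A(x)=\sfM(\rho(x))$ then amounts to selecting the appropriate real or imaginary part and its block position, at no asymptotic overhead, which establishes the efficient formula.

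The construction is elementary, so there is no deep obstacle; the hard part is purely the bookkeeping. One must track the bijection between the linear index of the $2^d\times 2^d$ matrix (respectively $2^{d+1}\times 2^{d+1}$ for $A$) and the per-qubit binary indices $r_i,c_i$, and correctly locate where $\Re(\rho(x)_{r,c})$ and $\Im(\rho(x)_{r,c})$ land among the four blocks of $\sfM(\rho(x))$. I would also flag that the clean factorization relies on the encoding being a tensor product of single-qubit rotations on a product initial state; any entangling gates inside the encoding layer would spoil this product form, and the same efficient-entry claim would then require a different (for instance, tensor-network) justification. Under the stated single-upload $R_X$ encoding, however, the product formula above settles the lemma at once.
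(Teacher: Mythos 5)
Your proposal is correct and follows essentially the same route as the paper's proof: both exploit the tensor-product factorization $\rho(x)=\bigotimes_{j}R_X(x_j)\ketbra{0}{0}R_X\dagg(x_j)$, compute the single-qubit density matrix explicitly (your Pauli form $\tfrac12(\bbI-\sin\theta\,Y+\cos\theta\,Z)$ is entrywise identical to the paper's $\tfrac12\bigl(\begin{smallmatrix}1+\cos x_j & i\sin x_j\\ -i\sin x_j & 1-\cos x_j\end{smallmatrix}\bigr)$), obtain each entry of $\rho(x)$ as a product of $d$ single-qubit entries computable in $O(d)$ time, and read off the entries of $A(x)=\sfM(\rho(x))$ from real and imaginary parts. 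The paper merely carries the bookkeeping you flag further, packaging the per-qubit entries into an explicit look-up function $g_j$ with an $i$-power phase so that each entry of $\rho(x)$ is manifestly pure real or pure imaginary according to the parity of $k\oplus l$, but this is a presentational refinement rather than a different argument.
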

    \begin{proof}
        We give the formula in the following.
        The encoding gate $U(x)$ is of the form $U(x) = \bigotimes_{j=1}^d R_X(x_j)$ and, when applied on the $\lvert0\rangle$ state vector, it produces a data-dependent state as 
        \begin{align}
            U(x)\lvert0\rangle\!\langle0\rvert U(x)\dagg &= \bigotimes_{j=1}^d R_X(x_j)\lvert0\rangle\!\langle0\rvert R_X(x_j)\dagg \\
            \nonumber
            &= \bigotimes_{j=1}^d \begin{pmatrix} \cos\frac{x_j}{2} \\ -i\sin\frac{x_j}{2}\end{pmatrix}\begin{pmatrix}\cos\frac{x_j}{2} & i\sin\frac{x_j}{2} \end{pmatrix} \\
             \nonumber
            &= \bigotimes_{j=1}^{d}\begin{pmatrix}\cos^2\frac{x_j}{2} & i\cos\frac{x_j}{2}\sin\frac{x_j}{2} \\ -i\cos\frac{x_j}{2}\sin\frac{x_j}{2} & \sin^2\frac{x_j}{2}\end{pmatrix} \\
             \nonumber
            &= \bigotimes_{j=1}^d\begin{pmatrix}\frac{1+\cos x_j}{2} & \frac{i\sin x_j}{2} \\ -\frac{i\sin x_j}{2} & \frac{1-\cos x_j}{2}\end{pmatrix}.
             \nonumber
        \end{align}
        At this point, we introduce short-hand notation
        \begin{align}
            c^+_j &:= 1+\cos x_j, \\
            c^-_j &:= 1-\cos x_j ,\\
            s_j &:= \sin x_j.
        \end{align}
        We also introduce bit-string indices $k,l\in\{0,1\}^d$, whose entries are indexed by $j$, $k=(k_j)_j$, with $k_j\in\{0,1\}$, and similarly for $l$.
        We make use of the kronecker delta $\delta_{a,b}$, which is the indicator function for the condition $a=b$.
        Introducing these new symbols in the equations we obtain a formula for the entries of $\rho(x)$ which should be particularly friendly to implement on a classical computer,
        \begin{align}
            U(x)\lvert0\rangle\!\langle0\rvert U(x)\dagg &= \ldots = \frac{1}{2^d}\bigotimes_{j=1}^d \begin{pmatrix} c^+_j & is_j \\ -is_j & c^-_j\end{pmatrix} \\
             \nonumber
            &= \frac{1}{2^d}\bigotimes_{j=1}^d\left[c^+_j\lvert0\rangle\!\langle0\rvert + c^-_j\lvert1\rangle\!\langle1\rvert +is_j(\lvert0\rangle\!\langle1\rvert + \lvert1\rangle\!\langle0\rvert)\right] \\
             \nonumber
            &= \frac{1}{2^d}\bigotimes_{j=1}^d \left(\sum_{k_j,l_j=0}^1\left[\delta_{k_j,l_j}(c^+_j\delta_{k_j,0}+c^-_j\delta_{k_j,1}) + (1-\delta_{k_j,l_j})(-1)^{k_j}s_j\right] \lvert k_j\rangle\!\langle l_j\rvert\right) \\
            &= \frac{1}{2^d}\sum_{k,l\in\{0,1\}^d}\left(\prod_{j=1}^d\delta_{k_j,l_j}(c^+_j\delta_{k_j,0}+c^-_j\delta_{k_j,1}) + (1-\delta_{k_j,l_j})(-1)^{k_j}s_j\right)\lvert k\rangle\!\langle l\rvert \\
            &= \rho(x).
             \nonumber
        \end{align}
    
        In order to obtain $A(x)=\sfM(\rho(x))$, we need to separate the real and imaginary parts of $\rho(x)$.
        Luckily, each entry of $\rho(x)$ is either 
        pure real or pure imaginary, with no mixed complex entries.
        Upon visual inspection of the formula for the entries of $\rho(x)$, we observe a pattern that relates each of the trigonometric functions $c^+_j, c^-_j, s_j$ to each of the computational basis elements $\lvert 0/1\rangle\!\langle 0/1\rvert$.
        We want to introduce a map which captures this pattern in the form of a look-up table
        \begin{align}
            \lvert0_j\rangle\!\langle0_j\vert &\mapsto c^+_j ,\\
            \lvert0_j\rangle\!\langle1_j\vert &\mapsto is_j ,\\
            \lvert1_j\rangle\!\langle0_j\vert &\mapsto -is_j ,\\
            \lvert1_j\rangle\!\langle1_j\vert &\mapsto c^-_j.
        \end{align}
        We call it $g_j$ for ease of notation, and its formula reads as 
        \begin{align}
            g_j\colon\{0,1\}^2 &\to\{c^+_j,c^-_j,s_j\} ,\\
            k_j,l_j &\mapsto g_j(k_j,l_j) = i^{3k_j+l_j}\left(\delta_{k_j,l_j}+\cos(x_j-\frac{\pi}{2}(\delta_{k_j,1}+\delta_{l_j,1}))\right).
        \end{align}
        Similarly, we can introduce a global map $g$ which corresponds to $g_j$ on each qubit, so for any $k,l\in\{0,1\}^d$,
        \begin{align}
            g(k,l) &= \prod_{j=1}^d g_j(k_j,l_j) \\
            \nonumber
            &= \prod_{j=1}^d i^{3k_j+l_j}\left(\delta_{k_j,l_j}+\cos(x_j-\frac{\pi}{2}(\delta_{k_j,1}+\delta_{l_j,1}))\right) \\
             \nonumber
            &= i^{\sum_{j=1}^d3k_j+l_j}\prod_{j=1}^d\left(\delta_{k_j,l_j}+\cos(x_j-\frac{\pi}{2}(\delta_{k_j,1}+\delta_{l_j,1}))\right) \\
             \nonumber
            &= i^{3\lvert k\rvert + \lvert l\rvert}\prod_{j=1}^d\left(\delta_{k_j,l_j}+\cos(x_j-\frac{\pi}{2}(\delta_{k_j,1}+\delta_{l_j,1}))\right),
             \nonumber
        \end{align}
        We used the notation $\lvert k\rvert$ for the parity of the bit-string $k$ (the number of $1$s).
        With this we reached yet another implementation-friendly formula for the entries of $\rho(x)$, and in fact one where we can very quickly extract the real and imaginary parts
        \begin{align}
            \rho(x) &= \sum_{k,l\in\{0,1\}^d} \frac{g(k,l)}{2^d} \lvert k\rangle\!\langle l\rvert ,\\
            \Re(\rho(x)) &= \frac{1}{2^d} \sum_{k,l\in\{0,1\}^d} \Re(g(k,l)) \lvert k\rangle\!\langle l\rvert \\
            \nonumber
            &= \frac{1}{2^d} \sum_{k,l\in\{0,1\}^d} \delta_{\lvert k\oplus l\rvert,0} g(k,l) \lvert k\rangle\!\langle l\rvert ,\\
            \Im(\rho(x)) &= \frac{1}{2^d} \sum_{k,l\in\{0,1\}^d} \Im(g(k,l)) \lvert k\rangle\!\langle l\rvert \\
            \nonumber
            &= \frac{1}{2^d} \sum_{k,l\in\{0,1\}^d} \delta_{\lvert k\oplus l\rvert,1} (-i) g(k,l) \lvert k\rangle\!\langle l\rvert.
            \nonumber
        \end{align}
        What these formulas relate is that the \enquote{global even} terms (taking addition modulo $2$ for both bit-strings $k\oplus l$) are pure real, and the \enquote{global odd} terms are pure imaginary.
    
        Bringing everything together, we are left with a formula for computing each entry of the data-dependent matrix $A(x)$, that corresponds to the first layer of our twiNN.
        If we call $G(x)$ the matrix whose entries are $(g(k,l))_{k,l\in\{0,1\}^d}$, we obtain
        \begin{align}
            A(x) &= \frac{1}{4^n}\begin{pmatrix}\Re(G(x)) & -\Im(G(x)) \\ \Im(G(x)) & \Re(G(x)) \end{pmatrix}.
        \end{align}

        For any entry of $A(x)$, we need only compute $g(k,l)$ for the corresponding bit-strings $k,l\in\{0,1\}^d$, whose complexity is linear in $d$, so the formula is efficient for any entry.
    \end{proof}

    Finally, for the task-dependent linear layer, we take $M(\vartheta)=\sfM(\calM(\vartheta))$.
    In order to compute the entries of $\calM(\vartheta)$, we need to have a decomposition of $\calM(\vartheta)$ as 
    a fixed observable and a variational circuit $\calM(\vartheta)=V(\vartheta)\dagg\calM_0V(\vartheta)$.
    This makes sense for example if we assume the fixed Hamiltonian admits an efficient classical description, and if the variational block is composed of several parametrized local unitaries, for which we also have efficient classical descriptions.

    \begin{lemma}[Expectation values]
        With the above definitions, it holds that 
        \begin{align}
            \Tr\{\rho(x)\calM(\vartheta)\} &= \frac{1}{2} \Tr\{A(x) M(\vartheta)\}.
        \end{align}
    \end{lemma}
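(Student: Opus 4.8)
The plan is to combine the two structural lemmas just established, namely that $\sfM$ respects matrix multiplication (Lemma~\ref{al:isomorphism}) and that it doubles the trace (Lemma~\ref{al:trace}). First I would invoke the homomorphism property to collapse the product of the two real blocks into a single application of $\sfM$,
\begin{align}
    A(x)M(\vartheta) &= \sfM(\rho(x))\sfM(\calM(\vartheta)) = \sfM(\rho(x)\calM(\vartheta)),
\end{align}
so that $\Tr\{A(x)M(\vartheta)\} = \Tr\{\sfM(\rho(x)\calM(\vartheta))\}$. It then only remains to relate the trace of $\sfM(\rho(x)\calM(\vartheta))$ back to $\Tr\{\rho(x)\calM(\vartheta)\}$.

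The hard part will be that Lemma~\ref{al:trace} is stated only for \emph{Hermitian} arguments, whereas the product $\rho(x)\calM(\vartheta)$ of two Hermitian matrices is in general \emph{not} Hermitian (it is Hermitian only when the two factors commute). I would therefore not apply Lemma~\ref{al:trace} as a black box, but instead read the trace directly off the block structure of $\sfM$, exactly as in the proof of that lemma. For any complex matrix $P\in\bbC^{N\times N}$, the main diagonal of $\sfM(P)$ consists of two consecutive copies of the diagonal of $\Re(P)$, which yields the slightly more general identity
\begin{align}
    \Tr\{\sfM(P)\} &= 2\Tr\{\Re(P)\} = 2\Re(\Tr\{P\}),
\end{align}
reducing to Lemma~\ref{al:trace} precisely when $\Tr\{P\}$ is already real. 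Setting $P = \rho(x)\calM(\vartheta)$ gives $\Tr\{A(x)M(\vartheta)\} = 2\Re(\Tr\{\rho(x)\calM(\vartheta)\})$.

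The final step is to dispose of the real-part operation. Here I would recall that $\Tr\{\rho(x)\calM(\vartheta)\}$ is the expectation value of a Hermitian observable in a quantum state, which was already observed in Section~\ref{ss:QML} to be real-valued; explicitly, cyclicity of the trace together with $\rho\dagg=\rho$ and $\calM\dagg=\calM$ gives $\Tr\{\rho\calM\}\conj = \Tr\{(\rho\calM)\dagg\} = \Tr\{\calM\rho\} = \Tr\{\rho\calM\}$, so the trace is its own conjugate and hence real. Consequently $\Re(\Tr\{\rho(x)\calM(\vartheta)\}) = \Tr\{\rho(x)\calM(\vartheta)\}$, and dividing by two delivers the claimed identity $\Tr\{\rho(x)\calM(\vartheta)\} = \tfrac{1}{2}\Tr\{A(x)M(\vartheta)\}$. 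I expect the only genuine subtlety to be this non-Hermiticity of the product, with everything else amounting to a direct substitution of the two lemmas.
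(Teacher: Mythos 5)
Your proof is correct and follows the same two-step strategy as the paper: first collapse $A(x)M(\vartheta)$ into $\sfM(\rho(x)\calM(\vartheta))$ via Lemma~\ref{al:isomorphism}, then compare traces. The difference lies entirely in the second step, and it is to your credit. The paper simply cites Lemma~\ref{al:trace} to conclude $\Tr\{A(x)M(\vartheta)\}=2\Tr\{\rho(x)\calM(\vartheta)\}$, but that lemma is stated only for Hermitian arguments, and the product $\rho(x)\calM(\vartheta)$ of two Hermitian matrices is in general not Hermitian (only when the factors commute) --- so, read literally, the paper applies its own lemma outside the stated hypothesis. You identified exactly this mismatch and repaired it: the block structure of $\sfM$ yields the more general identity
\begin{align}
    \Tr\{\sfM(P)\} &= 2\Tr\{\Re(P)\} = 2\Re\left(\Tr\{P\}\right)
    \qquad \text{for arbitrary } P\in\bbC^{N\times N},
\end{align}
and the real-part operation is then removed by noting that $\Tr\{\rho\calM\}\conj=\Tr\{(\rho\calM)\dagg\}=\Tr\{\calM\rho\}=\Tr\{\rho\calM\}$, so the trace is real. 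Your version is therefore a strictly more careful rendering of the same argument: the paper's proof is shorter but silently relies on the reality of the expectation value (equivalently, on the fact that the diagonal of $\sfM(P)$ only sees $\Re(P)$), which is precisely the point you make explicit. Either proof establishes the lemma; yours would survive a referee's objection that the paper's, as written, would not.
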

    \begin{proof}
        We prove this statement directly.
        Since we have $A(x)=\sfM(\rho(x))$ and $M(\vartheta)=\sfM(\calM(\vartheta))$, it follows from Lemma~\ref{al:isomorphism} that $A(x)M(\vartheta)=\sfM(\rho(x)\calM(\vartheta))$.
        Subsequently, Lemma~\ref{al:trace} confirms that $\Tr\{A(x) M(\vartheta)\} =2\Tr\{\rho(x)\calM(\vartheta)\}$.
    \end{proof}

\section{Derivation of quantum layerwise relevance 
propagation for the twiNN, linear rule, and encoding rule}\label{a:QLRP}

    The twiNN produces an output as a 
    two-step process
    \begin{align}
        x \mapsto  A \mapsto f = \Tr\{AM\}.
    \end{align}
    We drop the explicit dependence on $x$ and $\vartheta$ for ease of notation.
    Correspondingly, we are after a relevance attribution method that works in two steps, in reversed order.
    For a given tensor (real number, vector, or matrix) $T$, we use the notation $R(T)$ for the relevance attribution of $T$.
    The relevance tensor $R(T)$ has the same shape and size as $T$, and we denote its components with the same set of indices.
    For example, for a matrix with two indices $T_{ab}$, the relevance of the $(a,b)$ entry of $T$ is denoted as $R_{ab}(T)$.

    With this, we propose a relevance propagation algorithm that, at every step, takes into account the relevance tensor of the previous step, and the tensors of that and the previous step.
    Diagrammatically, this approach looks like follows:
    \begin{center}
        \begin{tikzcd}[column sep=2cm, row sep = small]
            R(f) \arrow[r, mapsto, "\text{linear rule}"] & R(A) \arrow[r, mapsto, "\text{encoding rule}"] & R(x) \\
            f \arrow[u, mapsto]\arrow[r, mapsfrom, "\text{linear step}"'] & A \arrow[u, mapsto]\arrow[r, mapsfrom, "\text{encoding step}"']\arrow[ur,mapsto] & x\arrow[u, mapsto]
        \end{tikzcd}
    \end{center}

    In this section we derive the formula for the linear rule and for the encoding rule.
    As is usual, we start from $R(f) = f$.

\subsection{Linear rule}

    The linear rule is almost immediate.
    We just need to look at the formula for $f$ in terms of $A$, and then visual inspection gives us the rule
    \begin{align}
        f(A) &= \Tr\{AM\} = \sum_{i,j} A_{i,j}M_{i,j}.
    \end{align}
    Since $f$ is a linear function of the entries of $A$, the relevance associated to the entries of $A$ is immediately read out: $R_{i,j}(A) = A_{i,j}M_{i,j}$.
    The linear step ensures exact conservation by construction $R(f) = \sum_{i,j}R_{i,j}(A)$.

\subsection{Encoding rule}

    The encoding step is non-linear, so we have a bit of work to do.
    From the formula for the entries of $A$ in terms of $x$, we realize that each entry of $A$ is almost a trigonometric monomial of degree $1$ on each of the components of $x$.
    Only the $1$ summands contained in the $c^\pm$ terms cause the product to turn into a sum of several monomial terms.
    Yet, for the function basis $t_k\in\{c_j^\pm,s_j\}$, we have that each entry $A_{i,j}$ is either $0$ or of the form
    \begin{align}
        A_{i,j} &= \frac{1}{2^d}\prod_{k=1}^d t_k^{(i,j)}.
    \end{align}
    The crucial fact here is that each $A_{i,j}$ entry as a function is a trigonometric polynomial of degree at most $1$ on each of the $d$ components of $x$.
    Then, following Proposition~\ref{prop:taylorrel}, we have seen they admit the following approximation, for any \emph{root point} $\tx^{(i,j)}$,
    \begin{align}
        A_{i,j}(x) &\approx A_{i,j}(\tx^{(i,j)}) + \sum_{k=1}^d T_k^{(i,j)}(x,\tx^{(i,j)}) + \varepsilon,\\
        &\text{with}\nonumber \\
        T_k^{(i,j)}(x,\tx^{(i,j)}) &= \sin(x_k-\tx_k^{(i,j)})\left.\pder[A_{i,j}(x)]{x_k}\right|_{x=\tx^{(i,j)}} + (1-\cos(x_k-\tx_k^{(i,j)}))\left.\pder[^2A_{i,j}(x)]{x_k^2}\right|_{x=\tx^{(i,j)}},
    \end{align}
    and where $\varepsilon$ contains all cross-order derivative terms.
    With this approximation, and assuming we have suitable root points for each $(i,j)$, for which $A_{i,j}(\tx^{(i,j)})\approx0$, then the encoding rule reads
    \begin{align}
        R_k(x) &= \sum_{i,j}T_k^{(i,j)}(x,\tx^{(i,j)})\frac{R_{i,j}(A)}{A_{i,j}}.
    \end{align}
    This rule is approximately conservative by construction, with the amount of conservation being based on the quality of approximation 
    of the Fourier decomposition
    \begin{align}
        \sum_kR_k(x) &= \sum_k\sum_{i,j}T_k^{(i,j)}(x,\tx^{(i,j)})\frac{R_{i,j}(A)}{A_{i,j}} \\
        \nonumber &= \sum_{i,j}\left(\sum_kT_k^{(i,j)}(x,\tx^{(i,j)})\right)\frac{R_{i,j}(A)}{A_{i,j}} \\
        \nonumber
        &\approx \sum_{i,j}A_{i,j}\frac{R_{i,j}(A)}{A_{i,j}} \\
        \nonumber&= \sum_{i,j}R_{i,j}(A).
        \nonumber
    \end{align}
    In order to implement the encoding rule, one needs to evaluate not only each entry of $A(x)$, but also the first and second partial derivatives of $A(x)$ with respect to each component of $x$.
    Fortunately, these are really straightforward to evaluate, again due to the simple trigonometric structure of $A(x)$ we have been exploiting.
    In particular, we need only use the parameter-shift rule.
    We give it first for an individual qubit, and then see how it generalizes right away to several qubits, provided each qubit uploads a different component
    \begin{align}
        \rho^1(x_1) &= \frac{1}{2}\begin{pmatrix} 1+\cos(x_1) & i\sin(x_1) \\ -i\sin(x_1) & 1-\cos(x_1) \end{pmatrix} ,\\
        \der[\rho^1(x_1)]{x_1} &= \frac{1}{2}\begin{pmatrix} -\sin(x_1) & i\cos(x_1) \\ -i\cos(x_1) & +\sin(x_1) \end{pmatrix} \\
        &= \frac{\rho^1\left(x_1+\frac{\pi}{2}\right)-\rho^1\left(x_1-\frac{\pi}{2}\right)}{2}.
        \nonumber
    \end{align}
    Here $\rho^1(x_1)$ represented a single-qubit encoded state.
    The case is not much harder 
    for the whole $d$-qubit state
    \begin{align}
        \rho(x) &= \rho^1(x_1) \otimes\cdots\otimes\rho^d(x_d) ,\\
        \partial_k\left[\rho(x)\right] &= \partial_k\left[\rho^1(x_1) \otimes\cdots\otimes \rho^k(x_k) \otimes\cdots\otimes\rho^d(x_d)\right] \\
        \nonumber
        &= \rho^1(x_1) \otimes\cdots\otimes \partial_k\rho^k(x_k) \otimes\cdots\otimes\rho^d(x_d)\\
        \nonumber
        &= \rho^1(x_1) \otimes\cdots\otimes \frac{\rho^k\left(x_k+\frac{\pi}{2}\right)-\rho^k\left(x_k-\frac{\pi}{2}\right)}{2} \otimes\cdots\otimes\rho^d(x_d) \\
        \nonumber
        &= \frac{\rho\left(x+\frac{\pi}{2}\he_k\right)-\rho\left(x-\frac{\pi}{2}\he_k\right)}{2},
        \nonumber
    \end{align}
    where $\he_k$ denotes the $k^\text{th}$ basis vector.
    The formula can be used recursively for higher-order derivatives
    \begin{align}
        \partial^2_k\rho(x) &= \partial_k\left[\partial_k\rho(x)\right] \\
        \nonumber
        &= \partial_k\left[\frac{\rho\left(x+\frac{\pi}{2}\he_k\right)-\rho\left(x-\frac{\pi}{2}\he_k\right)}{2}\right] \\
        \nonumber
        &= \partial_k\left[\frac{\rho\left(x+\pi\he_k\right)-\rho(x)-\rho(x)+\rho\left(x-\pi\he_k\right)}{4}\right] \\
        \nonumber
        &= -\frac{\rho(x)-\rho(x+\pi\he_k)}{2},
        \nonumber
    \end{align}
    in the last line we used the $2\pi$-periodicity of $\rho(x)$ to say $\rho(x+\pi\he_k)=\rho(x-\pi\he_k)$.
    The identities we found for $\rho(x)$ apply directly to $A(x)$, since the map from $\rho(x)$ to $A(x)$ is linear.
    That means also we do not need to give explicit closed forms for the derivatives of the $g(k,l)$ functions introduced above, since we have shown in order to compute those we need only evaluate the functions themselves on shifted locations.

    So, there is only one piece missing in this puzzle, and that is how we find the root points $\tx^{(i,j)}$.
    Note that the root points in this case are slightly different in nature as the root points we have talked about in other approaches.
    Here we are not looking for a single point where the overall function $f$ nullifies.
    Rather, we are looking for as many points as components $A$ has (exponential in the input dimension).
    Luckily, since we have closed formulas for the entries of $A$, we can take advantage of the shared structure among all of them.
    
    Each of the entries is a product of trigonometric functions of a single variable 
    \begin{align}
    A_{i,j}(x) &= \prod_k t_k^{(i,j)}(x_k).
    \end{align}
    So, in order to achieve $A_{i,j}(\tx^{(i,j)})=0$, it suffices to find a zero of the trigonometric function $t_k^{(i,j)}$ corresponding to a single component $t_k^{(i,j)}(\tx_k^{(i,j)})=0$.
    
    With this knowledge, we here propose an algorithm to find root points for each component.
    Because the property that must be fulfilled by the root point only affects a single component, it could be that a single point is a good root point for several entries of $A$ simultaneously.
    In the algorithm we propose we concentrate on finding the \emph{closest} root point to a given $x$ for each $A_{i,j}$ in the Euclidean sense.
    We first give the procedure in plain language, and at the end we attach the pseudo-code.
    As we did in the proof of Lemma~\ref{al:A_entries}, we consider again the map
    \begin{align}
        \lvert0_k\rangle\!\langle0_k\vert &\mapsto c^+_k ,\\
        \lvert0_k\rangle\!\langle1_k\vert &\mapsto is_k , \\
        \lvert1_k\rangle\!\langle0_k\vert &\mapsto -is_k ,\\
        \lvert1_k\rangle\!\langle1_k\vert &\mapsto c^-_k.
    \end{align}
    From here, we read the corresponding root points for each trigonometric function
    \begin{align}
        \lvert0_k\rangle\!\langle0_k\vert &\mapsto 1+\cos(\tx_k) = 0 \quad \iff \tx_k = \pm\pi ,\\
        \lvert0_k\rangle\!\langle1_k\vert &\mapsto i\sin(\tx_k) =0 \quad \iff \tx_k \in \{0,\pm\pi\} ,\\
        \lvert1_k\rangle\!\langle0_k\vert &\mapsto -i\sin(\tx_k) =0 \quad \iff \tx_k \in \{0,\pm\pi\}, \\
        \lvert1_k\rangle\!\langle1_k\vert &\mapsto 1-\cos(\tx_k) = 0 \quad \iff \tx_k = 0.
    \end{align}
    Each of the conditions defines a high-dimensional grid (a partiton via periodical hyperplanes in every direction) on input space.
    The question becomes, given $x$ and $(i,j)$, what is the closest root point $\tx^{(i,j)}$?

    From $x$, we start by building three related vectors $x^{(1)}=x$, $x^{(2)}=x+\pi \Vec{1}$, and $x^{(3)}=x-\pi\Vec{1}$.
    Here $\Vec{1}$ represents 
    a 
    vector of ones.
    Now, we consider the matrix $(x^{(n)}_m)_{\substack{m\in\{1,\ldots,d\},\\n=1,2,3}}$ and consider the sequence $(m_l,n_l)_{l\in\{1,\ldots, 3d\}}$, which corresponds to the sorting of the elements of $(x^{(n)}_m)_{\substack{m\in\{1,\ldots,d\},\\n=1,2,3}}$ according to their magnitude.
    For instance, the first few elements are
    \begin{align}
        (m_1,n_1) &= \argmin_{(m,n)\in\{1,\ldots,d\}\times\{1,2,3\}} \{\lvert x^{(n)}_m\rvert\}, \\
        (m_2,n_2) &= \argmin_{(m,n)\in\{1,\ldots,d\}\times\{1,2,3\}\setminus\{(m_1,n_1)\}} \{\lvert x^{(n)}_m\rvert\} ,\\
        &\vdots\nonumber \\
        (m_l,n_l) &= \argmin_{(m,n)\in\{1,\ldots,d\}\times\{1,2,3\}\setminus \cup_{s=1}^{l-1}\{(m_s,n_s)\}} \{\lvert x^{(n)}_m\rvert\} ,\\
        &\vdots\nonumber .
    \end{align}

    Now, take $(m_1,n_1)$, which identifies the hyperplane to which $x$ is closest.
    The first index $m_1$ refers to the component $x_{m_1}$, and the second index $n_1$ identifies whether $x_{m_1}$ is close to $0$, $\pi$, or $-\pi$.
    We identify the specific root point $\tx^{(i,j)}=(\tx_k^{(i,j)})_k$ from the value of $n_1$:
    \begin{align}
        \tx_k &= \begin{cases}
            x_k &\text{for } k\neq m_1 ,\\
            \begin{cases}
                0 &\text{if }n_1=1,\\
                -\pi&\text{if }n_1=2,\\
                \pi&\text{if }n_1=3,
            \end{cases}&\text{for } k=m_1.
        \end{cases}
    \end{align}
    Indeed, $x$ differs from $\tx$ only on a single component, $k=m_1$.
    We have identified the root point, but we have not said which component this is a root point of.
    For that, we go back to the identification above, and again reach an answer based on the value of $n_1$.
    Instead of starting from an $(i,j)$ and finding the matching root point, we go the other way: we start from the root point we just defined, and check for which choices of $(i,j)$ it is a valid root point.

    We consider the bit-string representation of $i$ and $j$, and then focus on the $m_1^\text{th}$ bits $i_{m_1},j_{m_1}$.
    We reverse the correspondence from above, now taking the indexing of $n_1$, for which we have the look-up table
    \begin{align}
        n_1 &\mapsto \lvert i_{m_1}\rangle\!\langle j_{m_1}\rvert ,\\
        1 &\mapsto \{\lvert0\rangle\!\langle1\rvert,\lvert1\rangle\!\langle0\vert,\lvert1\rangle\!\langle1\rvert\} ,\\
        \{2,3\} & \mapsto \lvert0\rangle\!\langle0\vert.
    \end{align}
    The way to read this map is: if $n_1=1$, then $\tx$ as defined above is the root point for all choices of $(i,j)$ where the $m_1^\text{th}$ bit of either $i$ or $j$ is $1$.
    Conversely, if $n_1$ is either $2$ or $3$, then $\tx$ as defined above is the root point for all choices of $(i,j)$ where the $m_1^\text{th}$ bit of both $i$ and $j$ is $0$.

    After this step, we have found a root point that works for either $1/4$ or $3/4$ of all choices of $(i,j)$.
    Notice in the earlier paragraph we talk about \emph{all} choices of $(i,j)$ for which a particular bit is a particular value.

    After this step, we proceed to use the next element in the sequence, $(m_2,n_2)$.
    If it happens that $m_2=m_1$, then there is a chance this is the last step.
    If either $n_1$ or $n_2$ are $1$, then it follows that $\tx$ as defined above is the root point for all choices of $(i,j)$.
    If neither of them is $1$, then the point corresponding to $(m_2,n_2)$ does not provide new root points, and we move further up the sequence.

    If, on the contrary, we have that $m_2\neq m_1$, then we find a new root point.
    In this case, we repeat all the steps replacing $(m_1,n_1)$ by $(m_2,n_2)$ up until the point where we identify for which choices of $(i,j)$ the newly defined $\tx$ is a valid root point.
    As by now we already had a fraction of all possible choices associated with the first root point, now we only look at the remaining ones, those for which we do not have a root point yet.
    At this point, in order to see which choices of $(i,j)$ have the new $\tx$ as a root point, we will check the $m_2^\text{th}$ bit of $i$ and $j$.
    Again, this results on finding a root point for either one or three quarters of the remaining choices of $(i,j)$.
    
    As one can expect, we need to repeat this process at most $\calO(d)$ times before we are guaranteed to have found a root point for each possible choice of $(i,j)$.
    The search finishes only when a new $m_l$ is equal to one of the previous ones, with either $n_l=1$ or the previous one being equal to $1$.

    \begin{algorithm}[H]
        \caption{Root points for the encoding rule}
        \label{alg:rootpoints}
        \begin{algorithmic}[1]
            \Require
            \State $A\in\bbR^{2^d\times 2^d}$ \Comment Real-valued matrix for data-dependent state.
            \State $x\in\bbR^d$ \Comment Input vector.
            \Ensure $\tx\in\bbR^d\times\bbR^{2^d\times2^d}$ vector of root points for each entry of $A$.
            \State
            \For {$i
            \in\{0,1\}^d, j\in\{0,1\}^d$}
                \State $\tx^{(i,j)}_k \gets \texttt{void}$ \Comment Initialize empty root point vector.
            \EndFor
            \State
            \State $x^{(1)}\gets x$
            \State $x^{(2)}\gets x+\pi \vec{1}$
            \State $x^{(3)}\gets x-\pi\vec{1}$
            \State $(M,N)\gets$ sort $(m,n)$ according to ascending $\lvert x^{(n)}_m\rvert$ \Comment Sort indices according to distance to grid.
            \For {$(m_l,n_l)\in(M,N)$}
                \If {$n_l=1$}
                    \For {$(i,j) \in \{(i_{m_l},j_{m_l})\in\{(0,1),(1,0),(1,1)\}\}$} \Comment Check $3/4$ of all remaining bitstrings.
                        \If {$\tx^{(i,j)} = \texttt{void}$} \Comment Assign root point only if it had not already been assigned.
                            \State $\tx^{(i,j)}_k \gets \begin{cases}
                                0 & \text{if } k = m_l \\
                                x_k & \text{else}
                            \end{cases}$ \Comment Assign root point differing from $x$ only on one component.
                        \EndIf
                    \EndFor
                \Else { $n_l\in\{2,3\}$}
                    \For {$(i,j) \in \{(i_{m_l},j_{m_l})=(0,0)\}$} \Comment Check $1/4$ of all remaining bitstrings.
                        \If {$\tx^{(i,j)} = \texttt{void}$} \Comment Assign root point only if it had 
                        not already been assigned.
                            \State $\tx^{(i,j)}_k \gets \begin{cases}
                                \begin{cases}
                                    -\pi & \text{if } n_l=2 \\
                                    \pi & \text{if } n_l=3
                                \end{cases} & \text{if } k = m_l \\
                                x_k & \text{else}
                            \end{cases}$ \Comment Assign root point differing from $x$ only on one component, depending on $n_l$.
                        \EndIf
                    \EndFor
                \EndIf
                \For {$i\in\{0,1\}^d, j\in\{0,1\}^d$} \Comment{Early-stopping criterion.}
                    \If {$\tx^{(i,j)} \neq\texttt{void}$} \Comment Check if all entries of $A$ have a root point assigned to them at the end of each iteration.
                        \State \Return $\tx$ \Comment If all entries have a root point, the algorithm is finished.
                    \EndIf
                \EndFor
            \EndFor
            \State \Return $\tx$
        \end{algorithmic}
    \end{algorithm}
    
    As a potential limitation, this algorithm finds root points which differ from the original point in only one direction.
    That means we are restricting the relevance propagation rule to distribute relevance only onto one component.
    This is a negative property because the component onto which relevance gets distributed is not decided based on task-dependent information, but rather only on the geometry of the encoding functions that arise from the encoding step.

\section{Evaluation metrics} \label{a:evaluation}

    The first evaluation metric measures the alignment between the proposed explanation and the mask, and we call it \emph{explanation alignment} $Q_A$.
    To have a well-behaved score, we compute the fraction of the absolute relevance that is assigned to the correct components:

    \begin{align}
        Q_A(x) &= \frac{\sum_{i=1}^d\left| E_i(x) \right|M_i(x)}{\sum_{i=1}^{d}\left| E_i(x) \right|}.
    \end{align}
    We note that whether $M_i(x)$ is $1$ or $0$ depends on the correct label for $x$.
    Also, the explanation alignment is indirectly related to the conservativity of the explanation $E(x)$, as even if $\sum_i E_i(x)\approx f(x)$, here we sum over the absolute values of the explanation.

    The second quality-of-explanation metric considers a different kind of alignment between the explanation $E$ and the mask $M$, namely their correlation coefficient, to reach the \emph{Pearson correlation} metric $Q_P$.
    This metric deals not in absolute values, but in deviations from the mean, also normalized to produce a well-behaved score:
    \begin{align}
        Q_P(x) &= \operatorname{Corr}_{i\in[d]}(E_i(x),M_i(x)) = \frac{\bbE_i\left[(E_i(x) - \bbE_j\left[E_j(x)\right])(M_i(x) - \bbE_j\left[M_j(x)\right])\right]}{\sqrt{\bbE_i\left[E_i(x) - \bbE_j\left[E_j(x)\right]\right]^2\bbE_i\left[M_i(x) - \bbE_j\left[M_j(x)\right]\right]^2}},
    \end{align}
    here we took $[d]=\{1,\ldots,d\}$, and $\bbE_i$ means the expectation value over the indices $i\in[d]$.

    Finally, we introduce a quality-of-explanation metric using the framework of \emph{receiver operator characteristics} (ROC).
    This metric produces a score given a set of explanations, and not for a single input.
    Introducing a real-valued threshold $\alpha\in[0,1]$, we momentarily define as \enquote{well-explained} all inputs whose explanation alignment is higher than $\alpha$, and as \enquote{wrong-explained} all inputs whose alignment to the unimportant components (for example the opposite of the mas $1-M(x)$) is higher than $\alpha$.
    This way, an input could be well-explained and wrong-explained at the same time, for the purposes of this metric.

    Computing this third-and-last evaluation metric follows a two-step process.
    First, given a set of inputs $S$ we compute the fraction of well-explained $r_+(\alpha)$ and wrong-explained $r_-(\alpha)$ inputs, for different values $\alpha\in[0,1]$:
    \begin{align}
        r_+(\alpha) &= \frac{\lvert\{x\in S\,|\, x\text{ well-explained}\}\rvert}{\lvert S\rvert} \\
        r_-(\alpha) &= \frac{\lvert\{x\in S\,|\, x\text{ wrong-explained}\}\rvert}{\lvert S\rvert},
    \end{align}
    and from here, we consider the parametrized curve $\gamma(\alpha) = (r_-(\alpha),r_+(\alpha))$ for $\alpha\in[0,1]$.
    From this parametrized curve, we numerically extract the \emph{receiver operator characteristic} curve (ROC), where the dependence between $r_+$ and $r_-$ is made explicit, removing the parameter $\alpha$.
    That is, we take the fraction of well-explained inputs as a function of the fraction of wrong-explained inputs $r_+(r_-)\in[0,1]$, for $r_-\in[0,1]$.
    Next, we compute the \emph{area under the curve} (AUC) for the ROC curve, which gives us the metric:
    \begin{align}
        Q_{\text{ROC}}(S) &= \int_0^1 r_+(r_-) \,\mathrm d r_-.
    \end{align}

    Given that both $r_+,r_-\in[0,1]$, it follows that $Q_\text{ROC}\in[0,1]$.
    The ROC curve has an interpretation coming from binary classification tasks, below we show an example of how such curves look like in practice.
    In the main text we keep only the area under the curve, and not the whole curve.
    
    Comparing the ROC curves for different attribution methods thus allows for a qualitative assessment of which methods perform better at accurately attributing relevance to the main dimensions while minimizing irrelevant attributions. This behavior is then succinctly summarized in the  area under the ROC curve (AUC) as a single number.

\section{Experimental setup}\label{a:experiments}
    
    In this part of the appendix, we give an overview of the employed data-set and task, the different XAI and XQML methods we compare, and the evaluation metrics we use.

\subsection{Dataset and learning task} \label{app:data}

    \begin{figure*}[b]
    \centering
    \includegraphics[width = .9\linewidth]{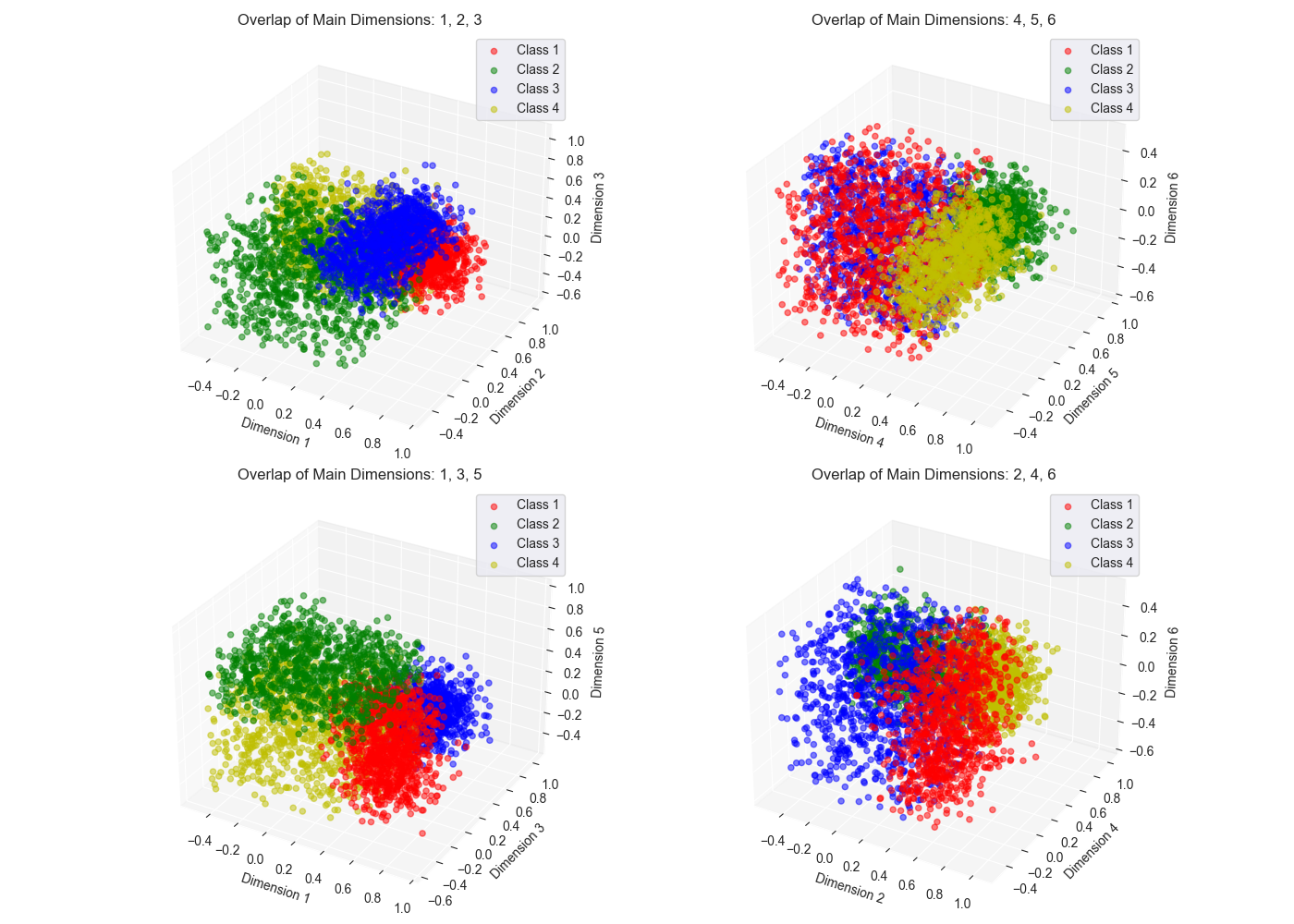}
    \caption{
        Visualization of the data-set across different subsets of $\mathbb{R}^6$.
    }
    \label{fig:data_overlap}
    \end{figure*}

    To compare the individual methods, we train a Parametrized Quantum Circuit (PQC) on the following setup.
    We generate a synthetic data-set for a $6$-dimensional classification task with four classes.
    For each class, we select three main dimensions as follows:
    \begin{table}[h!]
    \centering
    \begin{tabular}{lcc} \hline\hline
        \textbf{Class} & \textbf{Main Dimensions} & \textbf{Remaining Dimensions} \\\hline\hline
        0 & \{0, 1, 2\} & \{3, 4, 5\} \\\hline
        1 & \{3, 4, 5\} & \{0, 1, 2\} \\\hline
        2 & \{0, 2, 4\} & \{1, 3, 5\} \\\hline
        3 & \{1, 3, 5\} & \{0, 2, 4\} \\\hline\hline
    \end{tabular}
    \caption{Class specifications for main and remaining dimensions.}
    \label{tab:class-specs}
    \end{table}

    Next, we generate data for each class by sampling entries from one of two distributions, depending on the main dimensions.
    For each class, the main dimensions follow a Gaussian distribution, with different means for different classes.
    The remaining dimensions are sampled uniformly at random in an interval centered around the origin.
    Specifically, the data-generating distribution looks as follows:
    \begin{align}
    (\mathbf{x}^{(c)}, y^{(c)}) &= \left(\left\{(x_{1}^{(c)}, x_{2}^{(c)}, x_{3}^{(c)}) \sim \mathcal{N}(\mu, R^2)\right\} \cup \left\{(x_{4}^{(c)}, x_{5}^{(c)}, x_{6}^{(c)}) \sim \text{Unif}([-m, m])\right\}, \quad y^{(c)} = i\right), \\
    &\text{where} \quad c \in \{0, 1, 2, 3\}.
    \end{align}
    We use superscripts to refer to inputs in the $c^\text{th}$ class.
    To tackle this problem in a supervised learning fashion, we are given a training set of labeled data: $ S = \{(\mathbf{x}_i, \mathbf{y}_i)\}_{i=1}^{N}$ sampled according to the distribution above.
    
    Thus, each class' data forms a cluster characterized by a specific subset of three dimensions being drawn from a normal distribution, while the remaining dimensions are sampled uniformly at random.
    The normal distribution is characterized by its mean $\mu \in \mathbb{R}^3$ and the standard deviation $\sigma \in \mathbb{R}$.
    Table~\ref{tab:class-specs} below lists the main dimensions and remaining dimensions used for each class:
    \begin{table}[h!]
    \centering
    \begin{tabular}{lcc} \hline\hline
        \textbf{Distribution} & \textbf{Parameter} & \textbf{Value} \\\hline\hline
        Normal & Mean ($ \boldsymbol{\mu} $) & $ \left(\frac{1}{2}, \frac{1}{2}, 0\right) $ \\\hline
        Normal & Standard deviation ($ \sigma $) & $ \frac{1}{\sqrt{2}} \times 0.2 $ \\\hline
        Uniform & Interval endpoint (m) & $\{0.1, 0.5, \pi\}$ \\\hline\hline
    \end{tabular}
    \caption{Parameters for the normal and uniform distributions.}
    \label{tab:dist-params}
\end{table}

    Table~\ref{tab:dist-params} lists the parameters for the normal and uniform distributions used in the data generation process.
    In total, we draw $n=1000$ samples for each respective class.

\subsection{Parametrized quantum circuits}

    \begin{figure*}[b]
    \centering
    \includegraphics[width = .9\linewidth]{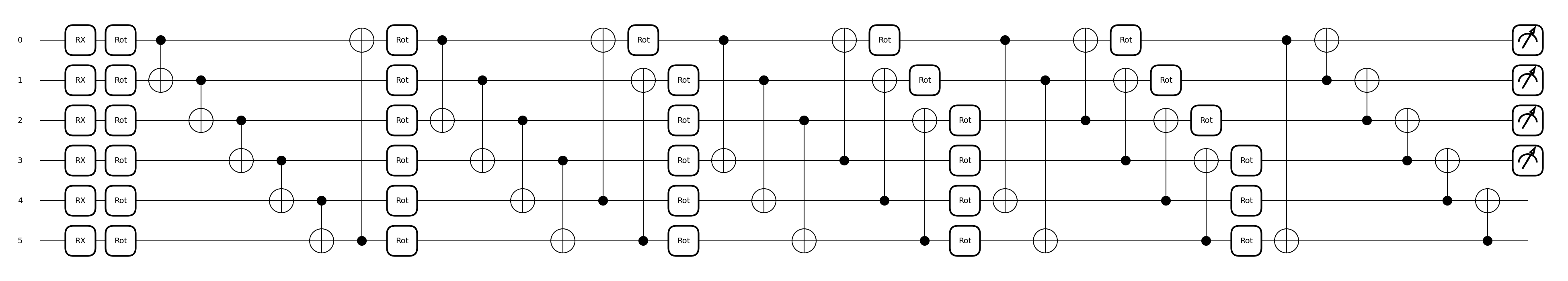}
    \caption{
        Schematic of the employed parametrized quantum circuit.
    }
    \label{fig:pqc_layout}
    \end{figure*}

    As an exemplary learning model, we employ a simple, rotational encoding based parameterized quantum circuit together with a trainable entangling gate set.
    The embedding of the numerical data is achieved by encoding the data points' individual values on differing qubits using Pauli-$X$ rotations.
    The trainable part of the PQC is given by $n_\text{layer}$ repetitions of blocks of strongly entangling layers.
    The latter consist of single qubit rotational gates followed by entanglers \cite{Schuld2020circuit}.
    We compute the prediction of a label by measuring the 
    expectation values of the single-qubit Pauli-$Z$ observable on the first 4 qubits, respectively.
    An exemplary circuit for $n_\text{layer} = 5$ is shown in Fig.~\ref{fig:pqc_layout}.

\subsubsection{Training}

    We use a categorical cross-entropy loss function to train the model.
    For a given batch of input data and associated labels $ \mathcal{B} = \{(\mathbf{x}_i, \mathbf{y}_i)\}_{i=1}^{N} $ and resulting model output $\mathbf{z}_i = f(\mathbf{x}_i)$, it is given by
    \begin{align}
        \mathcal{L}_{\text{CE}} = -\frac{1}{N} \sum_{i=1}^{N} \sum_{c=1}^{C} y_{ic} \log(\hat{y}_{ic} + \epsilon)
    \end{align}  
    where $ N $ is the number of samples in the batch, $ C $ is the number of classes, $ \epsilon = 10^{-10} $ is a small constant added for numerical stability, and 
    \begin{align}
    \hat{\mathbf{y}}_i &= \text{softmax}(\mathbf{z}_i), \\
    y_{ic} &= \begin{cases} 
    1 & \text{if class } c \text{ is the true class for sample } i, \\ 
    0 & \text{otherwise.}
    \end{cases}
    \end{align}
    We have implemented the quantum machine learning model using the PennyLane \cite{bergholm2022pennylane} library and use the JAX \cite{frostig2018Jax} backend to simulate it efficiently and train it.
    As an optimization paradigm, Adam is used in conjunction with a cosine decay schedule.
    Table~\ref{tab:hyperparameters} lists the empirically chosen hyperparameters.

    \begin{table}[ht]
    \centering
    \begin{tabular}{lc}
        \hline\hline
        Hyperparameter & Value \\\hline\hline
        learning rate $\alpha$ & $1$ \\\hline
        $n_\text{epoch}$ & 200 \\\hline
        $n_\text{layer}$ & 5 \\\hline
        batch size & 1000\\\hline\hline
    \end{tabular}
    \caption{Hyperparameters for the QML experiment.}
    \label{tab:hyperparameters}
\end{table}

    \begin{figure*}[t]
        \centering
        \includegraphics{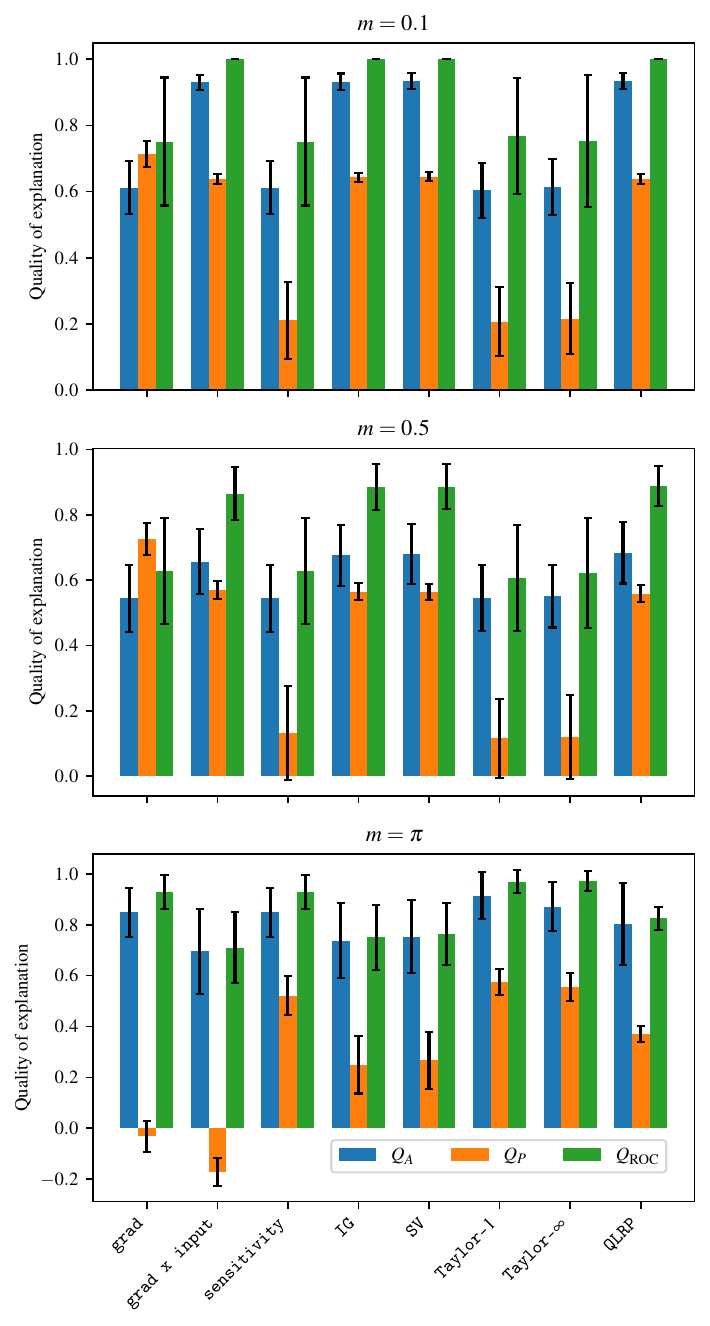}
        \caption{Full evaluation metrics}
        \label{fig:full-eval}
    \end{figure*}

\subsubsection{Evaluation metrics}

    We included more local attribution methods than introduced in the main text, we briefly present them here.
    As a starting point, one could simply take the local information present in the gradients. This leads to the explanation we refer to as \grad{}: $E_i(x) = \partial_i f(x)$, and its averaged version with respect to local perturbations $E_i(x) = \bbE_{\xi\sim\calN(0, \sigma^2 \bbI)}[\partial_i f(x+\xi)]$, which we call \smoothgrad{}~\cite{smilkov2017smoothgrad}.
    \smoothgrad{} was initially introduced to provide robustness of explanations against noise, as for instance when the gradients of the model are discontinuous.
    In general neither \grad{} nor \smoothgrad{} give rise to positive explanations\footnote{
        Requiring \emph{positive} heat-maps $E_i(x)\geq0$ might seem meaningful, as \enquote{positive causation} can be easier to interpret than \enquote{negative causation}.
        Yet, in practice that is too stringent an axiom to fulfill, so it is commonly not taken into account.
    }, which is addressed straightforwardly by the next explanation method, called \sensitivity{}: $E_i(x) = \lvert\partial_i f(x)\rvert$\footnote{
        Also standard is to call sensitivity the square of the gradient, instead of its magnitude.
    }.
    Since the positivity of \sensitivity{} comes from measuring the magnitude of the gradient, it may be the case that \sensitivity{} does not relate to positive causation.

    With the aim of comparing and evaluating the performance of the different attribution algorithms, we compute different metrics based on a desired ground-truth input mask and the local feature attributions across the data samples.
    To this end, we exploit the structure of the synthetic data-set.
    As discussed above in Appendix~\ref{app:data}, each of the four classes' input data exhibits three important and three unimportant directions in $\mathbb{R}^6$, by design.
    Thus, one can generate the ground truth masks for each class accordingly.
    The elements of the ground truth mask $\mathbf{M}^c(x) \in \mathbb{R}^6$ for a sample $(\mathbf{x}_i, \mathbf{y}_i)$ belonging to class $c$ is consequently given by 
    \begin{align}
    M_j^c(x_i) = \begin{cases} 
    1 & \text{if } j \in \{\text{main dimensions of class } c\},\\ 
    0 & \text{otherwise.}
    \end{cases}
    \end{align}
    We compute the three evaluation metrics $Q_{\{A,P,\text{ROC}\}}$ for each input $x_i$, and then average over all inputs.
    The quality scores we report are thus $\bbE_x\left[Q_A(x)\right], \bbE_x\left[Q_P(x)\right],$ and $Q_\text{ROC}(T)$, where $T$ is a set of $200$ inputs for each class.

    Fig.~\ref{fig:full-eval} summarized our numerical results across all explanations and evaluation metrics.
    
\end{document}